\DeclareMathOperator{\Z}{\mathbb{Z}}
\DeclareMathOperator{\C}{\mathcal{C}}
\DeclareMathOperator{\GCD}{GCD}
\DeclareMathOperator{\XCRT}{XCRT}
\DeclareMathOperator{\LCM}{LCM}
\DeclareMathOperator{\Frac}{Frac}
\DeclareMathOperator{\red}{red}
\DeclareMathOperator{\Resultant}{Resultant}
\DeclareMathOperator{\FirstSubRes}{FirstSubRes}
\DeclareMathOperator{\Mcomp}{M}
\DeclareMathOperator{\Gal}{Gal}
\DeclareMathOperator{\val}{val}
\renewcommand{\P}{\mathbb P}
\newcounter{countproblem}
\date{}
\newtheorem{theorem}{Theorem}[section]
\newtheorem{lemma}[theorem]{Lemma}
\newtheorem{proposition}[theorem]{Proposition}
\begin{document}
\pagestyle{empty}
\begin{center}
  \Large\textbf{Erratum}\\[0.2cm]
  \large\today
\end{center}

\bigskip\bigskip\bigskip

The formula on page 4, line 14, should read
$$d = \begin{cases}
  \lfloor (\deg(D_+)+r)/\deg(\C)+(\deg(\C)-1)/2\rfloor\text{ if
  }\binom{\deg(\C)+1}2\leq \deg(D_+) + r\\
  \lfloor (\sqrt{1+8(\deg(D_+) + r)}-1)/2\rfloor\text{\quad\quad\quad\quad~ otherwise.}
\end{cases}$$

\pagebreak
\setcounter{page}{1}
\pagestyle{plain}

\title{A Fast Randomized Geometric Algorithm for Computing Riemann-Roch Spaces}
\author{Aude Le Gluher and Pierre-Jean Spaenlehauer\\[0.2cm]
{\normalsize Université de Lorraine, CNRS, Inria}}

\maketitle

\begin{abstract}
  We propose a probabilistic variant of Brill-Noether's algorithm for computing a basis of
  the
  Riemann-Roch space $L(D)$ associated to a divisor $D$ on a projective nodal plane curve
  $\C$ over a sufficiently large perfect field $k$. Our main result shows that this
  algorithm requires at most $O(\max(\deg(\C)^{2\omega}, \deg(D_+)^\omega))$
  arithmetic operations in $k$,
  where $\omega$ is a feasible exponent for matrix multiplication and $D_+$ is the
  smallest effective divisor such that $D_+\geq D$. This improves the
  best known upper bounds on the complexity of computing Riemann-Roch spaces.
  Our algorithm may fail, but we show that provided that a few mild assumptions
  are satisfied, the failure probability is bounded by $O(\max(\deg(\C)^4,
  \deg(D_+)^2)/\lvert \mathcal E\rvert)$, where $\mathcal E$
  is a finite subset of $k$ in which we pick elements uniformly at random.
  We provide a freely available C++/NTL implementation of the proposed algorithm
  and we present experimental data. In particular, our
  implementation enjoys a speedup larger than 6 on many examples (and larger than 200 on some
  instances over large finite
  fields) 
  compared to the reference implementation in the Magma computer algebra system.
  As a by-product, our algorithm also yields a method for computing the group
  law on the Jacobian of a smooth plane curve of genus $g$ within $O(g^\omega)$
  operations in $k$, which equals the best known complexity for this problem.
\end{abstract}

\section{Introduction}

The Riemann-Roch theorem is a fundamental result in algebraic geometry. In its
classical version for smooth projective curves, it provides information on the
dimension of the linear space of functions with some prescribed zeros and
poles. The computation of such Riemann-Roch spaces is a subroutine used in
several areas of computer science and computational mathematics. One of its
most proeminent applications is the construction of algebraico-geometric
error-correcting codes~\cite{goppa1983algebraico}: Such codes are precisely
(subspaces of) Riemann-Roch spaces. Another direct application is the
computation of the group law on the Jacobian of a smooth curve: representing a
point in the Jacobian of a genus-$g$ curve $\C$ as $D - g O$, where $D$ is an
effective divisor of degree $g$ and $O$ is a fixed rational point (or more
generally, a fixed divisor of degree~$1$), the sum of the classes of $D_1 - g
O$ and $D_2-g O$ can be computed by finding a function $f$ in the Riemann-Roch
space $L(D_1+D_2-gO)$. Indeed, by setting $D_3=D_1+D_2-gO+(f)$, the divisor
$D_3 -gO$ is linearly equivalent to $(D_1-gO)+(D_2-gO)$.

\smallskip

{\bf State of the art and related works.} In this paper, we focus on the
classical geometric approach attributed to Brill and Noether for computing
Riemann-Roch spaces. The general algorithmic setting for this approach is
described by Goppa in his landmark paper~\cite[\S 4]{goppa1983algebraico}.
Given a divisor $D$ on a (not necessarily plane) smooth projective curve $\C$, this
method proceeds by finding first a common denominator to all the functions in the
Riemann-Roch space $L(D)$. This is done by computing a form $h$ on the curve
such that the associated principal effective divisor $(h)$ satisfies $(h)\geq
D$. Then the residual divisor $(h)-D$ is computed.  From this, a basis of the
Riemann-Roch space is found by computing the kernel of a linear map. The
correctness of this method is ensured by the residue theorem of Brill and
Noether, which works even in the presence of ordinary singularities by using the
technique of adjoint curves,
see~\cite[\S 42]{severi1921vorlesungen}\cite[Sec.~8.1]{fulton2008algebraic}. In
its original version~\cite[\S 4]{goppa1983algebraico}, Goppa's algorithm works
only for finite fields, and some parts of the algorithm use exhaustive search.
During the 90s, several versions of Goppa's algorithm have been proposed,
incorporating tools of modern computer algebra. In particular, Huang and
Ierardi provide in~\cite{huang1994efficient} a deterministic algorithm for
computing Riemann-Roch spaces of plane curves $\C$ all singularities of which
are ordinary within $O(\deg(\C)^6\deg(D_+)^6)$ arithmetic operations in the
base field, where $D_+$ is the smallest effective divisor such that $D_+\geq
D$. In fact, writing $D_- = D_+-D$, we can assume without loss of generality
that $\deg(D_+)\geq \deg(D_-)$, since $L(D) = L(D_+-D_-)$ is reduced to zero if
$\deg(D)<0$. Consequently, $\deg(D_+)$ is a relevant measure of the size of the
divisor $D$.  Haché~\cite{hache1995computation} proposes the first
implementation of Brill-Noether's approach in a computer algebra system, using
local desingularizations to handle singularities encountered during the
algorithm. For
lines of research closely related to this topic, we refer to~\cite{le1988algorithme,
hache1995effective} and references therein.

A few years later, a breakthrough
is achieved by Hess~\cite{hess2002computing}: He provides an arithmetic
approach to the Riemann-Roch problem, using fast algorithms for algebraic function
fields. Hess' algorithm is now considered as a reference method for
computing Riemann-Roch spaces, and it is proved to be polynomial in the input
size~\cite[Remark~6.2]{hess2002computing}. 

An important special case of the
computation of Riemann-Roch spaces is the computation of the group law on
Jacobians of curves. Volcheck~\cite{volcheck1994computing} describes an algorithm with complexity $O(\max(\deg(\C),g)^7)$ in this context.
The best known complexity for computing the group law on Jacobians of
general curves is currently achieved by Khuri-Makdisi
in~\cite{khuri2007asymptotically}, where he gives an algorithm 
which requires $O(g^{\omega+\varepsilon})$ operations in the base field, where $\omega$
is a feasible exponent for matrix multiplication and $\varepsilon$ is any
fixed positive number. Actually, an anonymous reviewer informed us that the $\varepsilon$
in this complexity can be removed if the cardinality of the base field grows
polynomially in $g$, which is the case in this paper.

\smallskip

{\bf Main results.} We propose a probabilistic algorithm for
computing Riemann-Roch spaces on plane nodal projective curves $\C\subset\P^2$
defined over sufficiently large perfect fields. We emphasize that any algebraic
curve admits such a nodal model up to a birational map if the base field is
sufficiently large (see \emph{e.g.}~\cite[Appendix~A]{arbarello1985geometry}),
and that computing such a model depends only on the curve and not on the input
divisor.

Our main result is that the complexity of the algorithm for computing
Riemann-Roch spaces is bounded by $O(\max(\deg(\C)^{2\omega},
\deg(D_+)^\omega))$ and that, provided that some mild assumptions are
satisfied, its failure probability is bounded above by $O(\max(\deg(\C)^4,
\deg(D_+)^2)/\lvert \mathcal E\rvert)$, where $\mathcal E$ is a finite subset
of the base field $k$ in which we can draw elements uniformly at random.
Roughly speaking, these assumptions on the input require that the impact of the
singularities during the execution of the algorithm is minimal. In particular, they are always
satisfied for smooth curves.  If these mild assumptions are not satisfied, then
the algorithm always fail. Therefore, we provide at the end of
Section~\ref{sec:proba} a Las Vegas procedure (in the sense of
\cite[Sec.~0.1]{babai79}) with complexity
$O(\max(\deg(\C)^5, \deg(D_+)^{5/2}))$ and probability of failure bounded by
$O(\max(\deg(\C)^6, \deg(D_+)^3)/\lvert \mathcal E\rvert)$ to decide whether
these assumptions are satisfied. Combining this verification procedure with our
main algorithm turns it into a complete Las Vegas method, at the cost of
increasing slightly the complexity and the probability of failure.

We also emphasize that our algorithm is geared towards curves defined over
sufficiently large fields $k$, so that the probability of failure can be made
small by choosing a large subset $\mathcal E\subset k$. A possible
workaround to decrease the probability of failure for curves defined over small finite fields
is to do the computations in a field extension, although doing so induces an
extra arithmetic cost.

Up to our knowledge, the complexity that we obtain is the best bound for the general problem of computing
Riemann-Roch spaces. In the special case of the group law on the Jacobian of
plane smooth curves where $\deg(D_+)=O(g)$ and $\deg(\C)=O(\sqrt g)$ by the genus-degree formula, 
the complexity becomes $O(g^\omega)$ which equals the best known complexity
bound of
Khuri-Makdisi's algorithm.
Moreover, the algorithm that we propose requires very few assumptions, and its
efficiency relies on classical building blocks in modern computer algebra: fast
arithmetic of univariate polynomials and fast linear algebra. Consequently, it can be easily made practical by using existing implementations of
these building blocks. We have made a C++/NTL implementation of our algorithm which
is freely distributed under LGPL-2.1+ license and which is available at
\url{https://gitlab.inria.fr/pspaenle/rrspace}.  We also provide experimental
data which seem to indicate that our prototype software is competitive with the reference
implementation in the Magma computer algebra system~\cite{MR1484478}.

\smallskip

{\bf Organization of the paper.} 
Section~\ref{sec:overview} provides an overview of the main algorithm.
Section~\ref{sec:datastruct} focuses on the data structures used to represent
effective divisors. Algorithms to perform additions and subtractions of
divisors with this representation
are described in Section~\ref{sec:div_arith}. Then
Section~\ref{sec:subroutines} gives the details of the subroutines used in the
main algorithm, and their correctness is proved.
Section~\ref{sec:complexity} focuses on the complexity of the subroutines and
of the main algorithm. Then Section~\ref{sec:proba} is devoted to the analysis
of the failure probability. Finally, Section~\ref{sec:expe} presents 
experimental results obtained with our NTL/C++ implementation.

\smallskip

{\bf Acknowledgements.} We are grateful to Simon Abelard, Pierrick Gaudry,
Emmanuel Thomé and Paul Zimmermann for useful discussions and for pointing out important
references.  We thank Pierrick Gaudry for allowing us to use his code for the
fast computation of resultants and subresultants of univariate polynomials. We
are also grateful to an anonymous referee who helped us improve the paper.

\section{Overview of the algorithm}\label{sec:overview}

This section is devoted to the description of the general setting of Brill-Noether's
method and of the algorithm that we propose, without
giving yet all the details on the data structures that we use to represent
mathematical objects.

Throughout this paper, $k$ is a
perfect field and $\C\subset\P^2$ is an absolutely irreducible
projective nodal curve defined over $k$ with $r$ nodes. By nodal curve, we mean that all the
singularities of the curve have order $2$ and are ordinary. We do not need any
assumption about the $k$-rationality of the slopes of the tangents at the nodes. We emphasize that
every algebraic curve admits such a model (up to a field extension if
$k$ is a small finite field), which can be for instance obtained by computing
the image of a nonsingular projective model of the curve by a generic linear
projection to $\mathbb P^2$~\cite[Appendix~A]{arbarello1985geometry}.
We let $\overline k$ denote an
algebraic closure of $k$. Also, we use the notation $\widetilde\C$ to denote a nonsingular model of
$\C$ which projects onto $\C$ (as denoted by $X$
in~\cite[Ch.~8]{fulton2008algebraic}). We assume that this implicit
projection $\widetilde\C\rightarrow \C$ is one-to-one on nonsingular points of
$\C$ and that it is two-to-one on nodes.
By divisor, we always mean a \emph{Weil divisor} on
the curve $\widetilde\C$, i.e. a formal sum with integer coefficients of closed points of
$\widetilde\C$. When the support of a divisor $D$ involves only points of
$\widetilde\C$ which project to nonsingular points of $\C$, we call $D$ a
\emph{smooth divisor} of $\C$ by slight abuse of terminology. More generally, we
will often identify nonsingular closed points of $\C$ with their corresponding points
on $\widetilde C$. We will use frequently the \emph{nodal divisor}, denoted by
$E$, which is the effective
divisor of degree $2r$ which is the sum of all the closed points of
$\widetilde\C$ which project to a node of $\C$.

Naming
$X,Y,Z$ homogeneous coordinates
for $\P^2$, 
the curve $\C\subset\mathbb P^2$ is described by a homogeneous polynomial $Q\in k[X,Y,Z]$
and we let $k[\C]=k[X,Y,Z]/Q(X,Y,Z)$ denote its homogeneous coordinate ring.

Assuming (w.l.o.g. up to linear change of coordinate) that $Q\ne Z$, we let $\C^0\subset\mathbb A^n$ be the affine curve
obtained by intersecting $\C$ with the open subset $\{Z\ne 0\}\subset\P^2$.
It is described
by the bivariate polynomial $q(X, Y)=Q(X,Y,1)$. Closed points of $\C^0$ correspond to maximal
ideals in $k[\C^0] = k[X,Y]/q(X, Y)$. We assume (again w.l.o.g.) that all the nodes of the curve
belongs to its affine subset $\C^0$.

We shall also require that all the divisors that we consider are defined over $k$,
i.e. that they are invariant under the natural action of the Galois group $\Gal(K/k)$
for any extension $K$ of $k$. 
In this setting,
smooth effective divisors on $\C^0$ can be
thought of as nonzero ideals $I$ in
$k[\C^0]$ such that $I+\langle\partial q/\partial X,\partial q/\partial
Y\rangle = k[\C^0]$. For two divisors
$D, D'$ on $\C$, we write $D\leq D'$ if the valuation of $D$ at any place of
$k(\C)$ is at most the valuation of $D'$.
If $g\in k[\C]$ is
a nonzero form
on $\C$, then we let $(g)$ denote the
associated effective principal divisor, as defined in~\cite[Sec.~8.1]{fulton2008algebraic}. 
If $g\in k[\C^0]$ is a nonzero regular function on $\C^0$, then by abuse
of notation, we overload the notation $(g)$ to denote the effective divisor associated to the form
$Z^{\deg(g)} g(X/Z, Y/Z, 1)$. We emphasize that this divisor has no pole and
that it is not the principal divisor associated to the function $g(X/Z, Y/Z,
1)\in k(\C)$.

If $f\in k(\C)$ is a nonzero function on $\C$, i.e. a quotient $f =g/h$
of two nonzero forms
$g,h\in k[\C]$ of the same degree, then again by abuse of notation we let $(f)$ denote the
associated degree-$0$ principal divisor. Finally, for a divisor $D$ we let
$L(D) = \{f\in k(\C)\setminus\{0\}\mid (f)\geq -D\}\cup\{0\}$ denote the Riemann-Roch space
associated to $D$.

\paragraph{Assumptions on the input divisor.} If the curve $\C$ is singular,
then we need two mild assumptions on the input
divisor $D$ to ensure that our algorithm does not always fail. First, the
divisor $D$ should be smooth, and its support should be contained in the affine
chart $\C^0$.
To describe the second assumption --- which is more technical --- we need some
insight on the data structure that we will use: The input divisor $D$ will be
given as a pair of effective divisors $(D_+, D_-)$ such that 
$D = D_+-D_-$. 
Set 
$$d = \begin{cases}
  \lfloor \deg(D_+ + E)/\deg(\C)+(\deg(\C)-1)/2\rfloor\text{ if
  }\binom{\deg(\C)+1}2\leq \deg(D_+ + E)\\
  \lfloor (\sqrt{1+8\deg(D_+ + E)}-1)/2\rfloor\text{\quad\quad\quad\quad~ otherwise.}
\end{cases}$$
We will see in the sequel that this value of $d$ is in fact the smallest
integer which
ensures the existence of a nonzero form $h\in \overline k[\C]$ of degree $d$ such that
$(h)\geq D_+ +E$.
Our second assumption is that there exists a form $h$ of degree $d$ such that
$(h) \geq D_+ +E$ and $(h)-E$ is a smooth divisor. This is mild assumption which is
satisfied in most cases. In the rare cases where it is not satisfied, a
workaround for practical computations --- for which we do not prove any theoretical guarantee of
success --- is to increase slightly
the value of $d$ in Algorithm~\ref{algo:interpolation} (\textsc{Interpolate}) 
in order to increase the dimension of the space of such
functions $h$.

\medskip

\begin{algorithm} 
  \textbf{Function} \textsc{RiemannRochBasis}\;
  \KwData{A curve $\C$ together with its nodal divisor $E$, and a divisor $D=D_+-D_-$ on $\C$ such that $D_+$ and
  $D_-$ are smooth effective divisors.}
  \KwResult{A basis of the Riemann-Roch space $L(D)$.}
  $h\gets$ \textsc{Interpolate}($\deg(\C)$, $D_+$, $E$)\;
  $D_h\gets$ \textsc{CompPrincDiv}($\C$, $h$, $E$)\;
  $D_{\sf res}\gets \textsc{SubtractDivisors}(D_h, D_+)$\;
  $D_{\sf num}\gets \textsc{AddDivisors}(D_-, D_{\sf res})$\;
  $B\gets$ \textsc{NumeratorBasis}($\deg(\C)$, $D_{\sf num}$, $\deg(h)$, $E$)\;
  Return $\{f/h \mid f \in B\}$.
  \caption{A bird's eye view of the algorithm.\label{algo:birdeye}}
\end{algorithm}

Algorithm~\ref{algo:birdeye} gives a bird's eye view of our algorithm for
computing Riemman-Roch spaces. 
We now describe briefly what is done at each step of the algorithm. 
The routine \textsc{Interpolate} takes as input an effective divisor $D_+$, and
it returns a form $h$ such that $(h)\geq D_+ + E$.
Then, \textsc{CompPrincDiv} computes from $h$ a convenient representation of
the divisor $(h)-E$.
The routines used to perform addition and subtraction of divisors --- namely,
\textsc{AddDivisors} and \textsc{SubtractDivisors} --- will be
described in Section~\ref{sec:div_arith}. Then, \textsc{NumeratorBasis} takes as input the
effective divisor $D_{\sf num}$ and the degree of $h$, and it returns a basis
of the vector space of all forms $f\in k[\C]$ of
degree $\deg(h)$ such that $(f)\geq D_{\sf num}+E$.
Finally, we divide this basis by the common denominator $h$ in order to obtain
a basis of the Riemann-Roch space.

One of the cornerstones of the correctness of Algorithm~\ref{algo:birdeye} is
the Brill-Noether's residue theorem. This theorem is one of the foundations of
the theory of adjoint curves. In the case of nodal plane curves, an adjoint
curve is just a curve which goes through all the nodes of $\C$, and $E$ is the
adjoint divisor as defined in~\cite[Sec.~8.1]{fulton2008algebraic}.

\begin{proposition}\cite[Sec.~8.1]{fulton2008algebraic}\label{prop:residue}
Let $D, D'$ be two linearly equivalent effective divisors on $\C$. Let
$h\in k[\C]$ be a form such that $(h)=D+E+A$ for some effective divisor $A$.
Then there exists a form $h'\in k[\C]$ of
the same degree as $h$ such that $(h')=D'+E+A$.
\end{proposition}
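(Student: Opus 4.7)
The approach is to construct $h'$ explicitly as the product $h' := f h$, where $f \in k(\C)^\ast$ is a rational function witnessing the linear equivalence between $D$ and $D'$, and then to verify that this rational section, defined a priori only on the normalization $\widetilde\C$, descends to a genuine form on the singular curve $\C$. The adjoint property $(h)\geq E$ will be the essential ingredient for this descent.

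First, pick a nonzero $f \in k(\C)^\ast$ with $(f) = D' - D$ on $\widetilde\C$; such an $f$ exists by the assumed linear equivalence. Viewing $h$ as a global section on $\widetilde\C$ of the pullback of $\mathcal{O}_{\P^2}(\deg h)$ along $\widetilde\C \to \C \hookrightarrow \P^2$, the rational section $h' := fh$ has divisor
$$(h') \;=\; (f) + (h) \;=\; (D' - D) + (D + E + A) \;=\; D' + E + A,$$
which is effective.

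The key step is to show that $h'$, a priori only a section on $\widetilde\C$, is the pullback of some form in $k[\C]$ of degree $\deg h$. At each node $P$ of $\C$, with geometric preimages $P_1, P_2 \in \widetilde\C$, the local ring of $\C$ at $P$ is the subring of pairs $(g_1, g_2) \in \mathcal{O}_{\widetilde\C, P_1} \times \mathcal{O}_{\widetilde\C, P_2}$ whose values at $P_1$ and $P_2$ coincide. Via a local trivialization of the line bundle, a regular section on $\widetilde\C$ near the preimages of $P$ descends to $\C$ exactly when this same matching condition is satisfied. Since $(h') \geq E$, the section $h'$ vanishes at both $P_1$ and $P_2$, so the matching condition holds trivially; a standard Galois argument handles nodes whose two preimages are conjugate. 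Hence $h'$ descends to a form in $k[\C]$ of degree $\deg h$, with divisor $D' + E + A$ as required.

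The main obstacle is this descent step, which requires careful local analysis at the nodes; without the adjoint condition $(h) \geq E$, the naive product $fh$ could fail to give a well-defined section of $\mathcal{O}_\C(\deg h)$ through the singular locus, even though it extends on the normalization. A detailed treatment along these lines appears in~\cite[Sec.~8.1]{fulton2008algebraic}.
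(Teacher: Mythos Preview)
The paper does not supply its own proof of this proposition; it simply cites Fulton~\cite[Sec.~8.1]{fulton2008algebraic} and uses the result as a black box. So there is no in-paper argument to compare against directly, only the referenced proof in Fulton.

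Your sketch is correct in outline and is a genuinely different route from Fulton's. Fulton's argument is algebraic: writing $f=R/S$ as a ratio of forms of equal degree, one has $(hR)\geq (S)+E$, and Max Noether's fundamental theorem (the $AF+BG$ theorem, Fulton~\S5.5) then guarantees that $S$ divides $hR$ in $k[\C]$, so $h':=hR/S$ is an honest form of the right degree. Your approach instead phrases the problem as descent along the normalization $\pi:\widetilde\C\to\C$: the product $fh$ is a regular section of $\pi^*\mathcal O_{\P^2}(\deg h)$ on $\widetilde\C$ because its divisor is effective, and the adjoint condition $(fh)\geq E$ forces the section to vanish at both preimages of every node, so the gluing condition for descent to $\C$ is satisfied trivially. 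This is cleaner conceptually and specific to the nodal case, whereas Fulton's $AF+BG$ route works for arbitrary ordinary singularities but requires the heavier input of Noether's theorem.

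One point you gloss over: the identification of $\mathcal O_{\C,P}$ with the subring of matching pairs in $\mathcal O_{\widetilde\C,P_1}\times\mathcal O_{\widetilde\C,P_2}$, and the corresponding statement for sections of a line bundle after local trivialization, is exactly where the work lies. For a node with non-rational branches or for a non-rational node, the bookkeeping is slightly more involved than ``values agree,'' and your one-line Galois remark hides this. The argument can be made precise, but as written it is a sketch rather than a proof; since you already defer to Fulton for the details, this is acceptable.
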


We can now prove the general correctness of the main algorithm, assuming that
all the subroutines behave correctly.

\begin{theorem}
  If all the subroutines \textsc{Interpolate}, \textsc{CompPrincDiv},
  \textsc{SubtractDivisors}, \textsc{AddDivisors}, \textsc{NumeratorBasis} are
  correct, then Algorithm~\ref{algo:birdeye} is correct: It returns a basis of the space
  $L(D)$.
\end{theorem}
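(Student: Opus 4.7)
The plan is to show that the map $\varphi: f\mapsto f/h$ from the space spanned by the output $B$ of \textsc{NumeratorBasis} to $L(D)$ is a well-defined linear bijection; then since $B$ is a basis of its source, its image is a basis of $L(D)$. The proof is essentially divisor bookkeeping, so Proposition~\ref{prop:residue} is not needed here (it is used inside the subroutines to justify their outputs).

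First I would unfold the definitions. By correctness of the subroutines, $h$ is a form with $(h)\geq D_+ + E$, $D_h=(h)-E$, $D_{\sf res}=D_h-D_+=(h)-E-D_+$, and $D_{\sf num}=D_-+D_{\sf res}=D_-+(h)-E-D_+$. In particular $D_{\sf num}+E=(h)-D$ as divisors on $\widetilde\C$. The set $B$ is a basis of the $k$-vector space $V=\{f\in k[\C]_{\deg(h)}\mid (f)\geq D_{\sf num}+E\}$.

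Next I would check that $\varphi:V\to L(D)$, $f\mapsto f/h$, is well defined and linear. Linearity is obvious; for well-definedness, for $f\in V$ the divisor of $f/h\in k(\C)$ is $(f)-(h)\geq (D_{\sf num}+E)-(h)=-D$, so $f/h\in L(D)$. Injectivity is immediate since $h$ is a nonzero form, hence a nonzero divisor in $k[\C]$.

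The one slightly subtle point, which I expect to be the main obstacle, is surjectivity: given $g\in L(D)$, I want to exhibit a form $f\in V$ with $f/h=g$ in $k(\C)$, and the candidate is the ``product'' $gh$. The issue is that $gh$ is a priori only a rational section of the line bundle $\mathcal O(\deg h)$; it comes from an honest form of degree $\deg(h)$ precisely when $(g)+(h)\geq 0$, i.e.\ when $g\in L((h))$. But $(h)\geq D_+ +E\geq D_+\geq D$ (the last inequality because $D_-$ is effective), so $L(D)\subseteq L((h))$ and $f:=gh$ is indeed a form of degree $\deg(h)$. Its divisor satisfies
\[
  (f)=(g)+(h)\geq -D+(h)=D_{\sf num}+E,
\]
so $f\in V$ and $\varphi(f)=g$. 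Thus $\varphi$ is a linear isomorphism, and $\{f/h\mid f\in B\}$ is a basis of $L(D)$, which is exactly what the algorithm returns.
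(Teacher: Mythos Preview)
Your overall structure is the same as the paper's---set up the linear map $f\mapsto f/h$ from the space $V$ of numerators to $L(D)$ and show it is bijective---but the surjectivity step contains a genuine gap that is precisely what the Brill--Noether residue theorem (Proposition~\ref{prop:residue}) is there to fill.

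The problematic sentence is: ``it comes from an honest form of degree $\deg(h)$ precisely when $(g)+(h)\geq 0$''. This is true when $\C$ is smooth, but false for a nodal curve. Divisors here live on the nonsingular model $\widetilde\C$, while forms live in $k[\C]_d$; the pullback $k[\C]_d\hookrightarrow H^0(\widetilde\C,\pi^*\mathcal O_\C(d))$ is injective but not surjective when $r>0$ (for $d$ large the cokernel has dimension exactly $r$). Concretely, a line through a node of a nodal cubic always pulls back to $P_1+P_2+R$ with \emph{both} preimages of the node appearing, so there is no degree-$1$ form whose divisor is $P_1+Q_1+Q_2$ with $Q_1,Q_2$ away from the node---even though such an effective degree-$3$ divisor on $\widetilde\C$ is linearly equivalent to a line section. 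Thus $(g)+(h)\ge 0$ alone does not guarantee that $gh$ is a form on $\C$.

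What actually makes your candidate $gh$ a form is the stronger inequality your computation already yields, namely $(g)+(h)\ge E$ (since $(h)\ge D_++E$ and $(g)\ge -D_++D_-$). Passing from this to ``$gh\in k[\C]_{\deg h}$'' is exactly the content of Proposition~\ref{prop:residue} applied with $D=D_+$ and $D'=D_++(g)$, which is how the paper argues. (Equivalently one can invoke the conductor: for a nodal curve the conductor ideal on $\widetilde\C$ is $\mathcal O_{\widetilde\C}(-E)$, so a section of $\pi^*\mathcal O_\C(d)$ vanishing along $E$ descends to $\mathcal O_\C(d)$.) Either way, Proposition~\ref{prop:residue} is \emph{not} used inside the subroutines---its only role in the paper is exactly this surjectivity step---so your claim that it ``is not needed here'' is the opposite of the truth. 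For smooth $\C$ (where $E=0$) your argument goes through as written.
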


\begin{proof}
  We first prove that there exists a basis of $L(D)$ such that any basis
  element $f$ belongs to the
  vector space spanned by the output of Algorithm~\ref{algo:birdeye}. To this
  end, we must prove that $f$ can be written as $g/h$ where $h$ is the output
  of the subroutine \textsc{Interpolate} and $g$ belongs to the vector space spanned by
  the output of the subroutine \textsc{NumeratorBasis}. Proposition~\ref{prop:residue} with $D=D_+$, $D'=D_++(f)$ and $h$
  implies that there exists a form $g\in k[\C]$ such that
  $(g/h) = (f)$, where $g$ has the same degree as $h$. Therefore, $f=\lambda g/h$ for some nonzero $\lambda\in k$. It remains to prove
  that $g$ belongs to the vector space spanned by
  the output of \textsc{NumeratorBasis}. Since $f\in L(D)$, we must have
$(g)=(f)+(h)\geq (h)-D = (D_h+E-D_+) + D_-=D_{\rm res}+D_-+E = D_{\rm num}+E$. But
\textsc{NumeratorBasis} returns precisely a basis of the space of forms
$\alpha$ of the
same degree as $h$ such that $(\alpha)\geq D_{\rm num}+E$.

Conversely, let $f$ be a function returned by Algorithm~\ref{algo:birdeye}.
Then $f\cdot h$ belongs to $B$, and hence $(f\cdot h)=(f)+(h)\geq D_{\rm
num}+E=(h)-D$. This implies that $(f)\geq -D$ and hence $f\in L(D)$.
\end{proof}

\section{Data structures}\label{sec:datastruct}

\paragraph{Data structure for the curve $\C$.} We represent the projective
curve $\C\subset\P^2$ by its affine model $\C^0$ in the affine chart $Z\ne 0$
which is described by a bivariate polynomial $q\in k[X,Y]$. 
We assume that the degree of $q$ in $Y$ equals its total
degree. This condition implies that $\C$ is in \emph{projective Noether
position} with respect to the projection on the line $Y=0$, i.e.
that the canonical map $k[X, Z]\rightarrow k[\C]$ is injective and that it
defines an integral ring extension. This also implies that the map
$k[X]\rightarrow k[\C^0]$ is an integral ring
extension. We refer to \cite[Sec.~3.1]{giusti2001grobner} for
more details on the projective Noether position. We emphasize that the
projective Noether
position is achieved in generic coordinates. Hence this assumption does not
lose any generality since it can be enforced
by a harmless linear
change of coordinates. More precisely, regarding a linear change of coordinate
in $\P^2$ as a $3\times 3$ matrix $M=(m_{ij})_{1\leq i,j\leq 3}$, the invertible matrices which put the curve in
projective Noether position are precisely the matrices the $9$ coefficients of
which do not make a polynomial $P(m_{11},\ldots, m_{33})$ of degree $\deg(\C)+3$
vanish. This polynomial is the product of $\det(M)$ --- which has degree $3$
--- with the
coefficient of $Y^{\deg(\C)}$ in the new system of coordinates --- which has degree $\deg(\C)$. Using Schwartz-Zippel
lemma~\cite[Coro.~1]{schwartz1979probabilistic}, this implies that the probability that the 
curve is not in projective Noether position
after a linear change of coordinates given by a
random matrix whose entries are picked uniformly at random in a finite subset
$\mathcal E\subset k$ is bounded above by $(\deg(\C)+3)/\lvert \mathcal
E\rvert$.

\paragraph{Data structure for forms.} We will represent forms on $\C$ --- namely
elements in $k[\C] = k[X, Y, Z]/(Z^{\deg(q)}q(X/Z, Y/Z))$ --- by their affine
counterpart in the affine chart $Z=1$. Consequently, we shall represent a form
$g\in k[\C]$ as an element in $k[X,Y]/q(X, Y)$, given by a representative
$\widetilde g\in k[X,Y]$ such that $\deg_Y(\widetilde g)<\deg(\C)$, using the
fact that $q$ is monic in $Y$.  This representation is not faithful since it
does not encode what happens on the line $Z=0$ at infinity. In order to encode
the behaviour on this line and obtain a faithful representation, it is enough
to adjoin to $\widetilde g$ the degree $d$ of the form $g$, since $g$ is the
class of $Z^d\widetilde g(X/Z, Y/Z)$ in $k[\C]$. In the sequel of this paper,
we do not mention further this issue and we often identify $g$ with $\widetilde
g$ by slight abuse of notation when the context is clear.

\paragraph{Data structure for smooth divisors on $\C$.} For representing
divisors which do not involve any node, 
we use a data structure strongly inspired by the
Mumford representation for divisors on hyperelliptic curves and by
representations of algebraic sets by primitive elements as in~\cite{canny1988some}. Our data structure requires a mild
assumption on the divisor that we represent:
None of the points in
the support of the divisor should lie at infinity. In fact, this is not a
strong restriction since all points can be brought to an affine chart via a
projective change of coordinate. If one does not wish to change the
coordinate system, another solution is to maintain three representations, one for each of the
three canonical affine charts covering $\mathbb P^2$. 

We shall represent a smooth divisor $D$ as a pair of smooth effective divisors $(D_+,
D_-)$ such that $D=D_+-D_-$.
One crucial point for the representation of effective divisors is that the $0$-dimensional algebraic set corresponding to the
support (i.e. without considering the multiplicities) of an effective divisor $D$ can be described by a
finite étale algebra which is a quotient of $k[\C^0]$ by a nonzero ideal. This étale algebra is isomorphic to
a quotient of a univariate polynomial ring if it admits a primitive element. Using primitive elements to represent $0$-dimensional
algebraic sets is a classical technique in computer algebra, see
e.g.~\cite[Sec.~2]{canny1988some}\cite[Sec.~3.2]{giusti2001grobner}.

\begin{lemma}\label{lem:primeltdef}
  Let $R$ be a finite étale $k$-algebra, i.e. a finite product of finite
  extensions of $k$. Let $z\in R$ be an
  element, and let $m_z$ denote the multiplication by $z$ in $R$, seen as a
  $k$-linear endomorphism.
  The following statements are equivalent:
  \begin{enumerate}[itemsep=1pt, topsep=4pt]
    \item The element $z$ generates $R$ as a $k$-algebra;
    \item The elements $1, z, z^2, \ldots, z^{\dim_k(R)-1}$ are linearly
      independent over $k$;
    \item The characteristic polynomial of $m_z$ equals its minimal polynomial;
    \item The characteristic polynomial of $m_z$ is squarefree.
  \end{enumerate}

  If $z$ satisfies these four properties, then $z$ is called a \emph{primitive
  element} for $R$. 
\end{lemma}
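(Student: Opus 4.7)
The plan is to prove a cycle of equivalences, with the étale hypothesis entering decisively only in (3)~$\Leftrightarrow$~(4). Write $n = \dim_k(R)$. The structural fact I would invoke first is that, since $R$ is étale, base change to an algebraic closure gives an isomorphism $R \otimes_k \overline k \cong \overline k^n$ of $\overline k$-algebras. Under this isomorphism, $z \otimes 1$ corresponds to some tuple $(\alpha_1, \dots, \alpha_n) \in \overline k^n$, and the extended operator $m_z \otimes 1$ is represented by the diagonal matrix with entries $\alpha_1, \dots, \alpha_n$. Since characteristic and minimal polynomials are insensitive to base change, the characteristic polynomial of $m_z$ equals $\prod_{i=1}^n (T - \alpha_i)$, while its minimal polynomial equals $\prod_{\alpha \in S}(T - \alpha)$, where $S \subset \overline k$ denotes the set of \emph{distinct} values among the $\alpha_i$.

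With this diagonal description in hand, I would dispatch (1)~$\Leftrightarrow$~(2) by a pure dimension count: the $n$ vectors $1, z, \dots, z^{n-1}$ are linearly independent in the $n$-dimensional space $R$ if and only if they form a basis, if and only if $R = k[z]$. For (2)~$\Leftrightarrow$~(3), I would use that the degree of the minimal polynomial of $m_z$ is the smallest $d$ such that $1, z, \dots, z^d$ are linearly dependent; since the characteristic polynomial has degree $n$ and the minimal polynomial (monic) divides it (also monic), their equality is equivalent to the minimal polynomial having degree $n$, and thus to the linear independence of $1, z, \dots, z^{n-1}$.

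The last step (3)~$\Leftrightarrow$~(4) is where the étaleness genuinely matters, and I would prove it directly from the diagonal description: $\prod_i(T-\alpha_i)$ is squarefree if and only if the $\alpha_i$ are pairwise distinct, if and only if it already coincides with $\prod_{\alpha \in S}(T-\alpha)$, namely the minimal polynomial. This is the only implication with nontrivial content, and it is precisely the one that would fail without the étale hypothesis: in $R = k[T]/(T^2)$ with $z$ the class of $T$, both the characteristic polynomial and the minimal polynomial equal $T^2$, yet this polynomial is not squarefree. The remaining equivalences are standard linear algebra valid over any finite-dimensional $k$-algebra, so the whole obstacle really reduces to recognizing that étaleness forces $m_z$ to be diagonalizable over $\overline k$.
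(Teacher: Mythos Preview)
Your proof is correct. The arguments for $(1)\Leftrightarrow(2)\Leftrightarrow(3)$ are essentially the same as the paper's, just more compressed; the genuine difference lies in $(3)\Leftrightarrow(4)$. You base-change to $\overline k$, so that $m_z$ becomes a diagonal matrix with entries $\alpha_1,\dots,\alpha_n$, and then read off both polynomials from the eigenvalues. The paper instead works intrinsically over $k$: it notes that if $\xi$ is the squarefree part of the characteristic polynomial then $\xi(z)$ is nilpotent (Cayley--Hamilton, since the characteristic polynomial divides a power of $\xi$), hence $\xi(z)=0$ because an étale algebra is reduced, hence $\xi$ is divisible by the minimal polynomial; invoking (3) then forces $\xi$ to equal the characteristic polynomial. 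Your route has the virtue of making the role of étaleness completely transparent---it is exactly diagonalizability of $m_z$ over $\overline k$---and your $k[T]/(T^2)$ counterexample is a nice touch. The paper's route is marginally more elementary in that it avoids tensor products, and it makes visible that only reducedness of $R$ (not the full separability implicit in étaleness) is needed for $(3)\Rightarrow(4)$.
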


\begin{proof}
$(2)\Rightarrow (1)$: By definition, the element $z$ generates $R$ as a $k$-algebra if and
only if its powers generates $R$ as a $k$-vector space. 
$(1)\Rightarrow(2)$: Let $n_0$ be the smallest positive integer such that $1, z,
z^2, \ldots, z^{n_0}$ are linearly dependent. The integer $n_0$ must be finite since
$\dim_k(R)$ is finite. Write $z^{n_0} = \sum_{i=0}^{n_0-1} a_i z^i$ for some
$a_0,\ldots, a_{n_0}\in k$. By multiplying this relation by $z^{n-n_0}$ and by
induction on $n$, we obtain that for any $n\geq n_0$, $z^n$ belongs to the
vector space generated by $1, z, \ldots, z^{n_0-1}$. This implies that the
algebra generated by $z$ has dimension $n_0$ as a $k$-vector space. By $(1)$,
we obtain that $n_0=\dim_k(R)$.
$(2)\Rightarrow(3)$: By $(2)$, the minimal polynomial of $m_z$ has degree at
least $\dim_k(R)$, and hence it equals its characteristic polynomial.
$(3)\Rightarrow(2)$: The degree of the characteristic polynomial is $\dim_k(R)$.
$(3)\Rightarrow(4)$: Let $\xi$ be the squarefree part of the characteristic
polynomial of $m_z$. By $(3)$, $\xi(z)$ must be nilpotent in $R$. But the only
nilpotent element in an étale algebra in $0$, so $\xi$ must be a multiple of
the minimal polynomial of $m_z$. Hence, by $(3)$, $\xi$ is the characteristic polynomial
of $m_z$. $(4)\Rightarrow(3)$: This is a consequence of the facts that the characteristic polynomial and the
minimal polynomial have the same set of roots, and the minimal polynomial
divides the characteristic polynomial. \end{proof}

We are now ready to define the data structure that we will use to represent
smooth effective divisors on the curve.
A smooth effective divisor $D$ on $\C$ supported on the affine chart $Z\ne 0$ will be represented as:
\begin{itemize}[noitemsep, topsep=2pt]
  \item A scalar $\lambda\in k$;
  \item Three univariate polynomials $\chi, u, v\in k[S]$, such that $\chi$ is
    monic, $
    \chi$ has degree $\deg(D)$ and $u, v$ have degree at most $\deg(D)-1$.
\end{itemize}

such that

\begin{description}[labelwidth=1.5cm, noitemsep, topsep=2pt]
    \itemsep0em
  \item[(Div-H1)] $q(u(S), v(S))\equiv 0\bmod \chi(S)$;
  \item[(Div-H2)] $\lambda u(S)+ v(S) = S$;
  \item[(Div-H3)] $\GCD(\frac{\partial q}{\partial X}(u(S), v(S))-\lambda \frac{\partial q}{\partial Y}(u(S),
    v(S)),\chi(S)) = 1$.
\end{description}

We call
the data structure above a \emph{primitive element representation}.
An
important ingredient of the primitive element representation is that \textbf{(Div-H3)} enables us to use Hensel's lemma to
encode the multiplicites. More precisely, \textbf{(Div-H3)} implies that at
each of the closed points in the support of the divisor, the element
$\lambda (X-\bar x)+(Y-\bar y)$ is a uniformizing element for the associated discrete
valuation ring, where
$\bar x, \bar y$ denote the classes of $X, Y$ in the residue
field.

Notice that this representation requires the existence of a primitive element of the form $\lambda X+Y$ which satisfies all the wanted properties.
Fortunately, Proposition~\ref{prop:existence_primrepr} below shows that such a primitive element
exists as soon as $k$ contains more than $\binom{\deg(D)+1}2$ elements.

\paragraph{Data structure for the nodal divisor.} We shall represent the nodal
divisor via an algebraic parametrization by the roots of a
univariate polynomial. This algebraic structure is very similar to the
representation of smooth divisors, except for a crucial difference: We shall
not need to represent multiplicities, so this representation does not need to satisfy condition
\textbf{(Div-H3)}. More precisely, the nodal divisor $E$ will be represented as:
\begin{itemize}[noitemsep, topsep=2pt]
  \item A scalar $\lambda_E\in k$;
  \item Three univariate polynomials $\chi_E, u_E, v_E\in k[S]$, such that
    $\chi_E$ is
    monic and squarefree, $\chi_E$ has degree $r$ and $u, v$ have degree at
    most $r-1$;
  \item A monic univariate polynomial $T_E\in k[S]$ of degree at most $2r$\end{itemize}

such that
\begin{equation}\tag{\textbf{NodDiv-H1}}\{(u_E(\zeta), v_E(\zeta))\mid \zeta\in
  \overline k,\chi_E(\zeta)=0\}\subset
  {\overline k}^2\text{ is the set of nodes of $\C^0$,}
\end{equation}
and such that the roots
    of $T_E$ are the values $\lambda\in \overline k$ such that the vector
    $(1, -\lambda)$ is tangent to $\C^0$ at a node. Notice that the roots of
    $T_E$ do not record vertical tangents at nodes, so the degree of $T_E$ may be
    less than $2r$.

Such $(\lambda_E, \chi_E, u_E, v_E)$ satisfying \textbf{(NodDiv-H1)} exist as soon as $k$ contains more
than $\binom{r}{2}$ elements by
Proposition~\ref{prop:existence_primrepr_nodal}. 
Computing $T_E$ is also an easy task once $(\lambda_E, \chi_E, u_E, v_E)$ are
known. This polynomial can be for instance obtained by considering the
homogeneous form $Q_2(X,Y,S)$ of degree $2$ of the
shifted polynomial $q(X+u_E(S), Y+v_E(S))$. Then
$T_E(\lambda) = \Resultant_S(Q_2(1, -\lambda, S), \chi_E(S))$ satisfies the
desired property. This polynomial $T_E$ will be useful in
Algorithm~\textsc{CompPrincDiv}.
Computing $\lambda_E, \chi_E, u_E, v_E, T_E$ can be thought of as a precomputation since
this depends only on the curve $\C$ and not on the input divisor $D$.

Before going any further, we summarize here the data structures for the input
and the output of
Algorithm~\ref{algo:birdeye} and the properties that they must satisfy.

\medskip

Input data:
\begin{itemize}[noitemsep, topsep=4pt]
  \item A bivariate polynomial $q\in k[X,Y]$. This
    polynomial encodes the curve $\C$;
  \item Data $(\lambda_E, \chi_E, u_E, v_E, T_E)$ encoding the nodal divisor $E$;
  \item A smooth divisor $D=D_+-D_-$ given by two tuples $(\lambda_+, \chi_+, u_+, v_+)$ and 
    $(\lambda_-, \chi_-, u_-, v_-)$ with $\lambda_\pm\in k$, 
    $\chi_\pm, u_\pm, v_\pm\in k[S]$.
\end{itemize}

The input data must satisfy the following constraints:
\begin{enumerate}[noitemsep, topsep=4pt]
  \item The bivariate polynomial $q\in k[X,Y]$ is absolutely irreducible, and
    its base field $k$ is perfect;
  \item The total degree of $q$ equals its degree with respect to $Y$;
\item The inequalities $\deg(u_\pm)<\deg(\chi_\pm)$,
  $\deg(v_\pm)<\deg(\chi_\pm)$, $\deg(u_E)<\deg(\chi_E)$,
  $\deg(v_E)<\deg(\chi_E)$ hold;
\item The polynomials $\chi_\pm$, $\chi_E$ and $T_E$ are monic;
\item The polynomial $\chi_E$ is squarefree;
\item Both tuples $(\lambda_+, \chi_+, u_+, v_+)$ and
  $(\lambda_-, \chi_-, u_-, v_-)$ satisfy \textbf{(Div-H1)} to
  \textbf{(Div-H3)};
\item The tuple $(\lambda_E, \chi_E, u_E, v_E, T_E)$ satisfies \textbf{(NodDiv-H1)};
\item The degree of $T_E$ is at most $2r$;
\item The roots of the univariate polynomial $T_E$ are the values
  $\lambda\in\overline k$ such that the vector $(1, -\lambda)$ is tangent to
  $\C^0$ at a node.
\end{enumerate}

\medskip

Output data:
\begin{itemize}[noitemsep, topsep=4pt]
  \item A bivariate polynomial $h\in k[X,Y]$;
  \item A finite set of bivariate polynomials $B\subset k[X,Y]$.
\end{itemize}

The output data satisfies that the set $\{b/h \mid b \in B\}$ is a basis of the
Riemann-Roch space associated to $D$ on $\C$.

The rest of this section is devoted to technical proofs about the
primitive element representation. The statements below will be used for proving
the correctness of the subalgorithms, but they may be skipped without harming the
general understanding of this paper.

\smallskip

The two following propositions (whose proofs are postponed after
Lemma~\ref{lem:existence_primrepr_aux}) show that primitive representations of smooth effective
divisors and of the nodal divisor exist
provided that the base field is large enough.

\begin{proposition}\label{prop:existence_primrepr}
  Let $J$ be a nonzero ideal of $k[\C^0]=k[X,Y]/q(X,Y)$ such that
  $J+\langle\frac{\partial q}{\partial X}, \frac{\partial q}{\partial
  Y}\rangle = k[\C^0]$. 
  Assume that the cardinality of $k$ is larger than $\binom{\dim_k(k[\C^0]/J)+1}2$.
  Then there exist $\lambda\in k$ and polynomials $\chi, u,
  v\in k[S]$ satisfying \textbf{(Div-H1)} to \textbf{(Div-H3)} such that the
  map
    $k[\C^0]/J\rightarrow k[S]/\chi(S)$ sending $X$ and $Y$ to the classes of
    $u$ and $v$ is an isomorphism of $k$-algebras.
\end{proposition}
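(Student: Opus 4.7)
The plan is to search for a single scalar $\lambda\in k$ such that $z := \lambda X + Y$ generates $R := k[\C^0]/J$ as a $k$-algebra; the data $(\chi, u, v)$ are then read off from the powers of $z$. Writing $n = \dim_k R$, I will show the number of ``bad'' values of $\lambda$ in $k$ is at most $\binom{n+1}{2}$, matching the hypothesis on $|k|$.

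First I would pass to $\overline{k}$ and decompose $R\otimes_k\overline{k} = \prod_{i=1}^m R_i$, where $R_i$ is the local Artinian factor at a geometric point $P_i = (x_i,y_i)\in V(J)(\overline{k})$, of length $e_i$ with $\sum e_i = n$. The hypothesis $J + \langle\partial q/\partial X,\partial q/\partial Y\rangle = k[\C^0]$ guarantees that each $P_i$ is a smooth point of $\C^0$, so $R_i \cong \overline{\mathcal O}_{\C,P_i}/\overline{\mathfrak m}_{P_i}^{\,e_i}$. I claim $z$ generates $R\otimes\overline{k}$ (equivalently $z$ generates $R$) exactly when (a) for each $i$ the element $z - z(P_i) = \lambda(X - x_i) + (Y - y_i)$ is a uniformizer in $R_i$, and (b) the values $z(P_i)$ are pairwise distinct. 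Under (a), Nakayama in $R_i$ shows that $1, t_i, \ldots, t_i^{e_i-1}$ is an $\overline{k}$-basis of $R_i$ with $t_i := z - z(P_i)$, so the minimal polynomial of $z$ on $R_i$ is $(S - z(P_i))^{e_i}$; under (b) these minimal polynomials are pairwise coprime, and CRT identifies $\overline{k}[z]\hookrightarrow \prod R_i$ with the whole product. Since $P_i$ is smooth, condition (a) translates to $\partial q/\partial X(P_i) - \lambda\,\partial q/\partial Y(P_i) \ne 0$.

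The counting step is then routine: condition (a) excludes at most $m$ values of $\lambda\in\overline{k}$ (one per geometric point, namely the slope of the tangent when $\partial q/\partial Y(P_i)\ne 0$), and condition (b) excludes at most $\binom{m}{2}$ values (for each pair with $x_i \ne x_j$). The total is bounded by $m + \binom{m}{2} = \binom{m+1}{2} \le \binom{n+1}{2}$; intersecting with $k$ preserves this bound, so the hypothesis $|k| > \binom{n+1}{2}$ produces a good $\lambda\in k$.

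For this $\lambda$, let $\chi\in k[S]$ be the characteristic polynomial of the multiplication-by-$z$ endomorphism of $R$; by the preceding step $\chi$ equals the minimal polynomial of $z$, so the surjection $k[S]\twoheadrightarrow R$, $S\mapsto z$, descends to an isomorphism $k[S]/\chi \cong R$. Define $u, v \in k[S]$ of degree $<n$ as the preimages of $X$ and $Y$. Then (Div-H1) follows from $q(X,Y) = 0$ in $R$; (Div-H2) follows because $\lambda u(S) + v(S) - S$ has degree $<n$ and vanishes in $R\cong k[S]/\chi$; and (Div-H3) follows because condition (a) forces $\partial q/\partial X(X,Y) - \lambda\,\partial q/\partial Y(X,Y)$ to avoid every maximal ideal of $R\otimes\overline{k}$, making it a unit of $R$ and therefore coprime to $\chi$ in $k[S]$. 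The main obstacle I anticipate is handling the multiplicities cleanly: Lemma~\ref{lem:primeltdef} is stated only for étale algebras, so when $e_i > 1$ one cannot invoke it directly — the Nakayama plus CRT argument above must replace it in the general (possibly non-reduced) setting.
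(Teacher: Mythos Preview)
Your argument is correct and lands on the same bound $\binom{n+1}{2}$, but the route differs from the paper's in two respects worth noting. First, the paper packages the ``bad $\lambda$'' count into a separate Lemma~\ref{lem:existence_primrepr_aux}, which produces a polynomial $\Delta\in k[\lambda]$ of degree at most $\binom{n+1}{2}$ over $k$ itself; you instead base-change to $\overline k$, count geometric obstructions directly (at most $m$ tangent slopes plus $\binom{m}{2}$ collisions), and then intersect back with $k$. Your count is actually sharper, since $m\le n$. Second, and more substantively, the paper constructs $u,v$ by an explicit Hensel iteration: it first solves modulo the radical $\widetilde\chi$, then lifts to $\chi=\widetilde\chi^\alpha$ via the Newton step that later reappears as Algorithm~\ref{algo:hensel}. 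Your Nakayama-plus-CRT argument is cleaner for a pure existence proof and sidesteps the issue you correctly flag (that Lemma~\ref{lem:primeltdef} only covers the \'etale case), but it is non-constructive. The paper's choice is deliberate: the Hensel formulas are reused verbatim in \textsc{AddDivisors}, so proving the proposition that way doubles as a correctness proof for a core subroutine. Both approaches are valid; yours is the more economical proof, the paper's the one better aligned with the algorithm.
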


The following proposition states a similar result for radical ideals which
do not satisfy the smoothness assumption. This will be useful to represent the
nodal divisor.

\begin{proposition}\label{prop:existence_primrepr_nodal}
  Let $J$ be a nonzero radical ideal of $k[\C^0]=k[X,Y]/q(X,Y)$. 
  Assume that the cardinality of $k$ is larger than $\binom{\dim_k(k[\C^0]/J)}2$.
  Then there exist $\lambda_E\in k$ and polynomials $\chi_E, u_E,
  v_E\in k[S]$ satisfying \textbf{(NodDiv-H1)} such that the
  map
    $k[\C^0]/J\rightarrow k[S]/\chi(S)$ sending $X$ and $Y$ to the classes of
    $u$ and $v$ is an isomorphism of $k$-algebras.
\end{proposition}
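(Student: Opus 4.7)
The plan is to find $\lambda_E \in k$ such that $z := \lambda_E \bar X + \bar Y$ is a primitive element of $R := k[\C^0]/J$ in the sense of Lemma~\ref{lem:primeltdef}, and then to read off $\chi_E$, $u_E$, $v_E$ from the resulting isomorphism $k[S]/\chi_E(S) \cong R$. Set $n := \dim_k R$, which is finite (otherwise the numerical hypothesis on $|k|$ would be vacuous). Since $J$ is radical and $k$ is perfect, $R$ is a finite product of finite separable extensions of $k$, hence a finite étale $k$-algebra, so that Lemma~\ref{lem:primeltdef} applies.

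The heart of the argument is a Schwartz-Zippel-style count. By Lemma~\ref{lem:primeltdef}(4), $z$ is primitive iff its characteristic polynomial is squarefree. Base-changing to $\overline k$ identifies $R \otimes_k \overline k$ with $\overline k^n$, the factors indexed by the $n$ \emph{distinct} geometric points $(x_i, y_i) \in \overline k^2$ of $V(J)$ (distinctness is precisely the radicality assumption). Under this identification $z$ maps to $(\lambda_E x_i + y_i)_{i=1}^n$, so its characteristic polynomial is $\prod_{i=1}^n (T - (\lambda_E x_i + y_i))$, and squarefreeness is equivalent to these $n$ scalars being pairwise distinct.

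Next, I would count the set of bad $\lambda \in \overline k$. For each unordered pair $\{i,j\}$ with $i \neq j$, the collision equation $\lambda(x_i - x_j) = y_j - y_i$ admits exactly one solution if $x_i \neq x_j$, and no solution if $x_i = x_j$ (in which case $y_i \neq y_j$ since the points differ). Summing over pairs, the bad set has cardinality at most $\binom{n}{2}$, so the hypothesis $|k| > \binom{n}{2}$ lets me choose $\lambda_E \in k$ outside it.

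Finally, with $\lambda_E$ so chosen, set $\chi_E$ to be the characteristic polynomial of multiplication by $z$; it is monic of degree $n$, and squarefree by construction. By Lemma~\ref{lem:primeltdef}(1), the powers of $z$ span $R$, so the evaluation map $k[S] \to R$, $S \mapsto z$, is surjective with kernel $(\chi_E)$, giving an isomorphism $\varphi : k[S]/\chi_E(S) \xrightarrow{\sim} R$. I would then define $u_E, v_E \in k[S]$ of degree $< n$ as the lifts of $\varphi^{-1}(\bar X)$ and $\varphi^{-1}(\bar Y)$. Property \textbf{(NodDiv-H1)} (read here as the general parametrization statement) follows because the roots $\zeta \in \overline k$ of $\chi_E$ correspond via $\varphi$ to the maximal ideals of $R$, i.e.\ to the points of $V(J) \subset \overline k^2$, with coordinates $(u_E(\zeta), v_E(\zeta))$. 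The only substantive obstacle is the bad-$\lambda$ count; the absence of a condition analogous to \textbf{(Div-H3)} explains why the bound $\binom{n}{2}$ suffices here rather than the $\binom{n+1}{2}$ of Proposition~\ref{prop:existence_primrepr}.
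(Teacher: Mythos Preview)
Your proof is correct and follows essentially the same approach as the paper's: choose $\lambda_E$ so that $\lambda_E X+Y$ is a primitive element for the \'etale algebra $k[\C^0]/J$, then read off $\chi_E,u_E,v_E$ from the resulting isomorphism. The only difference is packaging: the paper delegates the bad-$\lambda$ count to Lemma~\ref{lem:primeltbidegbound}, whose proof is exactly the pairwise-collision argument you wrote out inline, and then observes that the Hensel/Jacobian part of the proof of Proposition~\ref{prop:existence_primrepr} is unnecessary here since $J$ is already radical and \textbf{(Div-H3)} is not required.
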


Before proving Propositions~\ref{prop:existence_primrepr}
and~\ref{prop:existence_primrepr_nodal}, 
we need some technical lemmas.
First, the following lemma generalizes
slightly
the classical fact that ideals in the coordinate rings of smooth curves admit a unique factorization.
Here, we do not assume that $\C^0$ is nonsingular, but the
factorization property holds only for ideals of regular functions which do not vanish
at any singular point.

\begin{lemma}\label{lem:almostdedekind}
  Let $I$ be a nonzero ideal of $k[\C^0]$ such that $I + \langle
  \partial q/\partial X, \partial
  q/\partial Y\rangle = k[\C^0]$. Then there exists a unique
  factorization $I = \prod_{i=1}^\ell \mathfrak m_i^{\alpha_i}$ as a product of maximal
  ideals of $k[\C^0]$.
\end{lemma}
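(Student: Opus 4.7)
The plan is to reduce to the fact that a one-dimensional regular local ring is a DVR, via localization at each point of $V(I)$.

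First I would observe that the hypothesis $I + \langle \partial q/\partial X,\partial q/\partial Y\rangle = k[\C^0]$ means exactly that no maximal ideal of $k[\C^0]$ contains both $I$ and the Jacobian ideal. Since $k$ is perfect, the Jacobian criterion guarantees that a maximal ideal $\mathfrak m$ corresponds to a smooth point of $\C^0$ iff it does not contain the Jacobian ideal. Hence every maximal ideal $\mathfrak m\supseteq I$ is smooth, and the localization $k[\C^0]_{\mathfrak m}$ is a one-dimensional regular local ring, i.e. a discrete valuation ring.

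Next I would use that $k[\C^0]$ is a one-dimensional integral domain (since $q$ is absolutely irreducible by assumption), so the quotient $k[\C^0]/I$ is a zero-dimensional Noetherian ring; hence the set of maximal ideals containing $I$ is finite, say $\{\mathfrak m_1,\ldots,\mathfrak m_\ell\}$. For each $i$, since $k[\C^0]_{\mathfrak m_i}$ is a DVR and $I k[\C^0]_{\mathfrak m_i}$ is a nonzero proper ideal, we can write $I k[\C^0]_{\mathfrak m_i} = \mathfrak m_i^{\alpha_i} k[\C^0]_{\mathfrak m_i}$ for a unique integer $\alpha_i\geq 1$.

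Then I would argue that $I = \bigcap_{i=1}^\ell \mathfrak m_i^{\alpha_i}$ by checking the equality locally at every maximal ideal of $k[\C^0]$: at each $\mathfrak m_i$, both sides localize to $\mathfrak m_i^{\alpha_i}k[\C^0]_{\mathfrak m_i}$; at any other maximal ideal $\mathfrak m$, neither $I$ nor any $\mathfrak m_j^{\alpha_j}$ is contained in $\mathfrak m$, so both sides become the full local ring. Finally, since the maximal ideals $\mathfrak m_1,\ldots,\mathfrak m_\ell$ are pairwise distinct, their powers $\mathfrak m_i^{\alpha_i}$ are pairwise comaximal, and the Chinese Remainder Theorem yields $\bigcap_i \mathfrak m_i^{\alpha_i} = \prod_i \mathfrak m_i^{\alpha_i}$, giving the desired factorization. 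Uniqueness of the $\mathfrak m_i$'s and $\alpha_i$'s is immediate by localizing at each maximal ideal of the purported factorization.

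The only subtle point I expect is verifying that the Jacobian criterion applies at every maximal ideal containing $I$ (including those with non-trivial residue field extensions), but this is standard under the assumption that $k$ is perfect, and the rest of the argument is routine commutative algebra once the DVR property is in hand.
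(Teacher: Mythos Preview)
Your proposal is correct and follows essentially the same route as the paper: both arguments localize at the maximal ideals containing $I$, use that these localizations are DVRs to define the exponents $\alpha_i$, and then verify $I=\prod_i\mathfrak m_i^{\alpha_i}$ by checking the equality locally. Your version is slightly more explicit in two places---you check the localization at \emph{all} maximal ideals (not just those containing $I$) and you pass through the intersection $\bigcap_i\mathfrak m_i^{\alpha_i}$ before invoking comaximality to turn it into a product---but these are minor expository refinements rather than a different strategy.
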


\begin{proof}
First, we
prove the existence of such a factorization. Let $\mathfrak m\subset k[\C^0]$ be an ideal containing $I$. If $\mathfrak m$
is a nonsingular closed point of $\C^0$, then the local ring
$k[\C^0]_{\mathfrak m}$ is a discrete valuation ring by~\cite[Sec.~3.2,
Thm.~1]{fulton2008algebraic}.
Let $\val_{\mathfrak m}(I)\in\mathbb Z_{\geq 0}$ denote the integer such that
$I=\mathfrak m^{\val_{\mathfrak m}(I)}$ in this local ring.
Let $J$ be the ideal
$J = \displaystyle\prod_{\mathfrak m\supset I}\mathfrak m^{\val_{\mathfrak
m}(I)}.$
By~\cite[Prop.~9.1]{atiyah1969introduction}, the equality
$I = J$
holds if and only if it holds in all the local rings
$k[\C^0]_{\mathfrak m}$ where $\mathfrak m\supset I$. Since the maximal ideals
$\mathfrak m$
are nonsingular closed points of $\C^0$, this equality holds true because
the corresponding local rings are discrete valuation rings, hence the equality of ideals is equivalent to
the equality of their $\mathfrak m$-valuation.

We now prove the unicity of this factorization: By contradiction, assume that
$I$ has two distinct factorizations
$\prod_{1\leq i\leq \ell} \mathfrak m_i^{\alpha_i}$ and
$\prod_{1\leq i\leq \ell'}
\mathfrak m_i'^{\alpha_i'}$ of the ideal. Without loss of generality, assume that $\mathfrak m_1$ does not occur in the second factorization, or that
it appears with a different multiplicity. This would lead to a contradiction
since it would lead to distinct valuations of the same ideal in the local ring
at $\mathfrak m_1$. \end{proof}

An ideal $I\subset k[X,Y]$ in a polynomial ring is
called $0$-dimensional if the dimension of $k[X,Y]/I$ as a $k$-vector
space is finite. The following lemma identifies values of $\lambda$
for which $\lambda X+Y$ is not a primitive element for a $0$-dimensional
algebraic set.

\begin{lemma}\label{lem:primeltbidegbound}
Let $I\subset k[X, Y]$ be a radical
$0$-dimensional ideal, with associated variety $V = \{\alpha_i\}_{1\leq i\leq
\dim_k(k[X,Y]/I)}\subset\overline k^2$ and let
$u,v\in k[X,Y]$ be elements such that for any distinct points
$\alpha_i,\alpha_j\in V$, $u(\alpha_i)\ne u(\alpha_j)$ or
$v(\alpha_i)\ne v(\alpha_j)$. Then the set of $\lambda\in k$ such that
$\lambda u + v$ is not a primitive element for $k[X,Y]/I$ is contained in the set of roots of
a nonzero univariate polynomial with coefficients in $k$ of degree $\binom{\dim_k(k[X,Y]/I)}2$.
\end{lemma}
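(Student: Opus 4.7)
The plan is to translate the failure of $\lambda u + v$ to be primitive into a pairwise condition on the values at the $\alpha_i$ via Lemma~\ref{lem:primeltdef}, bound the number of bad $\lambda$'s by counting pairs, and then use Galois invariance to produce a polynomial over $k$.

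Set $n = \dim_k(k[X,Y]/I)$. Since $I$ is radical and $\lvert V\rvert = n$, evaluation at the $\alpha_i$ induces an isomorphism of $\overline k$-algebras $(k[X,Y]/I)\otimes_k\overline k \xrightarrow{\sim} \overline k^n$; under this identification, multiplication by any element $z$ becomes the diagonal endomorphism with entries $z(\alpha_1),\ldots,z(\alpha_n)$, so its characteristic polynomial over $\overline k$ is $\prod_{i=1}^n(T-z(\alpha_i))$. By Lemma~\ref{lem:primeltdef}(4), $z$ is a primitive element if and only if this polynomial is squarefree, that is, iff the $n$ values $z(\alpha_i)$ are pairwise distinct.

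Specializing to $z = \lambda u + v$, the bad $\lambda$'s are exactly those for which there exist $i \ne j$ with
$$\lambda\bigl(u(\alpha_i) - u(\alpha_j)\bigr) + \bigl(v(\alpha_i) - v(\alpha_j)\bigr) = 0.$$
For each unordered pair $\{i,j\}$, the separation hypothesis ensures this equation has at most one solution in $\overline k$: a unique value of $\lambda$ when $u(\alpha_i)\ne u(\alpha_j)$, and no solution when $u(\alpha_i)=u(\alpha_j)$ (since then $v(\alpha_i)\ne v(\alpha_j)$ makes the equation inconsistent). Hence the set $B \subset \overline k$ of bad $\lambda$'s satisfies $\lvert B\rvert \le \binom{n}{2}$.

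Finally, primitivity of $\lambda u + v$ in $k[X,Y]/I$ is a $k$-intrinsic notion, so $B$ is stable under the action of $\Gal(\overline k/k)$, which forces $Q_0(\lambda) := \prod_{b\in B}(\lambda - b)$ to lie in $k[\lambda]$. This $Q_0$ is nonzero, has degree $\lvert B\rvert \le \binom{n}{2}$, and its roots contain every bad $\lambda \in k$. Padding $Q_0$ by an arbitrary factor of degree $\binom{n}{2}-\lvert B\rvert$ in $k[\lambda]$ yields a polynomial of the required exact degree $\binom{n}{2}$. The main subtlety is precisely this descent to $k$: the naive product $\prod_{i<j}\bigl(\lambda(u(\alpha_i)-u(\alpha_j))+(v(\alpha_i)-v(\alpha_j))\bigr)$ depends on the ordering of the $\alpha_i$ up to a global sign and a priori lives only in $\overline k[\lambda]$, whereas working with the unordered Galois-invariant set $B$ makes the argument clean.
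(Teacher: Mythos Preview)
Your proof is correct and follows essentially the same approach as the paper's: both use the evaluation isomorphism to $\overline k^n$, reduce primitivity to pairwise distinctness of the values $\lambda u(\alpha_i)+v(\alpha_i)$ via Lemma~\ref{lem:primeltdef}, and then descend to $k$ by Galois invariance. The paper descends by forming the product $\Delta = \prod_{i<j}\bigl(\lambda(u(\alpha_i)-u(\alpha_j))+v(\alpha_i)-v(\alpha_j)\bigr)$ directly and arguing it is Galois-fixed, whereas you take the monic polynomial vanishing on the Galois-stable set $B$ of bad values; as you observe, your route sidesteps the sign ambiguity that arises when a Galois element acts on $\Delta$ by an odd permutation of the $\alpha_i$.
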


\begin{proof}
  Writing $\alpha_i=(\alpha_{i,x},\alpha_{i,y})$, let $K\subset \overline k$ be a field extension of $k$ where $I$ factors as a product of
  degree-$1$ maximal ideals $\mathfrak m_i=\langle X-\alpha_{i,x},
   Y-\alpha_{i,y}\rangle$. This provides an isomorphism of
  $K$-algebras between $K[X,Y]/I$ and $K^{\dim_k (k[X_1,X_2]/I)}$ sending polynomials to their evaluations
  at $\alpha_1,\ldots, \alpha_{\dim_k(k[X,Y]/I)}$.
Using this isomorphism, and letting $\mathbf e_i$ denote the $i$-th canonical vector in $K^{\dim_k (k[X_1,X_2]/I)}$,
we observe that $\mathbf e_i$ is an eigenvector of the endomorphism of
multiplication by $\lambda u+v$, with associated eigenvalue $\lambda
u(\alpha_i)+v(\alpha_i)$. Next, $\lambda u+v$ is a primitive element for
$k[X,Y]/I$ if all these eigenvalues are distinct by
Lemma~\ref{lem:primeltdef}. This is the case if and only if the discriminant of
the characteristic polynomial is nonzero. Since the discriminant is the
product of the squared differences of the roots, it equals
$$\prod_{\substack{1\leq i\leq \dim_k(k[X,Y]/I)\\1\leq j<i}} (\lambda
(u(\alpha_i)-u(\alpha_j)) + v(\alpha_i) - v(\alpha_j))^2.$$
This discriminant is a polynomial in $k[\lambda]$ of degree
$2\binom{\dim_k(k[X,Y]/I)}2$. It is nonzero because of the assumption that for any
distinct $\alpha_i, \alpha_j$, either $u(\alpha_i)\ne u(\alpha_j)$ or
$v(\alpha_i)\ne v(\alpha_j)$. Let $\Delta(\lambda)=\prod_{1\leq j<i} (\lambda
(u(\alpha_i)-u(\alpha_j)) + v(\alpha_i) - v(\alpha_j))\in\overline k[\lambda]$ be its squareroot.
It remains to prove that the polynomial $\Delta$ has coefficients in $k$. This
is due to the fact that automorphisms in $\Gal(K/k)$
permute the points $\alpha_i$. Therefore the natural action of $\Gal(K/k)$
acts on $\Delta$ by permuting its linear factors, and hence it leaves $\Delta$ invariant.
\end{proof}

In the following lemma, the notation $\red(R)$ stands for the quotient of a ring
$R$ by its Jacobson radical (i.e. the intersection of its maximal ideals).
If $I$ is an ideal of $R$, we use the notation $\sqrt I$ to denote the
radical of $I$. The
ring $\red(k[\C^0]/J)$ can be thought of as the coordinate ring
of the $0$-dimensional algebraic set corresponding to the points in the support
of the effective divisor associated to $J$.

\begin{lemma}\label{lem:existence_primrepr_aux}
  Let $J$ be a nonzero ideal of $k[\C^0]=k[X,Y]/q(X,Y)$ such that
  $J+\langle \partial q/\partial X, \partial q/\partial
  Y\rangle = k[\C^0]$. 
  Then there exists a nonzero univariate polynomial $\Delta$ with coefficients
  in $k$ of degree at most $\binom{\dim_k(k[\C^0]/J)+1}2$, such that for any
  $\lambda$ which is not a root of $\Delta$, the element $\lambda
  X+Y$ is primitive for $\red(k[\C^0]/J)$ and
  $\partial q/\partial X-\lambda\partial q/\partial Y$ is
  invertible in $\red(k[\C^0]/J)$. 
  
  In particular, if $k$ has cardinality larger
  than $\binom{\dim_k(k[\C^0]/J)+1}2$, then there exists a value of $\lambda$
  in $k$ which is not a root of $\Delta$.
\end{lemma}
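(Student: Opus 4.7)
The plan is to decompose the conclusion into two independent genericity conditions on $\lambda$, produce one exclusion polynomial for each, and take the product. Set $R=\red(k[\C^0]/J)$ and $n=\dim_k R$. Because $J+\langle\partial q/\partial X,\partial q/\partial Y\rangle=k[\C^0]$, the ideal $\sqrt J$ also contains $J$ and vanishes only at smooth closed points of $\C^0$; since $k$ is perfect, $R=k[\C^0]/\sqrt J$ is a finite \'etale $k$-algebra, and lifting $\sqrt J$ to $k[X,Y]$ gives a radical $0$-dimensional ideal $I$ with $k[X,Y]/I\cong R$ and variety $V=\{\alpha_1,\ldots,\alpha_n\}\subset\overline k^2$. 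Note $n\le \dim_k(k[\C^0]/J)$.

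First I would handle primitivity. The elements $u=X$ and $v=Y$ in $k[X,Y]$ obviously separate the points of $V$ (they are literally the coordinates). Lemma~\ref{lem:primeltbidegbound} then produces a nonzero polynomial $\Delta_1\in k[\lambda]$ of degree $\binom{n}{2}$ whose non-roots are values of $\lambda$ for which $\lambda X+Y$ is primitive in $R$.

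Next I would handle invertibility of $g_\lambda:=\partial q/\partial X-\lambda\,\partial q/\partial Y$ in $R$. Since $R$ is a product of fields, $g_\lambda$ is invertible iff its image is nonzero in every factor, i.e.\ iff $g_\lambda(\alpha_i)\ne 0$ for every $i$. The natural candidate is the norm
\[
\Delta_2(\lambda)=N_{R/k}(g_\lambda)\in k[\lambda],
\]
or equivalently the determinant of the $k[\lambda]$-linear map of multiplication by $g_\lambda$ on $R\otimes_k k[\lambda]$. Passing to $\overline k$, this factors as $\prod_i (\partial q/\partial X(\alpha_i)-\lambda\,\partial q/\partial Y(\alpha_i))$, so $\deg \Delta_2\le n$, and the key point is that $\Delta_2$ is nonzero as a polynomial: at each $\alpha_i$, smoothness (which follows from $J+\langle\partial q/\partial X,\partial q/\partial Y\rangle=k[\C^0]$) forbids the linear factor from being identically zero. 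The non-roots of $\Delta_2$ are exactly the $\lambda$ for which $g_\lambda$ is invertible in $R$.

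Finally I would set $\Delta=\Delta_1\Delta_2$, which lies in $k[\lambda]$ and has degree at most $\binom{n}{2}+n=\binom{n+1}{2}\le\binom{\dim_k(k[\C^0]/J)+1}{2}$; any $\lambda\in k$ that is not a root of $\Delta$ meets both requirements, and the last sentence of the statement then follows because a nonzero univariate polynomial of degree $<|k|$ cannot vanish everywhere on $k$. The only step requiring real care is verifying that $\Delta_2$ is nonzero and has coefficients in $k$; Galois invariance gives the latter and the smoothness hypothesis gives the former, but one must check that no factor $\partial q/\partial X(\alpha_i)-\lambda\,\partial q/\partial Y(\alpha_i)$ degenerates to zero—this is precisely where the assumption on $J$ is used.
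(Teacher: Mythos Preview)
Your proposal is correct and follows essentially the same approach as the paper: split into a primitivity polynomial $\Delta_1$ via Lemma~\ref{lem:primeltbidegbound} and an invertibility polynomial $\Delta_2$ built from norms, then take the product. The only cosmetic difference is that you use the single norm $N_{R/k}(g_\lambda)$ while the paper decomposes $R$ into residue fields $\kappa_i$ and takes $\prod_i N_{\kappa_i/k}(g_\lambda)$; these are the same polynomial, and the degree bookkeeping and nonvanishing arguments coincide.
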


\begin{proof}
  First, notice that $k[\C^0]/J$ is isomorphic to $k[X,Y]/(J+\langle
  q\rangle)$, by using the classical fact that ideals of a quotient ring $R/I$
  correspond to ideals of $R$ containing $I$. Since $q$ is irreducible, $J+\langle
  q\rangle\subset k[X,Y]$ is a zero-dimensional ideal, and hence $\red(k[X,Y]/(J+\langle
  q\rangle)) = k[X,Y]/\sqrt{J+\langle
  q\rangle}$. Notice that $\dim_k(k[X,Y]/\sqrt{J+\langle
  q\rangle}) \leq \dim_k(k[\C^0]/J)$. Next, using the fact that two distinct
  points in the variety have distinct coordinates, Lemma~\ref{lem:primeltbidegbound}
  provides a nonzero polynomial $\Delta_0$ of degree at most
  $\binom{\dim_k(k[\C^0]/J)}2$ such that $\lambda X+Y$ is not a primitive
  element for $\red(k[\C^0]/J)$ only if $\lambda$ is a root of $\Delta_0$.

  Next, we notice that since $\sqrt{J+\langle
  q\rangle}\subset k[X,Y]$ is a radical $0$-dimensional ideal, it can be decomposed as a product $\prod_{1\leq
  i\leq \ell}\mathfrak m_i$ of maximal ideals. Consequently, $\partial q/\partial X-\lambda\partial
  q/\partial Y$ is invertible in
  $\red(k[\C^0]/J)$ if and only if $\partial q/\partial X-\lambda\partial
  q/\partial Y$ does not belong to any of these
  maximal ideals. Equivalently, $\partial q/\partial X-\lambda\partial
  q/\partial Y$ must not vanish in any of the
  residue fields $\kappa_i=k[\C^0]/\mathfrak m_i$. Notice that the norm
  $N_{\kappa_i/k}(\partial q/\partial X-\lambda\partial
  q/\partial Y)$ is a polynomial $\Delta_i$ in
  $\lambda$ with coefficients in $k$. It is nonzero since
  $J+\langle \partial q/\partial X, \partial q/\partial
  Y\rangle = k[\C^0]$ and hence either $\partial q/\partial X$ or
  $\partial q/\partial Y$ is nonzero in $\kappa_i$. Therefore $\Delta_i$ is
  either constant (if $\partial
  q/\partial Y$ vanishes in $\kappa_i$), or it has degree $[\kappa_i:k]$.
  Finally, the proof is concluded by noticing that $\sum_{1\leq i\leq \ell}
  [\kappa_i:k] = \dim_k(k[X,Y]/\sqrt{J+\langle
  q\rangle}) \leq \dim_k(k[\C^0]/J)$, so that the product
  $\Delta_0\cdot\prod_{1\leq i\leq\ell}\Delta_i$ has degree at most
  $\binom{\dim_k(k[\C^0]/J)+1}2$ and satisfies all the desired properties.
\end{proof}

We now have all the tools that we need to prove
Propositions~\ref{prop:existence_primrepr}
and~\ref{prop:existence_primrepr_nodal}.

\begin{proof}[Proof of Proposition~\ref{prop:existence_primrepr}]
  First, we assume that $J=\mathfrak m^\alpha$ is a power of a maximal ideal in
  $k[\C^0]$ such that $\mathfrak m+\langle\partial q/\partial X,\partial
  q/\partial Y\rangle = k[\C^0]$. Then $\red(k[\C^0]/J) = k[\C^0]/\mathfrak m$. Let $\lambda\in k$
  be an element which is not a root of the polynomial $\Delta$ provided by
  Lemma~\ref{lem:existence_primrepr_aux}. Such an element exists
  since the cardinality of $k$ is larger than the degree of $\Delta$. Therefore, $\lambda
  X+Y$ is a primitive element for $k[\C^0]/\mathfrak m$ and hence there exist
  univariate polynomials $\widetilde u,\widetilde v\in
  k[S]$ 
  such that $X=\widetilde u(\lambda X+Y)$ and $Y=\widetilde v(\lambda X+Y)$ in $k[\C^0]/\mathfrak
  m$. Let $\widetilde \chi(S)$ be the minimal polynomial of $\lambda X+Y$ in
  $k[\C^0]/\mathfrak m$, which is irreducible since $k[\C^0]/\mathfrak m$ is a
  field. 
  Notice that the map $k[\C^0]/\mathfrak m \rightarrow k[S]/\widetilde \chi(S)$
  sending the classes of $X, Y$ to $\widetilde u, \widetilde v$ is an isomorphism of
  $k$-algebras.
  Next, set $\chi(S) =
  \widetilde\chi(S)^\alpha$ and consider the bivariate system
  \begin{equation}\label{eq:henselsys}
  \begin{cases}q(X,Y) = 0\\
  \lambda X+Y-S = 0.
\end{cases}
\end{equation}
By construction, this system has solution $(\widetilde u,\widetilde v)$ over $k[S]/\widetilde\chi(S)$. The
Jacobian of this system is $\frac{\partial
q}{\partial X}(X,Y)-\lambda\frac{\partial q}{\partial Y}(X,Y)$, which is invertible in $\red(k[\C^0]/J)$ by Lemma~\ref{lem:existence_primrepr_aux}, and therefore
$\frac{\partial
q}{\partial X}(\widetilde u(S),\widetilde v(S))-\lambda\frac{\partial
q}{\partial Y}(\widetilde u(S),\widetilde v(S))$ is invertible in $k[S]/\widetilde\chi(S)$. By Hensel's
lemma, 
there exist polynomials $u,v\in k[S]_{<\deg(\chi)}$ which are solutions of~\eqref{eq:henselsys} over $k[S]/\chi(S)$: 
Indeed, for $i>1$, if $(\widehat u, \widehat v)$ is a solution of~\eqref{eq:henselsys} over
$k[S]/\widetilde\chi(S)^i$, then a Taylor expansion of the system at order $1$ shows that
$$\begin{bmatrix}\widehat u\\
  \widehat v\end{bmatrix}-\begin{bmatrix}
	\partial q/\partial X& \partial q/\partial Y\\
	\lambda& 1
\end{bmatrix}^{-1}\cdot \begin{bmatrix}
  q(\widehat u,\widehat v)\\
  \lambda \widehat u+\widehat v-S
 \end{bmatrix}$$
 is a solution of Eq.~\eqref{eq:henselsys} over 
 $k[S]/\widetilde\chi(S)^{2i}$.

The map $k[\C^0]/J\rightarrow
k[S]/\chi(S)$ is well-defined because $\mathfrak m$ maps to $0$ modulo
$\widetilde\chi$ and hence $J=\mathfrak m^\alpha$ maps to $0$ modulo
$\chi=\widetilde\chi^\alpha$. It is an isomorphism because $k[\C^0]/J$ and
$k[S]/\chi(S)$ have the same dimension as vector spaces over $k$ and the map
$S\mapsto \lambda X+Y$ is the right inverse to the map
$(X,Y)\mapsto(u(S),v(S))$. It remains to prove that \textbf{(Div-H3)} is
  satisfied by $\chi, u, v$, which is a direct consequence of the fact that by
  Lemma~\ref{lem:existence_primrepr_aux}, $\partial q/\partial
  X-\lambda \partial q/\partial Y$ does not belong to $\mathfrak m$ and hence 
  $\frac{\partial q}{\partial X}(u(S), v(S))-\lambda \frac{\partial q}{\partial
  Y}(u(S),
    v(S))$ is invertible modulo $\chi(S)$.

  Next, we consider the general case where $J$ is a nonzero ideal in
  $k[\C^0]$ such that $J+\langle\partial q/\partial X,\partial
  q/\partial Y\rangle = k[\C^0]$. Again, let $\lambda\in k$
  be an element which is not a root of the polynomial $\Delta$ provided by
  Lemma~\ref{lem:existence_primrepr_aux}. Lemma~\ref{lem:almostdedekind} implies that $J$ can be written as a
  product $J=\prod_{i=1}^\ell \mathfrak m_i^{\alpha_i}$ of powers of maximal
  ideals. Then for all $i$, the element $\lambda X + Y$ is
primitive for $k[\C^0]/\mathfrak m$ and $\partial q/\partial X - \lambda \partial
q/\partial Y$ is invertible in $k[\C^0]/\mathfrak m$. For each $i$, using the previous argument,
  we can construct univariate polynomials $\chi_i, u_i, v_i\in k[S]$ satisfying
  \textbf{(Div-H1)} to \textbf{(Div-H3)} with respect to $\lambda$ such that
  the maps $k[\C^0]/\mathfrak m_i^{\alpha_i} \rightarrow k[S]/\chi_i(S)$
  sending $X, Y$ to $u_i(S), v_i(S)$ are isomorphisms of $k$-algebras. Setting
  $\chi(S) = \prod_{i=1}^\ell \chi_i(S)$ and using the CRT, let $u,v \in
  k[S]<\deg(\chi)$
  be such that for all $i$, we have $u(S)\equiv u_i(S)\bmod \chi_i(S)$ and
  $v(S)\equiv v_i(S)\bmod \chi_i(S)$. Then the fact that the CRT is a ring
  morphism
  allows us to conclude that the map $k[\C^0]/J \rightarrow k[S]/\chi(S)$ is an
  isomorphism and that $\chi, u, v$ satisfy \textbf{(Div-H1)} to
  \textbf{(Div-H3)}.
\end{proof}

\begin{proof}[Proof of Proposition~\ref{prop:existence_primrepr_nodal}]
The proof is similar to that of Proposition~\ref{prop:existence_primrepr},
by ignoring the argument about multiplicities. 
Since the
Jacobian $\frac{\partial q}{\partial X}(X,Y)-\lambda\frac{\partial q}{\partial
Y}(X,Y)$ need not be
invertible in $\red(k[\C^0]/J)$, it is sufficient to choose a value of
$\lambda$ which is not a root of the univariate polynomial constructed in 
Lemma~\ref{lem:primeltbidegbound} for $J$.
\end{proof}

The next lemma shows that any data satisfying \textbf{(Div-H1)} to
\textbf{(Div-H3)} actually encodes a well-defined effective divisor with no
singular point in its support.

\begin{lemma}\label{lem:divHprimelt} Let $(\lambda, \chi, u, v)$ be such that
  \textbf{(Div-H1)} to \textbf{(Div-H3)} are satisfied, and let $I=\langle
  X-u(S), Y-v(S),
  \chi(S)\rangle\cap k[X,Y]$.  Then $k[X,Y]/I$ is isomorphic as a $k$-algebra
  to $k[\C^0]/J$ where $J$ is a nonzero ideal in $k[\C^0]$.
  Moreover, $J+\langle \partial q/\partial X,\partial q/\partial
  Y\rangle=k[\C^0]$, $\lambda X+Y$ is a primitive element for
  $\red(k[X,Y]/I)$, and its minimal polynomial is the squarefree part of
  $\chi$.
\end{lemma}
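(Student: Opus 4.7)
The plan is to leverage the elimination-theoretic presentation afforded by the primitive element. The $k$-algebra $k[X,Y,S]/\langle X-u(S),\,Y-v(S),\,\chi(S)\rangle$ is isomorphic to $k[S]/\chi(S)$ via the morphism sending $S\mapsto S$, $X\mapsto u(S)$, $Y\mapsto v(S)$: indeed, $X-u(S)$ and $Y-v(S)$ are plainly a regular sequence eliminating the variables $X,Y$. Consequently, $I$ is precisely the kernel of the composed map $\varphi\colon k[X,Y]\to k[S]/\chi(S)$ sending $X\mapsto u(S)$, $Y\mapsto v(S)$, and $k[X,Y]/I$ embeds into $k[S]/\chi(S)$. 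The image contains $S = \lambda u(S)+v(S)$ by \textbf{(Div-H2)}, so the embedding is in fact an isomorphism $k[X,Y]/I \cong k[S]/\chi(S)$.

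Next I would show that $I$ contains $q(X,Y)$, which is immediate from \textbf{(Div-H1)}: $\varphi(q) = q(u(S),v(S))\equiv 0\bmod\chi(S)$. Hence $J := I/\langle q\rangle$ is a well-defined ideal of $k[\C^0]$, and $k[X,Y]/I\cong k[\C^0]/J$. Since $k[S]/\chi(S)$ has finite $k$-dimension equal to $\deg(\chi)>0$, while $k[\C^0]$ is infinite-dimensional, $J$ must be nonzero.

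For the Jacobian condition, I would translate \textbf{(Div-H3)} through $\varphi$: the hypothesis says that $\varphi\bigl(\partial q/\partial X - \lambda\,\partial q/\partial Y\bigr)$ is a unit in $k[S]/\chi(S)$ (its generator in $k[S]$ being coprime to $\chi$). Pulling back through the isomorphism, $\partial q/\partial X - \lambda\,\partial q/\partial Y$ is invertible in $k[\C^0]/J$, whence $J + \langle \partial q/\partial X - \lambda\,\partial q/\partial Y\rangle = k[\C^0]$, and a fortiori $J + \langle \partial q/\partial X,\,\partial q/\partial Y\rangle = k[\C^0]$.

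Finally, for primitivity and the minimal polynomial, I would note again that \textbf{(Div-H2)} gives $\varphi(\lambda X+Y)=S$, so through the isomorphism $k[X,Y]/I\cong k[S]/\chi(S)$ the element $\lambda X+Y$ corresponds to $S$, which manifestly generates $k[S]/\chi(S)$ as a $k$-algebra. Passing to the Jacobson radical, $\red(k[S]/\chi(S)) = k[S]/\chi_{\mathrm{sqf}}(S)$ where $\chi_{\mathrm{sqf}}$ denotes the squarefree part of $\chi$ (the nilpotents in $k[S]/\chi(S)$ are exactly the classes of multiples of $\chi_{\mathrm{sqf}}$). Thus $\lambda X+Y$ is primitive for $\red(k[X,Y]/I)$ by Lemma~\ref{lem:primeltdef}, and its minimal polynomial is $\chi_{\mathrm{sqf}}$. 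The argument is a direct dictionary translation through $\varphi$; no step presents a genuine obstacle, the only subtlety being that primitivity is asserted for the reduced ring, which forces the identification of the squarefree part of $\chi$ with the minimal polynomial rather than $\chi$ itself.
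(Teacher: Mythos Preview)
Your proof is correct and follows essentially the same route as the paper's: both establish $q\in I$ from \textbf{(Div-H1)}, deduce the Jacobian condition from \textbf{(Div-H3)}, and identify $\lambda X+Y$ with $S$ via \textbf{(Div-H2)} to read off primitivity and the minimal polynomial from $\red(k[S]/\chi(S))=k[S]/\chi_{\mathrm{sqf}}(S)$. The only notable difference is in showing $J\ne 0$: the paper exhibits the explicit witness $\chi(\lambda X+Y)\in I\setminus\langle q\rangle$ (using absolute irreducibility of $q$), whereas your dimension argument ($k[\C^0]$ infinite-dimensional, $k[S]/\chi(S)$ finite-dimensional) is a clean shortcut that avoids invoking irreducibility directly.
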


\begin{proof}
  Nonzero ideals of $k[\C^0]$ correspond to ideals of $k[X,Y]$ containing
  properly the
  principal ideal $\langle q(X,Y)\rangle$. First, notice that \textbf{(Div-H1)}
  implies that $q(X,Y)\in I$. Also, by \textbf{(Div-H2)}, we get that
  $\chi(\lambda X+Y)\in I$. Notice that $\chi(\lambda X+Y)$ factors as a product of
  polynomials of degree $1$ over the algebraic closure of $k$. Since $q$ is supposed to be
  absolutely irreducible and to have degree at least $2$, this implies that
  $\chi(\lambda X+Y)$ does not belong to the principal ideal $\langle
  q(X,Y)\rangle$. Consequently, $I$ contains properly $\langle q(X,Y)\rangle$
  and this proves the isomorphism between $k[X,Y]/I$ and $k[\C^0]/J$. In
  particular, we obtain that $\dim_k(k[\C^0]/J) = \dim_k(k[X,Y]/I)=\deg(\chi)$.
  Next, \textbf{(Div-H3)} implies that $\partial q/\partial X -
  \lambda\partial q/\partial
  Y$ is invertible in $k[\C^0]/J$, and hence $k[\C^0] = J+\langle\partial q/\partial X -
  \lambda\partial q/\partial
  Y\rangle\subset J+\langle \partial q/\partial X,\partial q/\partial
  Y\rangle$. Therefore, $J+\langle \partial q/\partial X,\partial q/\partial
  Y\rangle=k[\C^0]$. Using the
  isomorphism between $k[\C^0]/J$ and $k[S]/\chi(S)$ described in
  Proposition~\ref{prop:existence_primrepr}, we obtain that $\red(k[X,Y]/I)$ is
  isomorphic to $\red(k[S]/\chi(S))$, which is in turn isomorphic to
  $k[S]/\widetilde\chi(S)$, where $\widetilde\chi(S)$ is the squarefree part of
  $\chi(S)$. Finally, the proof is concluded by noticing that $S$ is a primitive
  element for $k[S]/\widetilde\chi(S)$ with minimal polynomial
  $\widetilde\chi(S)$.
\end{proof}

The following lemma explicits the link between the primitive element
representation and the ideal vanishing on the $0$-dimensional algebraic set
that it represents.

\begin{lemma}\label{lem:techrepr}
  Let $(\lambda, \chi, u, v)$ be data satisfying \textbf{(Div-H2)}. Set $I=\langle \chi(S), X-u(S), Y-v(S)\rangle\subset k[X, Y,
S]$ and $J = \langle \chi(\lambda
X+Y), X-u(\lambda X+Y), Y-v(\lambda X+Y)\rangle$. Then
$I\cap k[X,Y] = J$.
\end{lemma}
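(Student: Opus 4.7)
My plan is to prove the equality $I\cap k[X,Y] = J$ by two opposite inclusions. The whole argument hinges on exploiting the identity $\lambda u(S)+v(S)=S$ coming from \textbf{(Div-H2)}, which says that modulo the generators of $I$, the element $\lambda X+Y$ can be identified with $S$.

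For the inclusion $J\subseteq I\cap k[X,Y]$, each generator of $J$ already lies in $k[X,Y]$, so it suffices to check each generator belongs to $I$. Working modulo $I$ we have $X\equiv u(S)$ and $Y\equiv v(S)$, so by \textbf{(Div-H2)} the linear form $\lambda X+Y$ is congruent to $\lambda u(S)+v(S)=S$. Substituting into $\chi$, $u$, $v$ then gives $\chi(\lambda X+Y)\equiv \chi(S)\equiv 0$, $u(\lambda X+Y)\equiv u(S)\equiv X$, and $v(\lambda X+Y)\equiv v(S)\equiv Y$ modulo $I$. This immediately places all three generators of $J$ in $I$.

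For the reverse inclusion, I would introduce the $k[X,Y]$-algebra homomorphism $\varphi\colon k[X,Y,S]\to k[X,Y]$ defined by sending $S$ to $\lambda X+Y$ and fixing $X,Y$. Under $\varphi$, the three generators of $I$ map directly to the three generators of $J$, so $\varphi(I)\subseteq J$. Now take any $f\in I\cap k[X,Y]$; since $\varphi$ restricts to the identity on $k[X,Y]$, we have $f=\varphi(f)\in\varphi(I)\subseteq J$, which closes the proof.

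I expect no real obstacle here: the argument is essentially a formal manipulation of generators, and there is no subtlety involving dimensions, multiplicities, or the hypotheses \textbf{(Div-H1)} or \textbf{(Div-H3)}. The only lemma-specific input is \textbf{(Div-H2)}, which is exactly what makes the substitution $S\mapsto \lambda X+Y$ compatible with the parametric description $X=u(S)$, $Y=v(S)$. The symmetric role played by the two descriptions — parametric via $S$ versus implicit via $\lambda X+Y$ — is what makes the elimination ideal computable without invoking Gr\"obner bases.
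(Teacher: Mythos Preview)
Your proof is correct and follows essentially the same approach as the paper. The paper's argument is more compressed: it observes that $S-(\lambda X+Y)\in I$ (exactly your congruence $\lambda X+Y\equiv S\bmod I$) and then concludes directly that $I\cap k[X,Y]=\{f(X,Y,\lambda X+Y)\mid f\in I\}$, which is your retraction $\varphi$ applied to $I$; your two-inclusion structure makes the same reasoning explicit.
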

\begin{proof}
  By \textbf{(Div-H2)} and by using the fact that $X-u(S), Y-v(S)\in I$, we
  deduce that $S-(\lambda X+Y)\in I$. This implies that $I\cap
  k[X,Y]=\{f(X,Y,\lambda X+Y)\mid f\in I\}$.
\end{proof}

The primitive element representation of an effective divisor is not unique: Two tuples $(\lambda_1, \chi_1, u_1, v_1)$ and 
$(\lambda_2, \chi_2, u_2, v_2)$ may encode the same effective
divisor. The cases where this happens are detailed in the following
proposition.

\begin{proposition}\label{prop:primelteq}
Let $(\lambda_1,
\chi_1, u_1, v_1), (\lambda_2,
\chi_2, u_2, v_2)\in k\times k[S]^3$ be data which satisfy \textbf{(Div-H2)}.
Let $I_1, I_2\subset k[X, Y, S]$ be the associated ideals $I_1 =
\langle \chi_1(S), X - u_1(S), Y-v_1(S)\rangle, I_2 =
\langle \chi_2(S), X - u_2(S), Y-v_2(S)\rangle$.

Then $I_1\cap k[X,Y] = I_2\cap k[X,Y]$ if and only if $\chi_1$ is
the characteristic polynomial of $\lambda_1u_2+
v_2$ in $k[S]/\chi_2(S)$, $u_1(\lambda_1 u_2(S)+
v_2(S))\equiv u_2(S)\bmod \chi_2(S)$ and $v_1(\lambda_1 u_2(S)+
v_2(S))\equiv v_2(S)\bmod \chi_2(S)$.
\end{proposition}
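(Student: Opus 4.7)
The plan is to set up natural $k$-algebra maps $\phi_i\colon k[X,Y]\to k[S]/\chi_i(S)$ sending $X\mapsto u_i(S)$, $Y\mapsto v_i(S)$, and to reformulate the equality of ideals as a coincidence of their kernels. By Lemma~\ref{lem:techrepr}, $\ker(\phi_i) = I_i\cap k[X,Y]$. Moreover, \textbf{(Div-H2)} shows that $S = \lambda_i u_i(S)+v_i(S)$ lies in the image of $\phi_i$, so each $\phi_i$ is surjective and induces an isomorphism $\overline\phi_i\colon k[X,Y]/(I_i\cap k[X,Y])\xrightarrow{\sim} k[S]/\chi_i(S)$. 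Set $w(S) = \lambda_1 u_2(S)+v_2(S)\in k[S]/\chi_2(S)$.

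For the forward direction, suppose $I_1\cap k[X,Y]=I_2\cap k[X,Y]$. Then $\overline\phi_2\circ\overline\phi_1^{-1}$ gives a $k$-algebra isomorphism $\psi\colon k[S]/\chi_1(S)\to k[S]/\chi_2(S)$ characterized by $\psi(u_1(S))=u_2(S)$ and $\psi(v_1(S))=v_2(S)$. Using \textbf{(Div-H2)} for the first tuple, $\psi(S)=\lambda_1 u_2(S)+v_2(S)=w(S)$, so for any polynomial $P$ we have $\psi(P(S))=P(w(S))$; applied to $P=u_1,v_1$ this yields $u_1(w(S))\equiv u_2(S)$ and $v_1(w(S))\equiv v_2(S) \bmod \chi_2(S)$. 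Finally, the endomorphism of multiplication by $S$ on $k[S]/\chi_1(S)$ has companion-matrix characteristic polynomial $\chi_1$; conjugation by $\psi$ turns it into multiplication by $w(S)$ on $k[S]/\chi_2(S)$, which therefore has the same characteristic polynomial $\chi_1$.

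For the converse, assume the three stated conditions. Since $\chi_1$ is the characteristic polynomial of multiplication by $w(S)$ on $k[S]/\chi_2(S)$, Cayley--Hamilton gives $\chi_1(w(S))\equiv 0 \bmod \chi_2(S)$, so the evaluation map $\psi\colon k[S]/\chi_1(S)\to k[S]/\chi_2(S)$ sending $S\mapsto w(S)$ is a well-defined $k$-algebra homomorphism. The hypotheses on $u_1,v_1$ give $\psi\circ\phi_1=\phi_2$ on the generators $X,Y$, hence everywhere. This immediately yields $\ker(\phi_1)\subseteq\ker(\phi_2)$. To get equality I would show $\psi$ is an isomorphism: the image contains $\psi(u_1(S))=u_2(S)$ and $\psi(v_1(S))=v_2(S)$, so by \textbf{(Div-H2)} for the second tuple it contains $\lambda_2 u_2(S)+v_2(S)=S$, proving surjectivity; and $\dim_k k[S]/\chi_1(S)=\deg(\chi_1)=\deg(\chi_2)=\dim_k k[S]/\chi_2(S)$ because $\chi_1$ is a characteristic polynomial on the latter space. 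A surjection between equidimensional finite $k$-vector spaces is bijective, so $\psi$ is a $k$-algebra isomorphism, and $\ker(\phi_1)=\ker(\phi_2)$ as required.

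The only mildly delicate point is the forward implication, where one must extract $\psi$ from the abstract equality of kernels and then identify the characteristic polynomial cleanly; everything else is bookkeeping, with (Div-H2) doing the crucial job of forcing $\psi$ to send $S$ to $\lambda_1 u_2+v_2$ and guaranteeing surjectivity in the converse.
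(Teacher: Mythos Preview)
Your proof is correct and follows essentially the same approach as the paper: both arguments hinge on the composite isomorphism $k[S]/\chi_1(S)\to k[X,Y]/(I_i\cap k[X,Y])\to k[S]/\chi_2(S)$ sending $S\mapsto \lambda_1 u_2+v_2$, use Cayley--Hamilton for well-definedness in the converse, and invoke the dimension equality $\deg\chi_1=\deg\chi_2$ to upgrade an inclusion to an equality. Your packaging via the explicit maps $\phi_i$ and $\psi=\overline\phi_2\circ\overline\phi_1^{-1}$ is slightly cleaner than the paper's more computational presentation of the ``if'' direction, but the underlying ideas are identical.
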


\begin{proof}We first prove
  the ``if'' part of the statement. 
  First, we notice that $k[X, Y]/(I_1\cap k[X,Y])$ and
  $k[X,Y]/(I_2\cap k[X,Y])$ are $k$-vector space of the same finite dimension,
  since $\deg(\chi_1)$ must equal $\deg(\chi_2)$.
  Therefore it is enough to show one inclusion to prove the equality. Let $f(X, Y)\in I_1\cap k[X,Y]$.
  Using the
  equalities modulo $\chi_2$, we obtain that $f(X, Y)\equiv f(u_1(\lambda_1 u_2(S)+
v_2(S)), v_1(\lambda_1 u_2(S)+
v_2(S)))\bmod I_2$, which is divisible by $\chi_1(\lambda_1 u_2(S)+
v_2(S))$ because $f$ is in $I_1$ and by using Cayley-Hamilton theorem. Finally, we use the fact that
$\chi_1(S)$ is the characteristic polynomial of $\lambda_1 u_2(S)+
v_2(S)$ and hence $\chi_2$ divides $\chi_1(\lambda_1 u_2(S)+
v_2(S))$, which finishes to prove that $f\in I_2$.

Conversely, assume
that $I_1\cap k[X,Y] = I_2\cap k[X,Y]$.
By composing the isomorphisms
$$\begin{array}[t]{ccc}
  k[S]/\chi_1(S)&\rightarrow& k[X,Y]/(I_1\cap k[X,Y])\\
  S&\mapsto&\lambda_1 X+Y
\end{array}\quad
\begin{array}[t]{ccc}
  k[X,Y]/(I_2\cap k[X,Y])&\rightarrow&k[S]/\chi_2(S)\\
  X&\mapsto&u_2(S)\\
  Y&\mapsto&v_2(S)
\end{array}$$
we obtain that the map $k[S]/\chi_1(S)\rightarrow k[S]/\chi_2(S)$ which sends
$S$ to $\lambda_1 u_2(S)+v_2(S)$ is an isomorphism. This proves that $\chi_1$
is the characteristic polynomial of $\lambda_1 u_2(S)+v_2(S)$ in
$k[S]/\chi_2(S)$.
To prove the two congruence relations, we observe that for all 
$f\in k[X, Y]$, $f(u_1, v_1)\equiv 0\bmod \chi_1$ if and only if
$f(u_2, v_2)\equiv 0\bmod \chi_2$. In particular, the polynomial
$P(X, Y) = u_1(\lambda_1 X + Y) - X$ satisfies
$P(u_1, v_1) \equiv 0\bmod\chi_1$, and hence $P(u_2, v_2)\equiv 0\bmod \chi_2$. The
proof of the last congruence relation is similar.
\end{proof}

\section{Divisor arithmetic for smooth divisors}\label{sec:div_arith}

The first step to perform arithmetic operations on smooth divisors given by
primitive element representations is to agree on a common primitive element. In
order to achieve this, the routine \textsc{ChangePrimElt}
(Algorithm~\ref{algo:changeprim}) performs the necessary change of primitive
element by using linear algebra.  We will prove in
Propositions~\ref{prop:cost_changeprim} and \ref{prop:num_basis} that the complexity
of this step is the same as the complexity of the subroutine
\textsc{NumeratorBasis} in the main
algorithm. Therefore, decreasing the complexity of \textsc{ChangePrimElt} would
not change the global complexity and hence we make no
effort to optimize it, although it might be possible to obtain a better complexity for this step by using
a method similar to \cite[Algo.~5]{giusti2001grobner}.

Throughout this paper, for $d>0$ we let $k[S]_{<d}$ denote the vector space of
univariate polynomials with coefficients in $k$ of degree less than $d$.

\begin{algorithm} 
  \textbf{Function} \textsc{ChangePrimElt}\;
  \KwData{A scalar $\widetilde\lambda\in k$ and a primitive element
    representation $(\lambda, \chi, u, v)$ of a smooth
    effective divisor $D$.}
  \KwResult{Univariate polynomials $(\widetilde\chi, \widetilde
u, \widetilde v)$ such that $(\widetilde\lambda, \widetilde\chi, \widetilde
u, \widetilde v)$ is a primitive element representation of $D$ or ``fail''.}
\If{$\GCD(\frac{\partial q}{\partial X}(u(S),v(S))-\widetilde\lambda \frac{\partial q}{\partial Y}(u(S),
    v(S)),\chi(S)) \ne 1$}{Return ``fail''.}
  $M\gets$ $\deg(\chi)\times\deg(\chi)$ matrix representing the linear map
	$\varphi:k[S]_{<\deg(\chi)}\rightarrow k[S]_{<\deg(\chi)}$ such that
	$\varphi(f)(S)\equiv f(S)\cdot(\widetilde\lambda u(S)+ v(S))\bmod
	\chi(S)$\;
$\widetilde\chi\gets$ \textsc{CharacteristicPolynomial}($M$)\;
  $N\gets$ $\deg(\chi)\times\deg(\chi)$ invertible matrix representing the linear map
	$\psi:k[S]_{<\deg(\widetilde\chi)}\rightarrow k[S]_{<\deg(\chi)}$ such that
	$\psi(f)(S)\equiv f(\widetilde\lambda u(S)+v(S))\bmod
	\chi(S)$\;
\If{$N$ is not invertible}{Return ``fail''.}
$\widetilde u\gets \psi^{-1}(u)$\;
$\widetilde v\gets \psi^{-1}(v)$\;
Return $(\widetilde\chi, \widetilde
u, \widetilde v)$.
  \caption{Changing the primitive element in the
    representation of a smooth effective divisor.\label{algo:changeprim}}
\end{algorithm}

\begin{proposition}\label{prop:correct_changeprim}
  Algorithm~\ref{algo:changeprim} (\textsc{ChangePrimElt}) is correct: If it does not fail, then
  $(\widetilde\lambda, \widetilde\chi, \widetilde
  u, \widetilde v)$ satisfies properties \textbf{(Div-H1)} to \textbf{(Div-H3)}
  and it represents the same effective divisor as $(\lambda, \chi, 
  u,  v)$.
\end{proposition}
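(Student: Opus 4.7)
The plan is to introduce the $k$-algebra map $\psi : k[S]/\widetilde\chi(S) \to k[S]/\chi(S)$ defined by sending $\widetilde S$ to $\widetilde\lambda u(S) + v(S)$, and to derive all three properties plus the equality of the associated divisors from the fact that $\psi$ is an isomorphism. First I would check that $\psi$ is a well-defined ring homomorphism: by construction, $\widetilde\chi$ is the characteristic polynomial of the matrix $M$ of multiplication by $\widetilde\lambda u(S) + v(S)$ in $k[S]/\chi(S)$, so the Cayley--Hamilton theorem yields $\widetilde\chi(\widetilde\lambda u + v) \equiv 0 \bmod \chi$, which is precisely what is needed for the substitution $\widetilde S \mapsto \widetilde\lambda u + v$ to descend to the quotient $k[S]/\widetilde\chi$. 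Once $\psi$ is a ring map, the matrix $N$ computed by the algorithm is exactly its matrix in the monomial bases of source and target; the invertibility test therefore certifies that $\psi$ is a $k$-linear isomorphism, hence a $k$-algebra isomorphism.

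Granting this, I would verify the three conditions by functoriality. For \textbf{(Div-H1)}, $\psi(q(\widetilde u, \widetilde v)) = q(u, v) \equiv 0 \bmod \chi$ by \textbf{(Div-H1)} of the input tuple, so the injectivity of $\psi$ gives $q(\widetilde u, \widetilde v) \equiv 0 \bmod \widetilde\chi$. For \textbf{(Div-H2)}, both $\widetilde\lambda\widetilde u + \widetilde v$ and $\widetilde S$ are mapped by $\psi$ to $\widetilde\lambda u + v$ (the former because $\widetilde u = \psi^{-1}(u)$ and $\widetilde v = \psi^{-1}(v)$, the latter by the definition of $\psi$); hence they coincide modulo $\widetilde\chi$, and because both sides have degree strictly less than $\deg(\widetilde\chi)$, the identity actually holds as polynomials. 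For \textbf{(Div-H3)}, $\psi$ sends $\partial q/\partial X(\widetilde u,\widetilde v) - \widetilde\lambda\,\partial q/\partial Y(\widetilde u,\widetilde v)$ to $\partial q/\partial X(u,v) - \widetilde\lambda\,\partial q/\partial Y(u,v)$, which is invertible modulo $\chi$ exactly because the first GCD test in the algorithm succeeded; since $\psi$ is a ring isomorphism, invertibility transports back to $k[S]/\widetilde\chi$, which is equivalent to the required GCD being~$1$.

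To conclude that the two tuples represent the same effective divisor, I would apply Proposition~\ref{prop:primelteq} with $(\lambda_1,\chi_1,u_1,v_1) = (\widetilde\lambda,\widetilde\chi,\widetilde u,\widetilde v)$ and $(\lambda_2,\chi_2,u_2,v_2) = (\lambda,\chi,u,v)$. Its three hypotheses hold directly from the construction: $\widetilde\chi$ is the characteristic polynomial of $\widetilde\lambda u(S) + v(S)$ in $k[S]/\chi(S)$ by line~3 of the algorithm, while $\widetilde u(\widetilde\lambda u + v) \equiv u$ and $\widetilde v(\widetilde\lambda u + v) \equiv v \bmod \chi$ are just the relations $\psi(\widetilde u) = u$ and $\psi(\widetilde v) = v$, which hold by definition of $\widetilde u$ and $\widetilde v$ as preimages under $\psi$.

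The main obstacle is the very first step: verifying that $\psi$ is a ring morphism before one knows it to be bijective. This is not automatic from the linear-algebra definition of $N$; it must be extracted from the identification of $\widetilde\chi$ as the characteristic polynomial of $M$ through the Cayley--Hamilton theorem, which is the reason the algorithm insists on computing $\widetilde\chi$ as a characteristic polynomial rather than by any other means. Once the ring-morphism property is established, everything else reduces to a transport of structure across $\psi$ and a direct application of Proposition~\ref{prop:primelteq}.
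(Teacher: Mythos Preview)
Your proof is correct and follows essentially the same approach as the paper's: both introduce the $k$-algebra map $k[S]/\widetilde\chi(S)\to k[S]/\chi(S)$ sending $S$ to $\widetilde\lambda u(S)+v(S)$, verify it is an isomorphism, and then transport \textbf{(Div-H1)}--\textbf{(Div-H3)} and invoke Proposition~\ref{prop:primelteq}. You are slightly more explicit than the paper in invoking Cayley--Hamilton to justify that the map is a well-defined ring homomorphism (the paper simply asserts that $\psi$ ``can be extended to an isomorphism $\Psi$ of $k$-algebras''), but this is an elaboration of the same argument rather than a different route.
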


\begin{proof}
  First, we prove that $(\widetilde\lambda, \widetilde\chi, \widetilde
  u, \widetilde v)$ satisfies Properties \textbf{(Div-H1)} to \textbf{(Div-H3)}.
  We notice that the map $\psi$ in Algorithm~\ref{algo:changeprim} can be extended to an
  isomorphism $\Psi$ of $k$-algebras between $k[S]/\widetilde\chi(S)$ and
  $k[S]/\chi(S)$.
  Property \textbf{(Div-H1)} follows from the fact that in $k[S]/\chi(S)$, we
  have $q(\widetilde u, \widetilde
  v)=q(\Psi^{-1}(u), \Psi^{-1}(v)) = \Psi^{-1}(q(u, v))=0.$ Property
  \textbf{(Div-H2)} follows from the equalities $S =
  \Psi^{-1}(\Psi(S))=\Psi^{-1}(\widetilde\lambda u+
  v)=\widetilde\lambda\,\psi^{-1}(u(S))+\psi^{-1}(v(S)) =
  \widetilde\lambda\,\widetilde u(S)+\widetilde v(S)$ in
  $k[S]/\widetilde\chi(S)$. The fact that the equality
  $\widetilde\lambda\widetilde u(S)+\widetilde v(S) = S$ also holds in $k[S]$
  is a consequence of the degree bounds $\deg(\widetilde u), \deg(\widetilde
  v)<\deg(\widetilde\chi)$.
  If the first test does not fail, then 
 $\frac{\partial q}{\partial X}(u(S),v(S))-\widetilde\lambda \frac{\partial q}{\partial Y}(u(S),
 v(S))$ is invertible modulo $\chi(S)$. Applying $\Psi^{-1}$ shows \textbf{(Div-H3)}.

  Finally, we must prove that both representations encode the same divisor. By
  Proposition~\ref{prop:primelteq}, this amounts to show that
  $$\begin{cases}
    \widetilde\chi\text{ is the characteristic polynomial of
    }\widetilde\lambda u(S)+v(S)\\
\widetilde u(\widetilde\lambda u(S)+v(S))\equiv u(S)\bmod \chi(S)\text{ and}\\ 
\widetilde v(\widetilde\lambda u(S)+
v(S))\equiv v(S)\bmod \chi(S),
  \end{cases}$$
which is again proved directly by using the isomorphism $\Psi^{-1}$.
\end{proof}

\begin{algorithm}
 \textbf{Function} \textsc{HenselLiftingStep}\;
 \KwData{A squarefree bivariate polynomial $q\in
   k[X, Y]$, $(\lambda, \chi, u, v)$ which satisfies \textbf{(Div-H1)} to
   \textbf{(Div-H3)}, and a univariate polynomial $\widehat\chi$ which divides
 $\chi^2$.}
\KwResult{Two polynomials $\widehat u, \widehat v\in k[S]_{<\deg(\widehat\chi)}$
such that $(\lambda, \widehat\chi, \widehat u, \widehat v)$ satisfies \textbf{(Div-H1)} to 
   \textbf{(Div-H3)}.}
  $\widehat u(S)\gets\left(u(S)-\dfrac{q(u(S),
     v(S))-(\lambda u(S)+ v(S) - S)\frac{\partial q}{\partial Y}(u(S),  v(S))}{
  \frac{\partial q}{\partial X}( u(S),
 v(S))-\lambda\frac{\partial
    q}{\partial Y}( u(S),
   v(S))}\right)\bmod\widehat\chi(S)$\;
  $\widehat v(S)\gets\left( v(S)-\dfrac{-\lambda q( u(S),
     v(S))+(\lambda u(S)+ v(S) -
    S)\frac{\partial q}{\partial X}(u(S), v(S))}{
  \frac{\partial q}{\partial X}( u(S),
 v(S))-\lambda\frac{\partial
    q}{\partial Y}( u(S),
   v(S))}\right)\bmod\widehat\chi(S)$\;
Return $(\widehat u, \widehat v)$.
  \caption{A step of Newton-Hensel's lifting.\label{algo:hensel}}
\end{algorithm}

\begin{proposition}\label{prop:hensel_correct}
  Algorithm~\ref{algo:hensel} (\textsc{HenselLiftingStep}) is
  correct:
  $(\lambda, \widehat\chi, \widehat u, \widehat v)$
satisfies \textbf{(Div-H1)} to \textbf{(Div-H3)}.
\end{proposition}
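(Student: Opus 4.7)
The plan is to verify \textbf{(Div-H1)}, \textbf{(Div-H2)}, \textbf{(Div-H3)} in turn for $(\lambda, \widehat\chi, \widehat u, \widehat v)$, after recognizing the formulas in Algorithm~\ref{algo:hensel} as one step of Newton iteration applied to the bivariate system $\{q(X,Y)=0,\ \lambda X + Y - S = 0\}$, whose Jacobian determinant is $D := \frac{\partial q}{\partial X}(u,v) - \lambda \frac{\partial q}{\partial Y}(u,v)$.

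Before anything else, I would justify that the divisions in the algorithm are meaningful: since $\widehat\chi \mid \chi^2$, every irreducible factor of $\widehat\chi$ is a factor of $\chi$, so \textbf{(Div-H3)} applied to $(\lambda, \chi, u, v)$ forces $D$ to be coprime to $\widehat\chi$, hence invertible modulo $\widehat\chi$. I would then establish the key intermediate fact: the Newton corrections $C_u := \widehat u - u$ and $C_v := \widehat v - v$, viewed in $k[S]/\widehat\chi$, lie in the ideal generated by $\chi$. Indeed, by \textbf{(Div-H1)} and \textbf{(Div-H2)} the two numerators $q(u,v)$ and $L := \lambda u + v - S$ are both divisible by $\chi$ in $k[S]$, and dividing by the unit $D$ preserves this divisibility. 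Consequently any product of two such corrections is divisible by $\chi^2$ and therefore vanishes modulo $\widehat\chi$.

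With this in hand, \textbf{(Div-H1)} follows from the Taylor expansion $q(\widehat u, \widehat v) = q(u, v) + \frac{\partial q}{\partial X}(u,v)\, C_u + \frac{\partial q}{\partial Y}(u,v)\, C_v + R$, where the remainder $R$ is a polynomial combination of $C_u^2, C_u C_v, C_v^2$ and therefore vanishes modulo $\widehat\chi$; substituting the explicit formulas for $C_u, C_v$ then yields a telescoping cancellation with $q(u,v)$. For \textbf{(Div-H2)}, a brief direct computation shows $\lambda C_u + C_v \equiv -L \pmod{\widehat\chi}$ (the off-diagonal entries of the inverse Jacobian conspire, cancelling the $q(u,v)$ contributions), so $\lambda \widehat u + \widehat v \equiv \lambda u + v - L = S$ modulo $\widehat\chi$.

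Finally, for \textbf{(Div-H3)}, I would Taylor-expand $\frac{\partial q}{\partial X} - \lambda \frac{\partial q}{\partial Y}$ around $(u, v)$ and use that $C_u, C_v$ are $\chi$-multiples in $k[S]/\widehat\chi$ to conclude $\frac{\partial q}{\partial X}(\widehat u, \widehat v) - \lambda \frac{\partial q}{\partial Y}(\widehat u, \widehat v) \equiv D \pmod{\chi}$ in $k[S]/\widehat\chi$. Then for any irreducible $p \mid \widehat\chi$ we have $p \mid \chi$ and $p \nmid D$ by \textbf{(Div-H3)} on $(\lambda, \chi, u, v)$, so $p$ does not divide $\frac{\partial q}{\partial X}(\widehat u, \widehat v) - \lambda \frac{\partial q}{\partial Y}(\widehat u, \widehat v)$; running over all such $p$ gives the required coprimality. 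The one subtle point — and the only place I anticipate friction — is bookkeeping the distinction between ``zero modulo $\chi$'' and ``zero modulo $\widehat\chi$'' throughout the argument: unlike classical Newton iteration where one takes $\widehat\chi = \chi^2$, here $\widehat\chi$ is an arbitrary divisor of $\chi^2$, but only the weaker property $\chi^2 \equiv 0 \pmod{\widehat\chi}$ is ever used.
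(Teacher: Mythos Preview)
Your proposal is correct and follows essentially the same approach as the paper's proof: both recognize the algorithm as one Newton--Hensel step for the system $\{q(X,Y)=0,\ \lambda X+Y-S=0\}$, use the Taylor expansion to see that the linearized terms cancel by design while the quadratic remainder lies in $(\chi^2)$ and hence vanishes modulo $\widehat\chi$, and deduce \textbf{(Div-H3)} from $\widehat u\equiv u,\ \widehat v\equiv v \pmod{\chi}$. Your write-up is in fact more explicit than the paper's on \textbf{(Div-H3)} (arguing irreducible factor by irreducible factor) and on the invertibility of $D$ modulo $\widehat\chi$; the only cosmetic omission is that, as in the paper's proof of \textsc{ChangePrimElt}, the passage from $\lambda\widehat u+\widehat v\equiv S\pmod{\widehat\chi}$ to the exact equality in \textbf{(Div-H2)} uses the degree bounds $\deg(\widehat u),\deg(\widehat v)<\deg(\widehat\chi)$.
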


\begin{proof}
  This is a special case of the Newton-Hensel's lifting. Using Taylor
  expansion, 
  $$\begin{bmatrix} q(X,Y)\\\!\lambda X+Y-S\!\end{bmatrix}\!\!=\!\!
  \begin{bmatrix} q(u(S),v(S))\\\lambda u(S)+v(S)-S\end{bmatrix}\!+\! 
  \begin{bmatrix} \frac{\partial q}{\partial X}(u(S), v(S))\!\! & \!\!\frac{\partial
    q}{\partial Y}(u(S), v(S))\\
    \lambda\!\!&\!\!1\end{bmatrix}\cdot
  \begin{bmatrix} X-u(S)\\Y-v(S)\end{bmatrix}\!+\varepsilon(X,Y,S),$$
  where $\varepsilon$ is such that $\varepsilon(\widetilde u(S),\widetilde v(S), S)\equiv
  0\bmod\chi(S)^2$ for any polynomials $\widetilde u, \widetilde v\in k[S]$ such
  that $\widetilde u\equiv u\bmod \chi$ and $\widetilde v\equiv v\bmod \chi$. Next, notice that the denominators in the definitions of
  $\widehat u$ and $\widehat v$ are invertible modulo $\chi(S)^2$ because
  they are invertible modulo $\chi(S)$.
  The proof of \textbf{(Div-H1)} and \textbf{(Div-H2)} follows from a direct computation by plugging the values of
  $\widehat u$ and $\widehat v$ in the Taylor expansion, and by noticing that
  $\widehat u\equiv u\bmod\chi$ and $\widehat v\equiv v\bmod \chi$, so that
  $\varepsilon(\widehat u(S),\widehat v(S),S)\equiv 0\bmod \chi(S)^2$ and
  hence $\varepsilon(\widehat u(S),\widehat v(S),S)\equiv 0\bmod
  \widehat\chi(S)$.
  Finally, \textbf{(Div-H3)} is a direct consequence of the fact that $\frac{\partial q}{\partial X}(u(S),v(S))-\lambda
\frac{\partial q}{\partial Y}( u(S),
     v(S))$ is invertible modulo $\chi(S)$.
\end{proof}

\begin{algorithm} 
  \textbf{Function} \textsc{AddDivisors}\;
  \KwData{A polynomial $q\in k[X,Y]$ and two smooth
    effective divisors $D_1, D_2$ given by primitive element representations $(\lambda_1, \chi_1,
    u_1, v_1)$ and $(\lambda_2, \chi_2,
    u_2, v_2)$.}
  \KwResult{A primitive element representation of the divisor $D_1+D_2$ or ``fail''.}
  $\widehat\lambda\gets$ \textsc{Random}($k$)\;
  $(\widehat\chi_1, \widehat u_1,
  \widehat v_1)\gets$ \textsc{ChangePrimElt}($\widehat\lambda, 
  \lambda_1, \chi_1, u_1, v_1$)\; 
  $(\widehat\chi_2, \widehat u_2,
  \widehat v_2)\gets$ \textsc{ChangePrimElt}($\widehat\lambda, 
  \lambda_2, \chi_2, u_2, v_2$)\; 
  \If{$\widehat u_1\not\equiv \widehat u_2\bmod
  \GCD(\widehat\chi_1, \widehat\chi_2)$}{Return ``fail''}
  $\widehat\chi\gets\widehat\chi_1\cdot\widehat\chi_2$\;
  $\widetilde\chi\gets \LCM(\widehat\chi_1,\widehat\chi_2)$\;
  $\widehat u_{12}\gets
  \XCRT((\widehat\chi_1,
  \widehat\chi_2), 
(\widehat  u_1, \widehat u_2))\in k[S]_{<\deg(\widetilde\chi)}$\;
  $\widehat v_{12}\gets
  \XCRT((\widehat\chi_1,
  \widehat\chi_2), 
(\widehat  v_1, \widehat v_2))\in k[S]_{<\deg(\widetilde\chi)}$\;
$(\widehat u, \widehat v)\gets$ \textsc{HenselLiftingStep}($q, \widetilde\chi, \widehat\lambda, \widehat
u_{12},\widehat v_{12}, \widehat\chi$)\;
  Return $(\widehat\lambda, \widehat\chi, \widehat u, \widehat v).$
  \caption{Computing the sum of two smooth effective divisors.\label{algo:sum}}
\end{algorithm}

Algorithm~\ref{algo:sum} uses a variant of the CRT, which we call the
\emph{Extended Chinese Remainder Theorem} and which we abbreviate as
$\XCRT$. Given four univariate polynomials $u_1, u_2, \chi_1,
\chi_2\in k[S]$ such that $u_1\equiv u_2\bmod\GCD(\chi_1, \chi_2)$,
it returns a polynomial $u\in k[S]$ of degree less than
$\deg(\LCM(\chi_1,\chi_2))$ such that $u\equiv u_1\bmod \chi_1$
and $u\equiv u_2\bmod \chi_2$. The main difference with the classical
CRT is that we do not require $\chi_1$ and $\chi_2$ to be coprime. A minimal solution
to the $\XCRT$ problem is given by 
\begin{equation}\label{eq:XCRT}
  \XCRT((\chi_1, \chi_2), (u_1,u_2)) = (u_2\,a_1\,(\chi_1/g) +
  u_1\,a_2\,(\chi_2/g))\bmod\LCM(\chi_1,\chi_2),
\end{equation}
where $g=\GCD(\chi_1,\chi_2)$ and $a_1,a_2\in k[S]$ are Bézout coefficients for
$\chi_1,\chi_2$, i.e. they satisfy $a_1\chi_1+a_2\chi_2 = g$.
Notice that the $\XCRT$ is in fact a $k$-algebra isomorphism between
$k[S]/\LCM(\chi_1(S),\chi_2(S))$ and the subalgebra of $k[S]/\chi_1(S) \times
k[S]/\chi_2(S)$ formed by pairs $(u_1, u_2)$ such that $u_1\equiv u_2\bmod
\GCD(\chi_1,\chi_2)$.

\begin{proposition}\label{prop:adddiv_correct}
  Algorithm~\ref{algo:sum} (\textsc{AddDivisors}) is correct: If it does not fail, then it returns a
  primitive element representation of the smooth effective divisor $D_1+D_2$.
\end{proposition}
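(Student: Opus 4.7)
The plan is to break the proof into two parts: (i) checking that the returned tuple satisfies \textbf{(Div-H1)} to \textbf{(Div-H3)}, and (ii) verifying that it encodes the divisor $D_1+D_2$.

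For (i), I would first invoke Proposition~\ref{prop:correct_changeprim} to conclude that each $(\widehat\lambda,\widehat\chi_i,\widehat u_i,\widehat v_i)$ is a valid primitive element representation of $D_i$. The test $\widehat u_1\equiv\widehat u_2\bmod\GCD(\widehat\chi_1,\widehat\chi_2)$ is precisely the hypothesis required to apply the \XCRT; note that \textbf{(Div-H2)} forces $\widehat v_i = S - \widehat\lambda\widehat u_i$, so the analogous congruence for the $\widehat v_i$'s holds automatically and both $\widehat u_{12},\widehat v_{12}$ are well-defined. I would then argue that $(\widehat\lambda,\widetilde\chi,\widehat u_{12},\widehat v_{12})$ satisfies \textbf{(Div-H1)} to \textbf{(Div-H3)}: the first two properties transfer because the \XCRT is a $k$-algebra isomorphism onto the pullback of $k[S]/\widehat\chi_1\times k[S]/\widehat\chi_2$, and \textbf{(Div-H3)} holds because the partial derivative expression is invertible modulo $\widetilde\chi=\LCM(\widehat\chi_1,\widehat\chi_2)$ iff it is invertible modulo each $\widehat\chi_i$. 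Crucially, $\widehat\chi=\widehat\chi_1\widehat\chi_2=\widetilde\chi\cdot\GCD(\widehat\chi_1,\widehat\chi_2)$ divides $\widetilde\chi^2$, so the hypothesis of \textsc{HenselLiftingStep} is met and Proposition~\ref{prop:hensel_correct} delivers \textbf{(Div-H1)} to \textbf{(Div-H3)} for $(\widehat\lambda,\widehat\chi,\widehat u,\widehat v)$.

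For (ii), by Lemma~\ref{lem:divHprimelt} each $(\widehat\lambda,\widehat\chi_i,\widehat u_i,\widehat v_i)$ encodes an ideal $J_i\subset k[\C^0]$ whose associated effective divisor is $D_i$; by Lemma~\ref{lem:almostdedekind} these admit unique factorizations $J_i=\prod_{\mathfrak m}\mathfrak m^{\val_{\mathfrak m}(J_i)}$. The ideal encoded by the output is likewise a product of powers of maximal ideals, and its multiplicities can be read off the characteristic polynomial $\widehat\chi=\widehat\chi_1\widehat\chi_2$: localizing at any maximal ideal $\mathfrak m$ in the support, the corresponding irreducible factor of the squarefree part of $\widehat\chi_i$ appears in $\widehat\chi_i$ with multiplicity $\val_{\mathfrak m}(J_i)$, so the total multiplicity in $\widehat\chi$ is $\val_{\mathfrak m}(J_1)+\val_{\mathfrak m}(J_2)$. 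Since \textsc{HenselLiftingStep} returns polynomials that reduce to the input modulo $\widetilde\chi$ and each $\widehat\chi_i$ divides $\widetilde\chi$, the congruences $\widehat u\equiv\widehat u_i$ and $\widehat v\equiv\widehat v_i\bmod\widehat\chi_i$ hold, and Proposition~\ref{prop:primelteq} applied at each local factor identifies the output divisor with $D_1+D_2$.

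I expect the main subtlety to lie in part (ii): one must confirm that the multiplicity carried by each irreducible factor of $\widehat\chi$ after Hensel lifting is exactly the sum of those in the two inputs, rather than being disturbed by the lifting procedure. The cleanest way to avoid confusion is to observe that the characteristic polynomial is fixed by the algorithm to be $\widehat\chi_1\widehat\chi_2$, so multiplicities are determined combinatorially, while Hensel lifting only refines the parametrization $(\widehat u,\widehat v)$ so that the uniformizer $\widehat\lambda X+Y$ is consistent at the required order.
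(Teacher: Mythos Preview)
Your proposal is correct and follows essentially the same two-part structure as the paper's proof: verify \textbf{(Div-H1)}--\textbf{(Div-H3)} via the \XCRT\ and Proposition~\ref{prop:hensel_correct}, then establish the divisor equality by localizing at each maximal ideal, reading multiplicities from the factorization $\widehat\chi=\widehat\chi_1\widehat\chi_2$, and using the congruences $\widehat u\equiv\widehat u_i$, $\widehat v\equiv\widehat v_i\bmod\widehat\chi_i$ (which the paper derives the same way). The only quibble is that Proposition~\ref{prop:primelteq} is not quite the right tool in part~(ii)---it compares two primitive element representations, whereas here you are matching one representation against the product ideal $I_1\cdot I_2$---but the local argument you actually describe is exactly what the paper carries out directly via Lemma~\ref{lem:almostdedekind} without invoking that proposition.
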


\begin{proof} 
  Let $I_1$, $I_2$, $J$ denote the three following ideals of $k[\C^0]$:
  $$\begin{array}{rclll}
    I_1 &=& \langle \chi_1(\lambda_1 X + Y),&X-u_1(\lambda_1 X+Y),&Y-v_1(\lambda_1 X+Y)\rangle;\\
    I_2 &=& \langle \chi_2(\lambda_2 X + Y),&X-u_2(\lambda_2 X+Y),&Y-v_2(\lambda_2 X+Y)\rangle;\\
    J   &=& \langle \widehat\chi(\widehat\lambda X+ Y),&X-\widehat
    u(\widehat\lambda X+ Y),&Y-\widehat v(\widehat\lambda X+ Y)\rangle.
\end{array}$$
Proving that Algorithm~\ref{algo:sum} is correct amounts to showing that
$I_1\cdot I_2= J$, and that \textbf{(Div-H1)} to
\textbf{(Div-H3)} are satisfied by $\widehat\lambda, \widehat\chi, \widehat u, \widehat v$.
First, let $I_1', I_2'\subset k[\C^0]$ be the ideals 
  $$\begin{array}{rcl}
    I_1' &=& \langle \widehat\chi_1(\widehat\lambda X+Y), X-\widehat u_1(\widehat\lambda X+ Y), Y-\widehat v_1(\widehat\lambda X+ Y)\rangle;\\
    I_2' &=& \langle \widehat\chi_2(\widehat\lambda X+Y), X-\widehat u_2(\widehat\lambda X+ Y), Y-\widehat v_2(\widehat\lambda X+ Y)\rangle.
  \end{array}$$
  By Proposition~\ref{prop:correct_changeprim} and Lemma~\ref{lem:techrepr}, the equalities $I_1=I_1'$ and
  $I_2=I_2'$ hold.

  We start by proving that $\widehat\lambda, \widehat\chi, \widehat u, \widehat v$ satisfy \textbf{(Div-H1)} to
\textbf{(Div-H3)}. For \textbf{(Div-H1)} and \textbf{(Div-H2)},
Proposition~\ref{prop:correct_changeprim} ensures that
$q(\widehat u_i(S),\widehat v_i(S))\equiv 0\bmod \widehat\chi_i(S)$ for
$i\in\{1,2\}$. Using the fact that the $\XCRT$ is a morphism, we get that $q(\widehat u_{12}(S),\widehat v_{12}(S))\equiv
0\bmod \LCM(\widehat\chi_1(S), \widehat\chi_2(S))$ and $\widehat\lambda
\widehat u_{12}(S) + \widehat v_{12}(S)\equiv S\bmod \LCM(\widehat\chi_1(S),
\widehat\chi_2(S))$. Next, Proposition~\ref{prop:hensel_correct} proves the
equalities
$q(\widetilde u(S),\widetilde v(S))\equiv
0\bmod \LCM(\widehat\chi_1(S), \widehat\chi_2(S))^2$ and
$\widehat\lambda\widetilde u(S)+\widetilde v(S)\equiv S\bmod \LCM(\widehat\chi_1(S), \widehat\chi_2(S))^2$. Since $\widehat\chi =
\widehat\chi_1\cdot\widehat\chi_2$
divides $\LCM(\widehat\chi_1(S), \widehat\chi_2(S))^2$, we get that $q(\widetilde u(S),\widetilde v(S))\equiv
0\bmod\widehat\chi$ and $\lambda \widehat u(S)+\widehat v(S) = S$. For \textbf{(Div-H3)}, we observe that the fact
that the $\XCRT$ is a ring morphism implies that 
$\frac{\partial q}{\partial X}(\widehat u_{12}(S), \widehat v_{12}(S))-\lambda \frac{\partial
q}{\partial Y}(\widehat u_{12}(S),
\widehat v_{12}(S))$
is invertible in
$k[S]/\LCM(\widehat\chi_1(S), \widehat\chi_2(S))$. Consequently,
$\frac{\partial q}{\partial X}(\widetilde u(S), \widetilde v(S))-\lambda \frac{\partial
q}{\partial Y}(\widetilde u(S),
\widetilde v(S))$ is invertible in $k[S]/\LCM(\widehat\chi_1(S),
\widehat\chi_2(S))$, and hence it is also invertible in
$k[S]/\widehat\chi(S)$.

  We prove now that $I_1'\cdot I_2' = J$. Using the factorization as a product
  of maximal ideals given by Lemma~\ref{lem:almostdedekind}, it is sufficient to prove that a power
  $\mathfrak m^\ell\subset k[\C^0]$ of a maximal
  ideal contains $I_1'\cdot I_2'$ if and only if
  it contains $J$. Notice that the powers of maximal ideals which contain
  $I_1'$ (resp. $I_2'$) are of the form $\langle \chi_{\mathfrak m}(\widehat\lambda X+Y)^\ell,
  X-u_{1,\mathfrak m^\ell}(\widehat\lambda X+ Y), Y-v_{1,\mathfrak
  m^\ell}(\widehat\lambda X+
  Y)\rangle$ (resp. 
  $\langle \chi_{\mathfrak m}(\widehat\lambda X+Y)^\ell,
  X-u_{2,\mathfrak m^\ell}(\widehat\lambda X+ Y), Y-v_{2,\mathfrak
  m^\ell}(\widehat\lambda X+
  Y)\rangle$), where $\chi_{\mathfrak m}$ is a prime polynomial such that $\chi_{\mathfrak m}^\ell$ divides
  $\widehat\chi_1$ (resp. $\widehat\chi_2$),
  and $u_{1,\mathfrak m^\ell}(S)\equiv \widehat u_1(S)\bmod \chi_{\mathfrak
  m}(S)^\ell$, $v_{1,\mathfrak m^\ell}(S)\equiv \widehat v_1(S)\bmod \chi_{\mathfrak
  m}(S)^\ell$ (resp. $u_{2,\mathfrak m^\ell}(S)\equiv \widehat u_2(S)\bmod
  \chi_{\mathfrak
  m}(S)^\ell$, $v_{2,\mathfrak m^\ell}(S)\equiv \widehat v_2(S)\bmod \chi_{\mathfrak
  m}(S)^\ell$). 

  Let $\mathfrak m^\ell$ be a power of a maximal ideal which contains
  $I_1'\cdot I_2'$. Using the unicity of the factorization in Lemma~\ref{lem:almostdedekind}, the
  powers of maximal ideals which contain $I_1'\cdot I_2'$ are those $\mathfrak m^{\ell_1+\ell_2}$ where $I_1'\subset \mathfrak
  m^{\ell_1}$ and $I_2'\subset \mathfrak 
  m^{\ell_2}$. This means that $\mathfrak m^\ell$ has the form 
  $$\mathfrak m^\ell=\langle\chi_{\mathfrak m}(\widehat\lambda
  X+Y)^{\ell_1+\ell_2},
  X-u_{12}(\widehat\lambda X+ Y), Y-v_{12}(\widehat\lambda X+
  Y)\rangle,$$
  where $u_{12}$ (resp. $v_{12}$) is any polynomial such that $u_{12}(S) \equiv
  u_{1,\mathfrak m^{\ell_1}}(S)\bmod \chi_{\mathfrak  m}(S)^{\ell_1}$,
  $u_{12}(S) \equiv
  u_{2,\mathfrak m^{\ell_2}}(S)\bmod \chi_{\mathfrak  m}(S)^{\ell_2}$ 
  (resp.
  $v_{12}(S) \equiv
  v_{1,\mathfrak m^{\ell_1}}(S)\bmod \chi_{\mathfrak  m}(S)^{\ell_1}$,
  $v_{12}(S) \equiv
  v_{2,\mathfrak m^{\ell_2}}(S)\bmod \chi_{\mathfrak  m}(S)^{\ell_2}$).
  Then we notice that
  $\widehat\chi=\widehat\chi_1\cdot\widehat\chi_2$, and therefore
  $\chi_{\mathfrak m}(S)^{\ell_1+\ell_2}$ divides $\widehat\chi(S)$. By using the
  properties of the $\XCRT$ and of the Hensel's lifting, we get that
  $$\begin{array}{rcll}
    \widehat u(S)&\equiv&\widehat u_1(S)\bmod\widehat\chi_1(S);\\
    \widehat u(S)&\equiv&\widehat u_2(S)\bmod\widehat\chi_2(S);\\
    \widehat v(S)&\equiv&\widehat v_1(S)\bmod\widehat\chi_1(S);\\
    \widehat v(S)&\equiv&\widehat v_2(S)\bmod\widehat\chi_2(S).
  \end{array}$$
  This implies that $\mathfrak m^\ell=\langle\chi_{\mathfrak
  m}(\widehat\lambda X+Y)^{\ell_1+\ell_2}, X-\widehat
  u(\widehat\lambda X+Y), Y-\widehat 
  v(\widehat\lambda X+Y)\rangle$, and hence $\mathfrak m^\ell$ contains $J$.

  The proof that any power of maximal ideal which contains $J$ also contains
  $I_1'\cdot I_2'$ is similar.
\end{proof}

\begin{algorithm} 
  \textbf{Function} \textsc{SubtractDivisors}\;
  \KwData{Two smooth effective divisors given by primitive element
  representations:\\
    \quad\quad\quad$D_1=(\lambda_1, \chi_1,
    u_1, v_1)$, \\\quad\quad\quad$D_2 = (\lambda_2, \chi_2,
    u_2, v_2)$.}
    \KwResult{A primitive element representation of the smooth effective divisor $[D_1-D_2]_+$ or ``fail''.}
  $\widehat\lambda\gets$ \textsc{Random}($k$)\;
  $(\widehat\chi_1, \widehat u_1,
  \widehat v_1)\gets$ \textsc{ChangePrimElt}($\widehat\lambda, 
  \lambda_1, \chi_1, u_1, v_1$)\; 
  $(\widehat\chi_2, \widehat u_2,
  \widehat v_2)\gets$ \textsc{ChangePrimElt}($\widehat\lambda, 
  \lambda_2, \chi_2, u_2, v_2$)\;
  \If{$\widehat u_1\not\equiv \widehat u_2\bmod
  \GCD(\widehat\chi_1, \widehat\chi_2)$}{Return ``fail''}
  $\widehat\chi\gets\widehat\chi_1/\GCD(\widehat\chi_1,\widehat\chi_2)$\;
  $\widehat u(S)\gets \widehat u_1(S)\bmod \widehat\chi(S)$\;
  $\widehat v(S)\gets \widehat v_1(S)\bmod \widehat\chi(S)$\;
  Return $(\widehat\lambda, 
  \widehat\chi, \widehat u,
  \widehat v).$
  \caption{Computing the subtraction of smooth effective divisors.\label{algo:subtract}}
\end{algorithm}

Algorithm~\ref{algo:subtract} (\textsc{SubtractDivisors}) provides a method for subtracting effective
divisors given by primitive element representations. We emphasize that the
divisor returned is the subtraction $D_1-D_2$ only if the result is also
effective, i.e. if $D_1\geq D_2$. If this is not the case, then it returns
the positive part of the subtraction.

\begin{proposition}
  Algorithm~\ref{algo:subtract} (\textsc{SubtractDivisors}) is correct: If it does not fail, then it returns a
  primitive element representation of the smooth effective divisor $[D_1-D_2]_+$, where the notation
  $[D]_+$ denotes the positive part of the divisor $D$, i.e. the smallest
  effective divisor $D'$ such that $D'\geq D$.
\end{proposition}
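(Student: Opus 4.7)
The plan is to mimic the structure of the proof of Proposition~\ref{prop:adddiv_correct}. After the calls to \textsc{ChangePrimElt}, both divisors $D_1$ and $D_2$ are represented with respect to the common primitive element $\widehat\lambda X+Y$; by Proposition~\ref{prop:correct_changeprim}, the tuples $(\widehat\lambda,\widehat\chi_i,\widehat u_i,\widehat v_i)$ satisfy \textbf{(Div-H1)}--\textbf{(Div-H3)} and encode the same respective divisors. I would then split the correctness claim into two parts: verifying the three defining properties \textbf{(Div-H1)}--\textbf{(Div-H3)} for the returned tuple, and identifying the associated ideal in $k[\C^0]$ with the ideal of $[D_1-D_2]_+$.

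For the first part, I would exploit the fact that by construction $\widehat\chi$ divides $\widehat\chi_1$, and $(\widehat u,\widehat v)$ is just the reduction of $(\widehat u_1,\widehat v_1)$ modulo $\widehat\chi$. Thus $q(\widehat u,\widehat v)\equiv 0\bmod\widehat\chi$ follows from the congruence modulo $\widehat\chi_1$; the identity $\widehat\lambda\widehat u+\widehat v=S$ follows from the corresponding identity for $(\widehat u_1,\widehat v_1)$ together with the degree bounds; and the invertibility condition \textbf{(Div-H3)} modulo $\widehat\chi$ follows from the invertibility modulo $\widehat\chi_1$, of which $\widehat\chi$ is a divisor.

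The main work is the identification of the divisor. Using Lemma~\ref{lem:almostdedekind}, I would factor the ideals associated to $D_1,D_2$ as $J_i=\prod_{\mathfrak m}\mathfrak m^{v_{\mathfrak m}(D_i)}$. Via the isomorphism $k[\C^0]/J_i\cong k[S]/\widehat\chi_i(S)$ given by the primitive element $\widehat\lambda X+Y$, each maximal ideal $\mathfrak m$ occurring in $J_i$ corresponds to an irreducible factor $\psi_{\mathfrak m}$ of $\widehat\chi_i$ (with multiplicity $v_{\mathfrak m}(D_i)$), and the residue of $\widehat u_i$ modulo $\psi_{\mathfrak m}$ equals the $X$-coordinate of $\mathfrak m$ expressed in the residue field. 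I expect the hard (and delicate) step to be the following: two maximal ideals $\mathfrak m\subset J_1$ and $\mathfrak n\subset J_2$ may produce the same irreducible polynomial $\psi_{\mathfrak m}=\psi_{\mathfrak n}$ without actually being the same point, and one must verify that the failure test $\widehat u_1\not\equiv\widehat u_2\bmod\GCD(\widehat\chi_1,\widehat\chi_2)$ precisely rules out this pathology. This will follow from the observation that whenever $\psi_{\mathfrak m}=\psi_{\mathfrak n}$, the residues of $\widehat u_1$ and $\widehat u_2$ modulo $\psi_{\mathfrak m}$ give the $X$-coordinates of $\mathfrak m$ and $\mathfrak n$, and these residues agree if and only if $\mathfrak m=\mathfrak n$ (since $(1,-\widehat\lambda)$-coordinates determine a point on $\C^0$ by \textbf{(Div-H2)}).

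Once this matching of places is established, the factorization $\GCD(\widehat\chi_1,\widehat\chi_2)=\prod_{\mathfrak m}\psi_{\mathfrak m}^{\min(v_{\mathfrak m}(D_1),v_{\mathfrak m}(D_2))}$ holds, so that $\widehat\chi=\widehat\chi_1/\GCD(\widehat\chi_1,\widehat\chi_2)=\prod_{\mathfrak m}\psi_{\mathfrak m}^{\max(v_{\mathfrak m}(D_1)-v_{\mathfrak m}(D_2),\,0)}$. To conclude, I would apply Lemma~\ref{lem:divHprimelt} to the tuple $(\widehat\lambda,\widehat\chi,\widehat u,\widehat v)$: the ideal it encodes in $k[\C^0]$ factors as $\prod_{\mathfrak m}\mathfrak m^{\max(v_{\mathfrak m}(D_1)-v_{\mathfrak m}(D_2),\,0)}$, which is precisely the ideal of the effective divisor $[D_1-D_2]_+$. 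This finishes the proof.
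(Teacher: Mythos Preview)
Your proposal is correct and follows essentially the same approach as the paper: both reduce \textbf{(Div-H1)}--\textbf{(Div-H3)} from $\widehat\chi_1$ to its divisor $\widehat\chi$, then identify the output divisor by matching irreducible factors of $\widehat\chi_1,\widehat\chi_2$ with maximal ideals via Lemma~\ref{lem:almostdedekind} and comparing exponents. The paper phrases the second part as the equality $J=I_1:I_2$ checked in local rings, but the substance is the same. One minor slip: you write $\mathfrak m\subset J_1$ where you mean $\mathfrak m\supset J_1$. Your discussion of why the test $\widehat u_1\not\equiv\widehat u_2\bmod\GCD(\widehat\chi_1,\widehat\chi_2)$ rules out the pathology of a shared irreducible factor corresponding to two distinct points is more explicit than the paper's proof, which leaves this implicit; this is a genuine improvement in clarity.
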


\begin{proof}
  Let $I_1$, $I_2$, $J$ denote the three following ideals of $k[\C^0]$, using
  the notation in Algorithm~\ref{algo:subtract}:
  $$\begin{array}{rclll}
    I_1 &=& \langle \chi_1(\lambda_1 X + Y),&X-u_1(\lambda_1 X+Y),&Y-v_1(\lambda_1 X+Y)\rangle;\\
    I_2 &=& \langle \chi_2(\lambda_2 X + Y),&X-u_2(\lambda_2 X+Y),&Y-v_2(\lambda_2 X+Y)\rangle;\\
    J   &=& \langle \widehat\chi(\widehat\lambda X+ Y),&X-\widehat
    u(\widehat\lambda X+ Y),&Y-\widehat v(\widehat\lambda X+ Y)\rangle.
\end{array}$$
  The effective divisor $[D_1-D_2]_+$ corresponds to the colon ideal $I_1 : I_2
  = \{f\in k[\C^0] \mid f\cdot I_2\subset I_1\}$. Consequently, we must prove that
  $(\widehat\lambda, \widehat\chi, \widehat u, \widehat v)$ satisfies \textbf{(Div-H1)} to \textbf{(Div-H3)} and that $J = I_1: I_2$.
 The
 equalities \textbf{(Div-H1)} to \textbf{(Div-H3)} for $\widehat\lambda,
 \widehat\chi_1, \widehat u_1, \widehat v_1$ are satisfied by
 Proposition~\ref{prop:correct_changeprim}. Regarding them modulo
 $\widehat\chi$ shows that $(\widehat\lambda, \widehat\chi, \widehat u, \widehat v)$ 
 satisfies
 \textbf{(Div-H1)} to \textbf{(Div-H3)}.

  In order to prove that $J = I_1 : I_2$, we proceed as in the proof of Proposition~\ref{prop:adddiv_correct}, by
  noticing first that $I_1$ and $I_2$ can be rewritten as
  $$\begin{array}{rclll}
    I_1 &=& \langle \widehat\chi_1(\widehat\lambda X + Y),&X-\widehat u_1(\widehat\lambda
    X+Y),&Y-\widehat v_1(\widehat\lambda X+Y)\rangle;\\
    I_2 &=& \langle \widehat\chi_2(\widehat\lambda X + Y),&X-\widehat u_2(\widehat\lambda
    X+Y),&Y-\widehat v_2(\widehat\lambda X+Y)\rangle.\\
\end{array}$$
Using~\cite[Prop.~9.1]{atiyah1969introduction} together with the fact that $D_1$ does not
involve any singular point of the curve by \textbf{(Div-H3)}, the equality $I_1:I_2 = J$ holds if and only if the
powers of maximal ideals $\mathfrak m^\ell\subset k[\C^0]$ which contain
$I_1:I_2$ are exactly those which contain $J$. Equivalently, this means that if
$\mathfrak m^{\ell_1}$ is the largest power of $\mathfrak m$ which contains
$I_1$ and if $\mathfrak m^{\ell_2}$ is the largest power of $\mathfrak m$ which contains
$I_2$, then $\mathfrak m^{\max(\ell_1-\ell_2, 0)}$ is the largest power of
$\mathfrak m$ which contains $J$. As in the proof of
Proposition~\ref{prop:adddiv_correct}, the maximal ideals $\mathfrak m\subset k[\C^0]$ which
contain $I_1$ have the form $\langle \chi_{\mathfrak m}(\widehat\lambda X +
Y),X-u_{\mathfrak m}(\widehat\lambda
X+Y),Y-v_{\mathfrak m}(\widehat\lambda X+Y)\rangle$, where $u_{\mathfrak
m}\equiv \widehat u_1\bmod \chi_{\mathfrak m}, v_{\mathfrak
m}\equiv \widehat v_1\bmod \chi_{\mathfrak m}$. The proof is concluded by
noticing that for any
prime factor $\Phi$ of $\widehat\chi_1$, if $\Phi^{\ell_1}$ is the largest
power of $\Phi$ which divides $\widehat\chi_1$ and $\Phi^{\ell_2}$ is the largest
power of $\Phi$ which divides $\widehat\chi_2$, then the largest power $\Phi$
which divides $\widehat\chi=\widehat\chi_1/\GCD(\widehat\chi_1,\widehat\chi_2)$
is $\Phi^{\max(\ell_1-\ell_2, 0)}$.
\end{proof}

\section{Description and correctness of the subroutines}\label{sec:subroutines}

\subsection{Interpolation}

\begin{algorithm}
 \textbf{Function} \textsc{Interpolate}\;
 \KwData{The degree $\delta$ of the curve, a smooth effective divisor
 given by a primitive element representation $(\lambda, \chi, u, v)$, and the
 nodal divisor given by $(\lambda_E, \chi_E, u_E, v_E, T_E)$.}
 \KwResult{A polynomial $h\in k[X, Y]$ representing a form in $k[\C]$ such that
 $(h)\geq D+E$.}
 \eIf{$\displaystyle\binom{\delta+1}2\leq\deg(\chi)+\deg(\chi_E)$}{
 $d\gets \lfloor(\deg(\chi)+\deg(\chi_E))/\delta+(\delta-1)/2\rfloor$}{
   $d\gets \lfloor(\sqrt{1+8(\deg(\chi)+\deg(\chi_E))}-1)/2\rfloor$}
   Construct the matrix representing the linear map $\varphi:\{f\in k[X,Y]\mid
   \deg(f)\leq d, \deg_Y(f)<\delta\}\rightarrow k[S]_{<\deg(\chi)}\times
   k[S]_{<\deg(\chi_E)}$
   defined as $\varphi(f(X, Y))=(f(u(S), v(S))\bmod \chi(S), f(u_E(S), v_E(S))\bmod
   \chi_E(S))$\;
   Compute a basis $\mathbf b_1,\ldots, \mathbf b_\ell$ of the kernel of $\varphi$\;
   $(\mu_1,\ldots, \mu_\ell)\gets \textsc{Random}(k^\ell\setminus\{\mathbf
   0\})$\;
   Return $h=\sum_{i=1}^\ell \mu_i \mathbf b_i$.
   \caption{Computing a function $h\in k[\C^0]$ of small degree such that $(h)\geq
   D+E$.\label{algo:interpolation}}
\end{algorithm}

This section focuses on the following interpolation problem: Given a smooth effective
divisor $D$ and the nodal divisor $E$, find a element $h\in k[\C^0]$ such that its associated
principal divisor $(h)$ satisfies $(h)\geq D+E$.

\begin{proposition}\label{prop:interp_correct}
  Algorithm~\ref{algo:interpolation} (\textsc{Interpolate}) is correct: The kernel of
  $\varphi$ has positive dimension, and its nonzero elements $h$ satisfy
  $(h)\geq D+E$.
\end{proposition}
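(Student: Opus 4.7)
The plan is to prove the two claims separately: first a linear-algebra dimension count to show $\dim \ker(\varphi) \geq 1$, and then a translation of the vanishing conditions defining $\ker(\varphi)$ into membership in the ideals $J_D, J_E$ encoding $D$ and $E$, from which $(h) \geq D + E$ follows by the structural lemmas of Section~\ref{sec:datastruct}.

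For the first claim, the codomain has dimension $n := \deg(\chi) + \deg(\chi_E)$. The domain $V := \{f \in k[X,Y] : \deg(f) \leq d,\ \deg_Y(f) < \delta\}$ has dimension $\binom{d+2}{2}$ when $d \leq \delta-1$ and $\delta(d+1) - \binom{\delta}{2}$ when $d \geq \delta - 1$ (the two formulas agree at $d = \delta - 1$, since the injection $V \hookrightarrow k[\C^0]$ is evident from the fact that $q$ is monic in $Y$ of degree $\delta$). I would split along the two branches of the definition of $d$. In the branch $n \geq \binom{\delta+1}{2}$, a short computation using $n/\delta \geq (\delta+1)/2$ gives $d \geq \delta - 1$, so the second formula applies; the floor property $d + 1 > n/\delta + (\delta-1)/2$ then rearranges directly to $\delta(d+1) - \binom{\delta}{2} > n$. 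In the branch $n < \binom{\delta+1}{2}$, I would first verify $d < \delta$ using $1 + 8n < 4\delta(\delta+1)+1 = (2\delta+1)^2$, so the first formula applies; squaring the floor inequality $2(d+1) + 1 > \sqrt{1 + 8n}$ yields $(d+1)(d+2) > 2n$, i.e.~$\binom{d+2}{2} > n$. In both cases the domain strictly dominates the codomain, so $\ker(\varphi)$ is positive dimensional.

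For the second claim, let $J_D \subset k[\C^0]$ be the ideal associated to $(\lambda, \chi, u, v)$ via Lemma~\ref{lem:divHprimelt}, and let $J_E$ be the radical ideal of the nodes associated to $(\lambda_E, \chi_E, u_E, v_E)$ via Proposition~\ref{prop:existence_primrepr_nodal}. The $k$-algebra isomorphisms $k[\C^0]/J_D \cong k[S]/\chi(S)$ and $k[\C^0]/J_E \cong k[S]/\chi_E(S)$ (sending $X,Y$ to $u,v$ and to $u_E, v_E$ respectively) translate the two congruences $h(u(S), v(S)) \equiv 0 \bmod \chi(S)$ and $h(u_E(S), v_E(S)) \equiv 0 \bmod \chi_E(S)$ into $h \in J_D \cap J_E$. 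Using Lemma~\ref{lem:almostdedekind} to factor $J_D = \prod_i \mathfrak m_i^{\alpha_i}$, at each $\mathfrak m_i$ the local ring $\mathcal O_{\C, p_i}$ is a DVR that coincides with the local ring at the unique preimage on $\widetilde \C$, so $h \in \mathfrak m_i^{\alpha_i}$ forces the divisor $(h)$ to have multiplicity at least $\alpha_i$ at that point; this gives $(h) \geq D$. For the nodal part, $h \in J_E$ means $h$ vanishes at every node $p$ of $\C$, and at the two preimage branches $p_1, p_2 \in \widetilde \C$ above $p$ (whose local rings are DVRs in whose maximal ideals $h$ lies), each contributes multiplicity at least $1$ to $(h)$; summing over the $r$ nodes yields $(h) \geq E$. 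Since $D$ is smooth (so its support avoids the nodes), the supports of $D$ and $E$ are disjoint, and adding gives $(h) \geq D + E$.

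The main obstacle I anticipate is the case analysis in the dimension count: the floor function and the switch between the two formulas at $n = \binom{\delta+1}{2}$ and at $d = \delta - 1$ must be managed so that strict inequality holds in both branches and that the correct domain-dimension formula is used. The translation into divisor-theoretic language is then essentially bookkeeping using the algebraic machinery of the primitive element representation already established in Lemmas~\ref{lem:divHprimelt} and~\ref{lem:almostdedekind}.
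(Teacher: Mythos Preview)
Your proof is correct and follows essentially the same approach as the paper: the dimension count matches Lemma~\ref{lem:varphinotinj} (you inline the case analysis that the paper defers), and your ideal-membership argument for $(h)\geq D+E$ is a more explicit version of the paper's appeal to the isomorphisms of Propositions~\ref{prop:existence_primrepr} and~\ref{prop:existence_primrepr_nodal}. One minor exposition point: the parenthetical ``the two formulas agree at $d=\delta-1$, since the injection $V\hookrightarrow k[\C^0]$ is evident\ldots'' conflates two unrelated remarks---the agreement of the dimension formulas is a purely combinatorial check, while the injection (which you do need, to ensure a nonzero $h\in\ker\varphi$ is nonzero in $k[\C^0]$) belongs with the second claim.
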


\begin{proof}
  The fact that the kernel $\varphi$ has positive dimension follows from a
  dimension count, which is postponed to Lemma~\ref{lem:varphinotinj}.
  We now prove the second part of the proposition. First, notice that
  $\deg_Y(h)<\deg_Y(q)$ for any nonzero $h$ in the kernel of $\varphi$, hence
  $h$ cannot be a multiple of $q$, which implies that
  $\langle 0\rangle\subsetneq\langle h\rangle\subset k[\C^0]$. Next, by
Lemmas~\ref{lem:divHprimelt} and \ref{lem:techrepr}, the ideal
$I_{D+E}= \{f\in k[\C^0] \mid (f)\geq D+E\}=\{f\in k[\C^0] \mid (f)\geq D\}\cap \{f\in k[\C^0] \mid (f)\geq E\}$ equals $$\langle \chi(\lambda X+Y), X-u(\lambda X+Y), Y-v(\lambda
  X+Y)\rangle\cap \langle \chi_E(\lambda_E X+Y), X-u_E(\lambda_E X+Y), Y-v_E(\lambda
  X+Y)\rangle.$$
  By
  construction, $h(u(S), v(S))\equiv
  0\bmod\chi(S)$ and $h(u_E(S), v_E(S))\equiv
  0\bmod\chi_E(S)$ for any $h\in\ker\varphi$.
  The proof is concluded by noticing that the polynomials $f$ in $I_{D+E}$ are exactly those which satisfy $f(u(S), v(S))\equiv
  0\bmod\chi(S)$ and $f(u_E(S), v_E(S))\equiv
  0\bmod\chi_E(S)$, using the isomorphisms in
  Propositions~\ref{prop:existence_primrepr}
  and~\ref{prop:existence_primrepr_nodal}. \end{proof}

The following lemma ensures that Algorithm~\ref{algo:interpolation} actually
returns a nonzero element, i.e. that the kernel of $\varphi$ has positive
dimension.

\begin{lemma}\label{lem:varphinotinj}
  With the notation in Algorithm~\ref{algo:interpolation}, 
$$\deg(\chi)+\deg(\chi_E)<\dim_k(\{f\in k[X, Y]\mid
   \deg(f)\leq d, \deg_Y(f)<\delta\})\leq 3(\deg(\chi)+\deg(\chi_E)).$$
   Consequently, $\varphi$ is not injective.
\end{lemma}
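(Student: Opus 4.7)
The plan is to compute $\dim_k V$ explicitly, where $V = \{f \in k[X,Y] \mid \deg(f) \leq d,\ \deg_Y(f) < \delta\}$, and compare it to $N := \deg(\chi) + \deg(\chi_E)$, which is precisely the dimension of the codomain of $\varphi$. The final assertion that $\varphi$ is not injective then follows immediately from the strict lower bound $\dim_k V > N$.

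I split according to the two branches of Algorithm~\ref{algo:interpolation}. In the first branch, the hypothesis $\binom{\delta+1}{2} \leq N$ is equivalent to $N/\delta + (\delta-1)/2 \geq \delta$, which forces $d \geq \delta$. Then the monomials $X^i Y^j$ with $i+j \leq d$ and $j < \delta$ form a basis of $V$, and a short combinatorial count gives
$$\dim_k V \;=\; \binom{d+2}{2} - \binom{d-\delta+2}{2} \;=\; \frac{\delta(2d-\delta+3)}{2}.$$
In the second branch, $d$ is by construction the largest integer with $d(d+1)/2 \leq N$, and the assumption $\binom{\delta+1}{2} > N$ forces $\delta > d$, so the $Y$-degree constraint becomes vacuous and $\dim_k V = \binom{d+2}{2}$.

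The bounds then reduce to careful arithmetic with the floor function. For the strict lower bound in the first branch, the inequality $d > N/\delta + (\delta-1)/2 - 1$ substituted into the explicit formula yields precisely $\dim_k V > N$; in the second branch, the maximality of $d$ gives $(d+1)(d+2)/2 > N$ directly. For the upper bound in the first branch, $d \leq N/\delta + (\delta-1)/2$ leads to $\dim_k V \leq N + \delta$, and the hypothesis $\binom{\delta+1}{2} \leq N$ forces $\delta \leq N$ (since $\delta \leq \delta(\delta+1)/2$ for $\delta \geq 1$), so $\dim_k V \leq 2N$. In the second branch, substituting $d \leq (\sqrt{1+8N}-1)/2$ into $\binom{d+2}{2}$ and expanding yields $\dim_k V \leq N + (1 + \sqrt{1+8N})/2$, which is bounded by $3N$ as soon as $N \geq 1$ (the degenerate case $N = 0$ corresponds to both divisors being empty and can be safely ignored in the statement).

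The only real subtlety is the book-keeping around the rounding introduced by the floor function in the first branch, where one needs the \emph{strict} inequality $\lfloor x\rfloor > x - 1$ together with the precise identity $\delta d - \delta(\delta-3)/2 = \dim_k V$ to make the lower bound land exactly at $N$; the constant $3$ in the upper bound is comfortable enough to absorb the small additive error terms in either branch without any delicate estimation.
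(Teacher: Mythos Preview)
Your proof is correct and follows essentially the same approach as the paper: the same case split according to the two branches of Algorithm~\ref{algo:interpolation}, the same explicit dimension formula (your $\delta(2d-\delta+3)/2$ is the paper's $\delta(d-(\delta-3)/2)$), and the same floor-function arithmetic to obtain both the strict lower bound and the upper bound $2N$ (resp.\ $3N$) in the first (resp.\ second) branch. Your characterization of $d$ in the second branch as the largest integer with $d(d+1)/2\le N$ is a nice reformulation, and your explicit mention of the degenerate case $N=0$ is a detail the paper leaves implicit.
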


\begin{proof}
  Set $w=\deg(\chi)+\deg(\chi_E)$.
First, a direct dimension count gives
$$\dim_k(\{f\in k[X, Y]\mid
   \deg(f)\leq d, \deg_Y(f)<\delta\}) =\begin{cases}
    \delta(d-(\delta-3)/2)\text{ if }d \geq \delta\\
    \displaystyle\binom{d+2}{2}\text{ otherwise}.
   \end{cases}$$
 On one hand, if $\binom{\delta+1}{2}\leq w$, then 
 $$\begin{array}{rcl}
   d &=& \displaystyle
   \left\lfloor w/\delta+(\delta-1)/2\right\rfloor\\
  &\geq&
  \left\lfloor \binom{\delta+1}{2}/\delta+(\delta-1)/2\right\rfloor\\
  &\geq&\delta,\\
\end{array}$$
and hence
$$\begin{array}{rcl}
\delta(d-(\delta-3)/2)&>&\delta(w/\delta+(\delta-1)/2-1-(\delta-3)/2)\\
&=&w\\
\delta(d-(\delta-3)/2)&\leq&\delta(w/\delta+(\delta-1)/2-(\delta-3)/2)\\
&\leq&w+\delta\\
&\leq&w+\binom{\delta+1}2\\
&\leq&2w.
\end{array}$$
On the other hand, if $\binom{\delta+1}{2}> w$, then
$$\begin{array}{rcl}
  d &=& \lfloor(\sqrt{1+8w}-1)/2\rfloor\\
    &<& \lfloor(\sqrt{1+4\delta(\delta+1)}-1)/2\rfloor \\
    &=& \lfloor(\sqrt{(2\delta+1)^2}-1)/2 \rfloor \\
    &=& \delta
  \end{array}$$
  Since $\binom{x+2}{2} - w>0$ for any $x >
  (\sqrt{1+8w}-3)/2$, we get that $w < \binom{d+2}{2}$ as
  expected. Finally, the last inequality follows from
  $$
    \displaystyle\binom{\lfloor(\sqrt{1+8w}-1)/2\rfloor+2}2\leq w+(1+\sqrt{1+8w})/2,
  $$
  and direct computations show that $(1+\sqrt{1+8w})/2\leq
  2w$.
\end{proof}

\subsection{Computing the smooth part of the principal divisor associated to a regular function on
the curve}

The section is devoted to the following problem: Given a polynomial $h\in
k[\C^0]$ such that $(h)=D_h+E$ where $D_h$ is a smooth divisor on the curve,
compute a primitive element representation of $D_h$. 

Let us mention that it may happen that $h$ vanishes at
infinity. Therefore, the support of $D_h$ may contain points at infinity, but
the primitive element representation only represents points in the affine chart
$Z\ne 0$.
Ignoring these zeros at infinity may lead to functions
having unauthorized poles at infinity in the basis returned by
Algorithm~\ref{algo:birdeye}.
As we already mentioned in Section~\ref{sec:datastruct}, handling what
happens at infinity is not a
problem: This issue can be solved for instance by doing the computations in
three affine spaces which cover $\mathbb P^2$, which would multiply the
complexity by a constant factor.
Notice also that it is easy to detect if $h$ has zeroes at infinity: This
happens if and only if the degree of the resultant of $h$ and $q$
is strictly less than $\deg(h)\deg(\C)$, thanks to the fact
that we assumed that $\C$ is in projective Noether position.
For simplicity, we
will not discuss further this issue in the sequel of this paper. 

The central element of Algorithm~\textsc{CompPrincDiv} is the computation of
a resultant and of the associated first subresultant (as defined for instance
in~\cite[Sec.~3]{kahoui2003elementary}). However, a number of extra steps are required 
to ensure that this computation satisfies genericity assumptions and returns a
correct result. First, a random direction of projection $\lambda$ is selected for
computing the resultant. This direction of projection must satisfy some
conditions. In particular, distinct point in the support of $h$ must project on
distinct points. Also, in order to exploit the Poisson formula for the
resultant, we also ask that this direction is not a tangent at any node of the
curve, see Lemma~\ref{lem:tech_res}. This condition about the tangents at the
node is tested via the evaluation of the univariate polynomial $T_E$. We also need a representation of the nodal
divisor with respect to this $\lambda$. This is achieved by using a
slightly modified version of Algorithm~\textsc{ChangePrimElt} where the first
test --- which is not relevant for the nodal divisor --- is removed. Finally,
Algorithm~\textsc{CompPrincDiv} must clean out the singular points: this is
done by noticing that the roots of the resultant which correspond to the
singular points appear with multiplicity at least $2$, see
Lemma~\ref{lem:tech_res} below. Therefore these singular points are removed by dividing
out by the
square of the univariate polynomial $\widehat\chi_E$ whose roots parametrize the coordinates of the
singular points.

\begin{algorithm}
 \textbf{Function} \textsc{CompPrincDiv}\;
 \KwData{A squarefree bivariate $q\in k[X, Y]$ such that
   $\deg(q)=\deg_Y(q)$, a bivariate polynomial $h\in k[X, Y]$, and a
 representation $(\lambda_E, \chi_E, u_E, v_E, T_E)$ of the nodal divisor.}
 \KwResult{A primitive element representation $(\lambda, \chi(S), u(S), v(S))$
 of the smooth part of the principal effective divisor $(h)$ or ``fail''.}
 $\lambda\gets$ \textsc{Random}($k$)\;
 \If{$\lambda = 0$ or if the coefficient of $Y^{\deg(q)}$ in $q((S-Y)/\lambda,
 Y)\in k[S][Y]$ is $0$}{Return ``fail''}
 \If{$T_E(\lambda)=0$}{Return ``fail''\; \tcc{Ensures that $(1, -\lambda)$ is not tangent to the curve
 at any node.}}
 $(\widehat\chi_E, \widehat u_E, \widehat v_E)\gets$
 \textsc{ChangePrimEltNodal}($\lambda$, $\lambda_E$, $\chi_E$, $u_E$, $v_E$)\;
 \tcc{\textsc{ChangePrimEltNodal} is the same algorithm as \textsc{ChangePrimElt}, 
 but we
 skip the first test (which would fail on the nodal divisor).}
 $\widetilde\chi(S)\gets$ $\Resultant_Y$($q((S-Y)/\lambda, Y), h(
 (S-Y)/\lambda, Y)$)\;
 $a_0(S)+Y a_1(S)\gets$ $\FirstSubRes_Y$($q((S-Y)/\lambda, Y), h(
 (S-Y)/\lambda, Y)$)\;
 $\chi\gets \widetilde\chi/ \widehat\chi_E^2$\;
 \If{$\GCD(\chi, \widehat \chi_E)\ne 1$}{Return ``fail''}
 \If{$\GCD(a_1(S), \chi(S))\ne 1$}{Return ``fail''}
 $v(S)\gets -a_0(S)\cdot a_1(S)^{-1}\bmod \chi(S)$\;
 $u(S)\gets (S-v(S))/\lambda$\;
 \If{$\GCD(\frac{\partial q}{\partial X}(u(S), v(S))-\lambda \frac{\partial q}{\partial Y}(u(S),
    v(S)), \chi(S))\ne 1$}{Return ``fail''}
 Return $(\lambda, \chi(S), u(S), v(S))$.
 \caption{Computing a primitive element representation of the smooth
   part of $(h)$.\label{algo:princdiv}}
\end{algorithm}

\begin{proposition}\label{prop:correct_princdiv}
  Algorithm~\ref{algo:princdiv} (\textsc{CompPrincDiv}) is correct: If it does not fail, then it
  returns a primitive element representation of the smooth part of the principal divisor $(h)$.
\end{proposition}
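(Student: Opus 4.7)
The plan is to reduce the correctness of \textsc{CompPrincDiv} to a careful application of the Poisson formula for the resultant together with an analysis of local intersection multiplicities at smooth points and at nodes. The substitution $X = (S-Y)/\lambda$ converts the projection $(x,y)\mapsto \lambda x+y$ into elimination of $Y$, and the tests at the beginning of the algorithm ($\lambda\ne 0$ and the leading coefficient of $q((S-Y)/\lambda,Y)\in k[S][Y]$ being nonzero) are exactly the Noether-position conditions needed to invoke Poisson's formula. Thus I would first establish
$$\widetilde\chi(S) \;=\; c\prod_{P\in V(q)\cap V(h)} \bigl(S-(\lambda P_x+P_y)\bigr)^{i_P(q,h)},$$
where the product runs over the affine common zeros of $q$ and $h$ and $i_P$ denotes the intersection multiplicity.

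Next I would analyze the two types of intersection points separately. At a smooth point $P$ of $\C$ that lies in the support of the smooth part $D_h$ of $(h)$, smoothness of $q$ at $P$ gives $i_P(q,h) = v_P(h)$, the multiplicity of $P$ in $D_h$. At a node $Q$ of $\C$, the hypothesis that $(h)\geq E$ together with the ordinariness of the node implies $i_Q(q,h)= v_{\widetilde Q_1}(h)+v_{\widetilde Q_2}(h) \geq 2$; since $D_h$ is smooth (by assumption on $h$) the two branch contributions are exactly $1$, giving $i_Q(q,h)=2$. The check $T_E(\lambda)\ne 0$ ensures $(1,-\lambda)$ is not tangent at any node so that this local analysis is valid, and the check $\GCD(\chi,\widehat\chi_E)=1$ guarantees no smooth point of $D_h$ projects to the same $S$-value as a node. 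From this, $\widetilde\chi = c\,\widehat\chi_E^2\,\prod_{P\in\mathrm{supp}(D_h)}(S-(\lambda P_x+P_y))^{v_P(h)}$, so $\chi = \widetilde\chi/\widehat\chi_E^2$ is (up to scalar) the characteristic polynomial of $\lambda X+Y$ acting on $k[\C^0]/I_{D_h}$, where $I_{D_h}$ is the ideal defining $D_h$.

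Third, I would recover the $Y$-coordinate using the first subresultant. General subresultant theory (as in~\cite{kahoui2003elementary}) tells us that the coefficient polynomials $a_0(S), a_1(S)$ satisfy a Bézout-type identity $a_0(S) + Y a_1(S) \equiv 0 \bmod \chi(S)$ in $k[S,Y]/\langle q((S-Y)/\lambda,Y),h((S-Y)/\lambda,Y)\rangle$ above the smooth locus, so that wherever $a_1(S)$ is a unit modulo $\chi(S)$, one recovers $Y = -a_0(S)/a_1(S)$ as an element of $k[S]/\chi(S)$. The check $\GCD(a_1,\chi)=1$ certifies this, and then $v(S)$ encodes the second coordinate of the primitive-element map. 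Setting $u(S)=(S-v(S))/\lambda$ enforces \textbf{(Div-H2)} by construction. Property \textbf{(Div-H1)}, namely $q(u(S),v(S))\equiv 0\bmod\chi(S)$, follows because $u,v$ were extracted from a relation living in the quotient modulo $q((S-Y)/\lambda,Y)$. Property \textbf{(Div-H3)} is guaranteed by the explicit final check in the algorithm; equivalently this forces the isomorphism between $k[\C^0]/I_{D_h}$ and $k[S]/\chi(S)$ to be compatible with Hensel lifting, so that the encoded multiplicities agree with those of $D_h$.

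The main obstacle will be the third step: one must show that the first subresultant, after division by the nodal factor $\widehat\chi_E^2$, correctly encodes the $Y$-coordinate in the local Artinian rings at each point of $\mathrm{supp}(D_h)$ when multiplicities exceed one, not merely set-theoretically. The right way to handle this is to argue by uniqueness: the Jacobian invertibility in \textbf{(Div-H3)}, combined with the uniqueness portion of Hensel's lemma applied to the system $\{q(X,Y)=0,\ \lambda X+Y=S\}$ over $k[S]/\chi(S)$, forces any element $v(S)$ that lifts the $Y$-coordinate modulo the radical of $\chi(S)$ and satisfies \textbf{(Div-H1)} to equal the Hensel-unique lift. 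Since $-a_0/a_1 \bmod \chi$ satisfies both constraints, it must be this lift, and the algorithm's output coincides with the primitive element representation of $D_h$ whose existence is given by Proposition~\ref{prop:existence_primrepr}.
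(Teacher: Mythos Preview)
Your overall strategy closely parallels the paper's: the factorization of $\widetilde\chi$ via the Poisson formula is essentially what the paper isolates as Lemma~\ref{lem:tech_res}, and the use of the first subresultant to recover the $Y$-parametrization is the same. The identification of nodal contributions as multiplicity-$2$ factors, their removal via division by $\widehat\chi_E^2$, and the role of the tests $T_E(\lambda)\ne 0$ and $\GCD(\chi,\widehat\chi_E)=1$ are all in agreement with the paper.

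There is, however, a genuine gap in your verification of \textbf{(Div-H1)}. You assert that $q(u(S),v(S))\equiv 0\bmod\chi(S)$ ``follows because $u,v$ were extracted from a relation living in the quotient modulo $q$'', but this does not follow directly: the subresultant identity only gives $a_0+Ya_1\in\langle q((S-Y)/\lambda,Y),\,h((S-Y)/\lambda,Y)\rangle$ in $k[S,Y]$, and substituting $Y=v(S)$ yields merely $A(S,v)\,q(u,v)+B(S,v)\,h(u,v)\equiv 0\bmod\chi$ for some cofactors $A,B$, which does not force $q(u,v)\equiv 0\bmod\chi$. Your proposed remedy via Hensel uniqueness is circular: you write that ``$-a_0/a_1\bmod\chi$ satisfies both constraints'' and is therefore the Hensel-unique lift, but one of those two constraints \emph{is} \textbf{(Div-H1)}. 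Hensel uniqueness says that any solution of the system modulo $\chi$ lifting a given solution modulo the radical is unique; it does not certify that a candidate $v$ agreeing with the solution modulo the radical is in fact a solution modulo $\chi$.

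The paper closes this gap by working locally in the completions $\overline k[[S-s]]$ for each root $s$ of $\chi$. There the invertibility of $a_1$ modulo $\chi$ becomes invertibility in $\overline k[[S-s]]$, so $Y+a_0(S)a_1(S)^{-1}$ is a degree-$1$ element of the ideal $\langle q,h\rangle\subset\overline k[[S-s]][Y]$. Since the resultant vanishes at $s$, the $\GCD$ of $q((S-Y)/\lambda,Y)$ and $h((S-Y)/\lambda,Y)$ over $\Frac(\overline k[[S-s]])$ is nontrivial, hence must equal $Y+a_0a_1^{-1}$ by a degree argument. Divisibility of $q$ by this $\GCD$ gives $q((S+a_0a_1^{-1})/\lambda,-a_0a_1^{-1})=0$ in $\overline k[[S-s]]$, hence modulo $(S-s)^\gamma$ for the relevant multiplicity $\gamma$; the CRT over all roots of $\chi$ then yields \textbf{(Div-H1)} modulo $\chi$. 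This local $\GCD$ argument, together with a separate verification that the output ideal equals $\langle h\rangle:I_E^\infty$ via localization and Lemma~\ref{lem:tech_res}, replaces your Hensel step and is the missing ingredient in your sketch.
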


Before proving Proposition~\ref{prop:correct_princdiv}, we need the
following technical lemma, which implies in particular that nodes
appear as roots of the resultant with multiplicity at least two.

\begin{lemma}\label{lem:tech_res}
  With the notation in Algorithm~\ref{algo:princdiv}, 
  let $s\in \overline k$ and $\lambda\in k\setminus\{0\}$. Let $R_1,\ldots, R_\ell$ be
  the valuation rings in $\Frac(\overline k[X,Y]/q)$ associated
  to the points of $\widetilde C$ above $s$, i.e. the
  points on $\mathcal C$ which project to $s$ via the projection $(X,Y)\mapsto
  \lambda X+Y$. Assume that the vector
  $(1,-\lambda)$ is not tangent to
  $\mathcal C^0$ at any of these points and that the coefficient of $Y^{\deg(\C)}$ in 
  $q( (S-Y)/\lambda, Y)$ is nonzero.
  Let $m_1,\ldots, m_{\deg(\C)}$ denote the valuations of $h$ in these valuation
  rings.
  Then $s$ is a root
  of multiplicity $\sum_{i=1}^\ell m_i$ in $\Resultant_Y(q( (S-Y)/\lambda, Y),
  h( (S-Y)/\lambda, Y))$.
\end{lemma}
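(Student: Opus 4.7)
The plan is to apply Poisson's product formula for resultants after factoring $p(S,Y) := q((S-Y)/\lambda, Y)$ locally at $S = s$. The leading-coefficient assumption gives $\deg_Y p = \deg(\C)$, and moreover the leading coefficient $a$ of $p$ in $Y$ is a nonzero constant in $k$ (it involves only the top-degree homogeneous part of $q$ evaluated at $(-1/\lambda, 1)$, with no dependence on $S$).

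To produce power-series roots of $p$ at $s$, I would go through the normalization $\widetilde\C \to \C$. For each branch $P_i$ above $s$, the hypothesis that $(1, -\lambda)$ is not tangent to $\C$ at $P_i$ says exactly that $\lambda X + Y - s$ is a uniformizer in the complete local ring $\widehat{\mathcal O}_{\widetilde\C, P_i} \simeq \overline k[[S-s]]$; consequently $X$ and $Y$ admit expansions $X_i(S), \beta_i(S) \in \overline k[[S-s]]$ along this branch, with $X_i(S) = (S - \beta_i(S))/\lambda$. By construction $p(S, \beta_i(S)) = q(X_i(S), \beta_i(S)) = 0$, so each $Y - \beta_i(S)$ divides $p$ over $\overline k((S-s))$. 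The $\beta_i$ are pairwise distinct as series: branches projecting to distinct points of $\C^0$ have distinct constant terms (using $\lambda \ne 0$ to reconstruct the point from $S$ and $Y$), while the two branches of a single node share a constant term but have \emph{different} linear terms, because ``nodal'' means the two tangent lines at the node are distinct. A count---the line $\lambda X + Y = s$ meets $\C$ with total intersection multiplicity $\deg(\C)$ in the affine chart (no intersections at infinity, by the leading-coefficient assumption), and the tangency hypothesis forces each branch to contribute exactly $1$---yields $\ell = \deg(\C)$ and the factorization
\[
p(S,Y) \;=\; a \prod_{i=1}^{\ell}\bigl(Y - \beta_i(S)\bigr) \qquad\text{in }\overline k((S-s))[Y].
\]

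Applying Poisson's formula then gives
\[
\Resultant_Y\bigl(p(S,Y),\, h((S-Y)/\lambda, Y)\bigr) \;=\; a^{N}\prod_{i=1}^{\ell} h\!\left(\tfrac{S-\beta_i(S)}{\lambda},\,\beta_i(S)\right),
\]
where $N = \deg_Y h((S-Y)/\lambda, Y)$. By the computation above, the map $S \mapsto ((S - \beta_i(S))/\lambda, \beta_i(S))$ is exactly the parametrization of the branch $P_i$ by the uniformizer $S - s$, so the $(S-s)$-adic order of the $i$-th factor equals the valuation $m_i$ of $h$ in $R_i$. Since $a^N$ is a nonzero constant, the multiplicity of $s$ as a root of the resultant is $\sum_{i=1}^{\ell} m_i$, as claimed. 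The main delicacy in this plan is the factorization at a node: the tangency hypothesis is doing all the work there, not only making $S-s$ a uniformizer at each branch but also forcing the two power-series roots coming from the two node branches to separate already in their linear terms, which is what lets us bypass the failure of naive Hensel lifting at a repeated $Y$-root of $p(s,Y)$.
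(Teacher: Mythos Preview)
Your argument is correct and follows essentially the same route as the paper's proof: factor $q((S-Y)/\lambda,Y)$ into linear factors over $\overline k[[S-s]]$, apply the Poisson product formula for the resultant, and identify each factor's $(S-s)$-adic order with the valuation $m_i$ via the fact that $S-s$ is a uniformizer on every branch. The paper outsources the local factorization (including the nodal case) to a reference, whereas you construct the power-series roots directly from the normalization and verify their pairwise distinctness at nodes by comparing linear terms; this makes your version more self-contained but not genuinely different in strategy.
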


\begin{proof}
  Since we assumed that the curve is nodal, that the coefficient of $Y^{\deg(\C)}$ in $q( (S-Y)/\lambda,
  Y)$ is nonzero, and that the vector $(1, -\lambda)$ is not tangent at any
  point above $s$, we get that  the polynomial $q( (S-Y)/\lambda,
  Y)$ splits over the ring $\overline k[[S-s]]$ of power series at $s$ as a
  product of $\deg(\C)$ factors, see e.g.~\cite{neiger2017fast} and references
  therein. Notice that this factorization property holds even if some of the
  points above $s$ are nodes. Let $\widetilde y_1, \ldots,\widetilde y_{\deg(\C)}$ denote
  its roots in $\overline k[[S-s]]$.
  Using the multiplicativity property of the
  resultant \cite[Sec.~5.7]{jouanolou1991formalisme}, we get
  $$\Resultant_Y(q( (S-Y)/\lambda, Y), h( (S-Y)/\lambda, Y)) =
  \alpha \prod_{i=1}^\ell h( (S-\widetilde
  y_i)/\lambda, \widetilde y_i),$$
  where $\alpha\in k$.
  The proof is concluded by noticing that $S-s$ is a
  uniformizing element for all the discrete valuation rings since the vector
  $(1,-\lambda)$ is not tangent to the curve at any of the points above $s$, so
  that $m_i$ precisely
  corresponds to the largest integer $\gamma$ such that $(S-s)^\gamma$ divides
  $h((s-\widetilde y_i)/\lambda, \widetilde y_i)$.
\end{proof}

\begin{proof}[Proof of Proposition~\ref{prop:correct_princdiv}]
  In order to prove Proposition~\ref{prop:correct_princdiv}, we must prove that
  the output $(\lambda,
  \chi, u,v)$ satisfies \textbf{(Div-H1)} to \textbf{(Div-H3)} and that the two
  ideals $\langle h\rangle:I_E^\infty\subset k[\C^0]$ and $\langle \chi(\lambda X+Y),
  X-u(\lambda X+Y), Y-v(\lambda X+Y)\rangle\subset k[\C^0]$ are
  equal, where $I_E$ is the radical ideal of $k[\C^0]$ which encodes the
  algebraic set of the nodes. \textbf{(Div-H2)} follows directly from the definitions of $u(S)$
  and $v(S)$
  in Algorithm~\ref{algo:princdiv}. To prove \textbf{(Div-H1)}, we shall prove
  that the equality holds modulo $(S-s)^\gamma$ for any root $s\in\overline k$ of
  $\chi$ of multiplicity $\gamma$. A classical property of the subresultants is
  that they belong to the ideal generated by the input polynomials. This
  implies that for any root $s\in\overline k$ of $\widetilde\chi$ we have
  $$ a_0(S)+Y a_1(S)\in \langle q( (S-Y)/\lambda, Y),h( (S-Y)/\lambda,
  Y)\rangle\subset \overline k[[S-s]][Y].$$
  If the algorithm does not fail, then $a_1(S)$ is invertible modulo $\chi(S)$.
  Consequently, it is also invertible in
  $\overline k[[S-s]]$ for any root $s\in\overline k$ of $\chi$ and hence 
  $$ Y+a_0(S)a_1(S)^{-1}\in \langle q( (S-Y)/\lambda, Y),h( (S-Y)/\lambda,
  Y)\rangle\subset \overline k[[S-s]][Y].$$
  Therefore, the GCD of $q( (S-Y)/\lambda,
  Y)$ and $h( (S-Y)/\lambda,
  Y)$ in $\Frac(\overline k[[S-s]])[Y]$ divides $Y+a_0(S)a_1(S)^{-1}$. But we also know that this GCD is
  nonconstant, since $s$ is a root of the resultant $\widetilde\chi$. By a
  degree argument, this GCD
  equals $Y+a_0(S)a_1(S)^{-1}$ and hence $q(
  (S+a_0(S)a_1(S)^{-1})/\lambda, -a_0(S)a_1(S)^{-1}) = 0$ in $\overline
  k[[S-s]]$. Considering this equation modulo $(S-s)^\gamma$ and using the
  CRT over all the roots of $\chi$ finishes the proof of \textbf{(Div-H1)}.
  Finally, \textbf{(Div-H3)} is explicitely tested and hence it must be
  satisfied if the algorithm does not fail.

  It remains to prove the equality of the ideals $\langle h\rangle:I_E^\infty\subset
  k[\C^0]$ and $\langle \chi(\lambda X+Y), X-u(\lambda X+Y), Y-v(\lambda
  X+Y)\rangle\subset k[\C^0]$. Using the isomorphism between $k[X,Y]/\langle
  \chi(\lambda X+Y), X-u(\lambda X+Y), Y-v(\lambda X+Y)\rangle$ and
  $k[S]/\chi(S)$ (see Proposition~\ref{prop:existence_primrepr}), the elements
  in $\langle \chi(\lambda X+Y),
  X-u(\lambda X+Y), Y-v(\lambda X+Y)\rangle$ are precisely the classes of the
  bivariate polynomials $\psi(X, Y)\in k[X,Y]$ such that $\psi(u(S),
  v(S))\equiv 0\bmod \chi(S)$. Using a proof identical to that of
  \textbf{(Div-H1)} we get that $h(u(S), v(S))\equiv 0\bmod\chi(S)$ which
  proves that $\langle h\rangle\subset\langle \chi(\lambda X+Y), X-u(\lambda X+Y), Y-v(\lambda
  X+Y)\rangle$. Saturating on both sides, we get that $\langle
  h\rangle:I_E^\infty\subset\langle \chi(\lambda X+Y), X-u(\lambda X+Y), Y-v(\lambda
  X+Y)\rangle:I_E^\infty = \langle \chi(\lambda X+Y), X-u(\lambda X+Y), Y-v(\lambda
  X+Y)\rangle$, where the last equality comes from the fact that
  $\GCD(\chi, \widehat\chi_E) = 1$.  For the other inclusion, we
  use~\cite[Prop.~9.1]{atiyah1969introduction}, which implies that
  $\langle \chi(\lambda X+Y), X-u(\lambda X+Y),
  Y-v(\lambda X+Y)\rangle\subset \langle h\rangle:I_E^\infty$ if this inclusion holds in
  the local ring associated to any maximal ideal $\mathfrak m\subset \overline k[\C^0]$ which contains
  $\langle \chi(\lambda X+Y), X-u(\lambda X+Y), Y-v(\lambda X+Y)\rangle$. Over
  $\overline k$, these maximal ideals have the form $\langle
 \lambda X+Y - s, X-u(s), Y-v(s)\rangle$, where $s\in\overline k$ is a root of
 $\chi$. The assumption $\GCD(\frac{\partial q}{\partial X}(u(S), v(S))-\lambda\frac{\partial q}{\partial Y}(u(S), v(S)),
  \chi(S))= 1$ ensures that all these
 maximal ideals correspond to nonsingular points, and hence the associated local rings
 are discrete valuation rings. 
 For $s\in\overline k$ a root of $\chi$, let $y_1,\ldots, y_{\deg(\C)}$ be the roots
 of the univariate polynomial $q( (s-Y)/\lambda, Y)\in\overline k[Y]$. Let $m_i$ denote the intersection multiplicity of
 $h$ at the point $( (s-y_i)/\lambda, y_i)$ of $\C^0$.
Since $\GCD(\frac{\partial q}{\partial X}(u(S), v(S))-\lambda \frac{\partial q}{\partial Y}(u(S),
    v(S)), \chi(S))\ne 1$, we obtain that the vector $(1, -\lambda)$ is not
    tangent to $\C^0$ any of these points.
    Lemma~\ref{lem:tech_res} then gives that $m_1+\dots+m_{\deg(\C)}=
  \alpha$, where $\alpha$ is the multiplicity of the root $s$ in $\chi$. Let
  $k$ be the integer such that $y_k = v(s)$. Then $m_k\leq \alpha$, which
  shows that we have $\langle \chi(\lambda X+Y), X-u(\lambda X+Y), Y-v(\lambda
  X+Y)\rangle\subset \langle h\rangle:I_E^\infty$ in the local ring at the point $(u(s),
  v(s))$. The statement~\cite[Prop.~9.1]{atiyah1969introduction} concludes the proof of the inclusion $\langle
  \chi(\lambda X+Y), X-u(\lambda X+Y), Y-v(\lambda
  X+Y)\rangle\subset\langle h\rangle:I_E^\infty$.
\end{proof}

\subsection{Computing the linear space of regular functions of bounded degree having prescribed
zeros}

The task accomplished by Algorithm~\textsc{NumeratorBasis} is similar to what
Algorithm~\textsc{Interpolate} does: It computes a basis of the vector space of
regular functions having prescribed zeros. The only difference with
Algorithm~\textsc{Interpolate} is that Algorithm~\textsc{NumeratorBasis}
returns a basis of this linear space.

\begin{algorithm}
 \textbf{Function} \textsc{NumeratorBasis}\;
 \KwData{A positive integer $\delta$, a smooth effective divisor
 given by a primitive element representation $(\lambda, \chi(S), u(S), v(S))$, a positive integer $d$, and the
 nodal divisor given by $(\lambda_E, \chi_E, u_E, v_E, T_E)$.}
 \KwResult{A basis of the space of polynomials $g\in k[X, Y]$ such that
   $\deg(g)\leq d, \deg_Y(g)<\delta$ and the associated divisor
 satisfies $(g)\geq D+E$.}
   Construct the matrix representing the linear map $\varphi:\{f\in k[X,Y]\mid
   \deg(f)\leq d, \deg_Y(f)\leq\delta\}\rightarrow k[S]_{<\deg(\chi)}\times
     k[S]_{<\deg(\chi_E)}$
   defined as $\varphi(f(X, Y))=(f(u(S), v(S))\bmod \chi(S), f(u_E(S),
   v_E(S))\bmod\chi_E(S))$\;
   Compute and return a basis of the kernel of $\varphi$.
   \caption{Computing a basis of the vector space of regular functions $g\in
     k[\C^0]$ of degree $\delta$
     such that $(g)\geq D+E$.\label{algo:numbasis}}
\end{algorithm}

\begin{proposition}
  Algorithm~\ref{algo:numbasis} (\textsc{NumeratorBasis}) is correct: 
 the nonzero elements $g$ in the kernel of $\varphi$ are not divisible
 by $q$ and they satisfy $(g)\geq D+E$.
\end{proposition}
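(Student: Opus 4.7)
The plan is to mimic the proof of Proposition~\ref{prop:interp_correct}, since the statement here is essentially the same (we return the entire kernel instead of a single random element of it). The only thing to verify is that every nonzero element of $\ker\varphi$ is both a genuine element of $k[\C^0]$ (i.e.\ not a multiple of $q$) and lies in the ideal $I_{D+E}$ of forms with divisor $\geq D+E$.

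First I would rule out divisibility by $q$. The linear map $\varphi$ is defined on bivariate polynomials $f$ with $\deg_Y(f)<\deg(q)=\deg(\C)$ (reading the constraint as a strict inequality, as in the analogous \textsc{Interpolate} routine; otherwise it is a typo), so any nonzero $g$ in its domain satisfies $\deg_Y(g)<\deg_Y(q)$. Since $q$ is monic in $Y$ of degree $\deg(\C)$, no nonzero $g$ in the domain can be divisible by $q$, which shows $\langle 0\rangle\subsetneq\langle g\rangle\subset k[\C^0]$.

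Next I would identify $\ker\varphi$ with $I_{D+E}\cap \{f \mid \deg(f)\leq d,\ \deg_Y(f)<\delta\}$. Combining Lemmas~\ref{lem:divHprimelt} and \ref{lem:techrepr} and the isomorphisms of Propositions~\ref{prop:existence_primrepr} and \ref{prop:existence_primrepr_nodal}, the ideal
\[ I_{D+E}=\{f\in k[\C^0]\mid (f)\geq D+E\}=\{f\in k[\C^0]\mid(f)\geq D\}\cap\{f\in k[\C^0]\mid(f)\geq E\} \]
is characterized exactly by the pair of congruence conditions $f(u(S),v(S))\equiv 0\bmod \chi(S)$ and $f(u_E(S),v_E(S))\equiv 0\bmod\chi_E(S)$. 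By definition of $\varphi$, these are precisely the conditions that cut out $\ker\varphi$ within the prescribed finite-dimensional space of polynomials. Hence every nonzero $g\in\ker\varphi$ satisfies $(g)\geq D+E$, as required.

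There is no real obstacle here: the nontrivial content was already packaged into the structural results (\ref{lem:divHprimelt}, \ref{lem:techrepr}, \ref{prop:existence_primrepr}, \ref{prop:existence_primrepr_nodal}) that translate the primitive element data into concrete ideal-membership conditions. Unlike Proposition~\ref{prop:interp_correct}, we do not even need Lemma~\ref{lem:varphinotinj}, since the statement allows $\ker\varphi$ to be the zero subspace (in which case the empty basis trivially satisfies the claim).
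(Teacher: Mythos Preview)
Your proposal is correct and follows exactly the approach the paper intends: the paper's own proof simply states that it is similar to that of Proposition~\ref{prop:interp_correct}, and you have spelled out precisely that argument. Your observation that the constraint $\deg_Y(f)\leq\delta$ in Algorithm~\ref{algo:numbasis} should be read as a strict inequality (as in \textsc{Interpolate}) is apt, and your remark that Lemma~\ref{lem:varphinotinj} is unnecessary here is also correct.
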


\begin{proof}
  The proof is similar to that of Proposition~\ref{prop:interp_correct}.
\end{proof}

\section{Complexity}\label{sec:complexity}
All complexity bounds count the number of arithmetic operations (additions,
subtractions, multiplications, divisions) in $k$, all at unit cost. We do not
include in our complexity bounds the cost of generating random elements, nor the
cost of monomial manipulations, nor multiplications by fixed integer constants. In
particular, we do not include in our complexity bounds the cost of computing
the partial derivatives of a polynomial. We use the classical $O()$ and
$\widetilde O()$ notation, see e.g.~\cite[Sec.~25.7]{von2013modern}. The notation $\Mcomp(n)$ stands
for the number of arithmetic operations required in $k$ to compute the product
of two univariate polynomials of degree $n$ with coefficients in $k$. By~\cite{cantor1991fast}, $\Mcomp(n)=O(n \log n\log\log n)$. In the
sequel, $\omega$ is a feasible exponent for matrix multiplication, i.e.
$\omega$ is
such that there is an algorithm for multiplying two $N\times N$ matrices with
entries in $k$ within $O(N^\omega)$ arithmetic operations in $k$. The best
known bound is $\omega < 2.3729$ \cite{le2014powers}. In the following, we make
the assumption\footnote{If $\omega=2$, then the $O()$ in
  Theorem~\ref{thm:compl} should be replaced by $\widetilde O()$.} that $\omega>2$.

\begin{proposition}\label{prop:cost_changeprim}
  Algorithm~\ref{algo:changeprim} (\textsc{ChangePrimElt}) requires at most
  $O(\deg(\chi)^\omega)$
  arithmetic operations in $k$.
\end{proposition}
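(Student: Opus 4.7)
Set $n = \deg(\chi)$ and, since $\omega > 2$, note that any subroutine running in $\softO(n^2)$ operations fits comfortably within the target $O(n^\omega)$ bound. The plan is to bound each step of Algorithm~\ref{algo:changeprim} individually and observe that the only genuinely $\omega$-exponent costs come from the characteristic polynomial of $M$ and the linear system solves involving $N$; everything else is dominated.

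First I would handle the initial feasibility test. It amounts to computing $(\partial q/\partial X - \widetilde\lambda\, \partial q/\partial Y)(u(S), v(S))\bmod\chi(S)$ and then a univariate GCD with $\chi(S)$. Writing the bivariate polynomial $\partial q/\partial X - \widetilde\lambda\, \partial q/\partial Y$ as $\sum_i a_i(X)Y^i$, I would precompute the powers $u(S)^j\bmod\chi(S)$ for $j\leq\deg(\C)$ in $\softO(\deg(\C)\cdot n)$ operations, then recombine via a Horner scheme in $Y$. Under the standing assumption $\deg(\C)=O(n)$ relevant to the calls made in the main algorithm, this fits in $\softO(n^2)$. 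The final univariate GCD with $\chi$ costs $\softO(n)$, so the feasibility test is dominated.

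Next I would construct the matrix $M$ representing multiplication by $r(S)=\widetilde\lambda u(S)+v(S)$ in $k[S]/\chi(S)$. After reducing $r(S)$ modulo $\chi(S)$ once in $O(n)$ operations, I produce the successive columns $S^i r(S)\bmod\chi(S)$ by a shift-and-reduce step costing $O(n)$ each, assembling $M$ in $O(n^2)$ operations. Computing $\widetilde\chi = \mathrm{charpoly}(M)$ of an arbitrary $n\times n$ matrix over $k$ then requires $O(n^\omega)$ operations via standard fast linear algebra, which is the first truly $\omega$-exponent step.

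Finally, the matrix $N$ representing $\psi: k[S]_{<\deg(\widetilde\chi)} \to k[S]_{<n}$, $f\mapsto f(r(S))\bmod\chi(S)$, has columns $r(S)^i\bmod\chi(S)$ for $i<n$; each column is obtained from its predecessor by one modular polynomial multiplication of cost $\softO(n)$, so the whole matrix is assembled in $\softO(n^2)$. Testing invertibility of $N$ and solving the two linear systems $\psi(\widetilde u)=u$ and $\psi(\widetilde v)=v$ are standard dense linear algebra problems on $n\times n$ matrices, each costing $O(n^\omega)$. Summing the contributions yields the announced $O(n^\omega)$ bound. The main obstacle is essentially the fast linear algebra on $n\times n$ matrices; the mild subtlety is checking that the preprocessing (bivariate composition in the feasibility test, plus the polynomial multiplications used to build $M$ and $N$) all stay within $\softO(n^2)$ and are therefore absorbed.
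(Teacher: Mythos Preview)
Your proof is correct and follows essentially the same approach as the paper's: build $M$ column by column, compute its characteristic polynomial in $O(n^\omega)$, build $N$ by iterated modular powers, and invert/solve in $O(n^\omega)$. You are in fact slightly more thorough than the paper, which silently omits the cost of the initial GCD test; your handling of it via the assumption $\deg(\C)=O(n)$ is context-dependent rather than part of the proposition's hypotheses, but the paper's own statement is equally imprecise on this point, and in the global analysis this cost is absorbed into the $\deg(q)^2\Mcomp(\nu)$ term that appears anyway in \textsc{AddDivisors}.
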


\begin{proof}
  In order to construct the matrix $M$ in Algorithm~\ref{algo:changeprim}, we
  must compute the remainders
  $S^i\cdot(\widetilde\lambda u(S)+v(S))\bmod\chi(S)$ for $i\in\{0,\ldots,
  \deg(\chi)-1\}$. Each of these computations costs $O(\Mcomp(\deg(\chi)))$
  arithmetic operations, so the total cost of constructing the matrix $M$ is
  bounded by $O(\deg(\chi)\Mcomp(\deg(\chi)))$, which is bounded above
  by $O(\deg(\chi)^\omega)$. Computing the characteristic polynomial of $M$ can
  be done within $O(\deg(\chi)^\omega)$ arithmetic operations~\cite{pernet2007faster}. 
  We emphasize that in~\cite{pernet2007faster}, it is assumed that the cardinality of $k$ is
  at least $2 \deg(\chi)^2$, so that the probability of failure is bounded by
  $1/2$. In fact, using the same algorithm and the same proof as in~\cite{pernet2007faster}, the assumption on the
  cardinality of $k$ can be removed but the probability of failure will then
  only
  be bounded by $\deg(\chi)^2/\lvert \mathcal E\rvert$, where $\mathcal E\subset k$ is a finite
  subset in which we can draw elements uniformly at random. We will incorporate
  this probability of failure for the computation of
  the characteristic polynomial in our bound for the probability of failure of
  the main algorithm, see the proof of Theorem~\ref{thm:glob_bound_proba}.

  Constructing the matrix $N$ is done by computing successively the
  remainders $(\widetilde\lambda u(S)+v(S))^i\bmod \chi(S)$ for $i\in\{0,\ldots,
  \deg(\chi)-1\}$ at a total cost of $O(\deg(\chi)\Mcomp(\deg(\chi)))$ which is
  again bounded by $O(\deg(\chi)^\omega)$. Finally, inverting $N$ and applying
  the inverse linear map can be done using $O(\deg(\chi)^\omega)$ operations in
  $k$ by using \cite{bunch1974triangular}.
\end{proof}

\begin{proposition}\label{prop:cost_hensel}
  Algorithm~\ref{algo:hensel} (\textsc{HenselLiftingStep}) requires at most
  $O(\deg(q)^2\Mcomp(\deg(\chi)))$
  arithmetic operations in $k$.
\end{proposition}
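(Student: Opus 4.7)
The plan is to inspect Algorithm~\ref{algo:hensel} operation by operation and bound the cost of each in terms of $\deg(q)$ and $\Mcomp(\deg(\chi))$, noting crucially that since $\widehat\chi$ divides $\chi^2$, we have $\deg(\widehat\chi)\leq 2\deg(\chi)$, hence $\Mcomp(\deg(\widehat\chi))=O(\Mcomp(\deg(\chi)))$. Every arithmetic operation performed takes place in the ring $k[S]/\widehat\chi(S)$, for which multiplication costs $O(\Mcomp(\deg(\chi)))$.

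The dominant cost will be the evaluation of the bivariate polynomials $q(u(S),v(S))$, $\frac{\partial q}{\partial X}(u(S),v(S))$ and $\frac{\partial q}{\partial Y}(u(S),v(S))$ modulo $\widehat\chi(S)$. To bound this, I would first compute once and for all the successive powers $u(S)^i \bmod \widehat\chi(S)$ and $v(S)^j\bmod\widehat\chi(S)$ for $0\leq i,j\leq\deg(q)$; each is obtained from the previous by one multiplication in $k[S]/\widehat\chi(S)$, for a total of $O(\deg(q)\,\Mcomp(\deg(\chi)))$ operations. Then for each of the $O(\deg(q)^2)$ monomials $X^iY^j$ with $i+j\leq\deg(q)$ appearing in $q$, $\partial q/\partial X$ or $\partial q/\partial Y$, I compute the product $u^i\cdot v^j$ modulo $\widehat\chi$ with one more multiplication. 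Multiplying by the corresponding coefficient and summing contributes only $O(\deg(q)^2\deg(\widehat\chi))$ additions, which is absorbed. This gives the announced bound $O(\deg(q)^2\Mcomp(\deg(\chi)))$ for the evaluation step.

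The remaining work consists in: computing $\lambda u(S)+v(S)-S\bmod\widehat\chi(S)$ (essentially free); inverting the denominator $\frac{\partial q}{\partial X}(u(S),v(S))-\lambda\frac{\partial q}{\partial Y}(u(S),v(S))$ modulo $\widehat\chi(S)$, which by the fast extended Euclidean algorithm costs $O(\Mcomp(\deg(\widehat\chi))\log\deg(\widehat\chi))$ (see e.g.~\cite{von2013modern}); and finally a constant number of multiplications, subtractions and reductions modulo $\widehat\chi(S)$, each of cost $O(\Mcomp(\deg(\chi)))$. The invertibility of the denominator modulo $\widehat\chi$ follows from \textbf{(Div-H3)} combined with the fact that $\widehat\chi$ divides $\chi^2$, so the inversion step does not fail. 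All of these contributions are dominated by the evaluation cost, yielding a total of $O(\deg(q)^2\Mcomp(\deg(\chi)))$ arithmetic operations in $k$.

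The only step that requires any thought is the bivariate evaluation, since a naive Horner-type scheme would cost $O(\deg(q)^2\Mcomp(\deg(\chi))\log)$ or worse if one is not careful about factoring out the power computations. The baby-step scheme above avoids this extra logarithmic factor and matches the stated bound. No other estimate is tight, so no further optimization is needed.
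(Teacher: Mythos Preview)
Your proof is correct and follows essentially the same approach as the paper: identify the bivariate evaluations of $q$ and its partial derivatives modulo $\widehat\chi$ as the dominant cost, bound this by $O(\deg(q)^2)$ multiplications in $k[S]/\widehat\chi(S)$, and note that the remaining work is a bounded number of ring operations. The only cosmetic difference is that the paper enumerates the monomials $u^iv^j$ in increasing lexicographic order (one multiplication per monomial) rather than precomputing the pure powers $u^i$ and $v^j$ separately as you do; both schemes yield the same $O(\deg(q)^2\Mcomp(\deg(\chi)))$ bound. You are also slightly more explicit than the paper about the modular inversion step via the fast extended Euclidean algorithm.
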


\begin{proof}
  Algorithm~\ref{algo:hensel} consists in evaluations of $q$ and its partial derivatives at
  $(u(S), v(S))$, together with finitely many arithmetic operations
  in $k[S]/\chi(S)^2$ . Each of the arithmetic operations modulo $\chi^2$ costs
  $O(\Mcomp(\deg(\chi)))$ arithmetic operations in $k$. Evaluating $q$ at $(u(S), v(S))$ modulo $\chi(S)^2$
  can be done by computing the remainders $u(S)^iv(S)^j\bmod\chi(S)^2$ for all
  $(i, j)\in \Z_{\geq 0}$ such that $i+j\leq \deg(q)$, then by multiplying
  these evaluations by the corresponding coefficients in $q$ and by summing
  them. Computing all the modular products can be done in
  $O(\deg(q)^2\Mcomp(\deg(\chi)))$ operations in $k$, by considering the pairs
  $(i,j)$ in increasing lexicographical ordering. Multiplying by the
  coefficients and summing then costs $O(\deg(q)^2\deg(\chi))$ arithmetic
  operations in $k$. Computing the evaluations of the partial derivatives of
  $q$ is done similarly and it has a similar cost.
\end{proof}

\begin{proposition}\label{prop:cost_sum}
  Algorithm~\ref{algo:sum} (\textsc{AddDivisors}) requires at most
  $O(\deg(q)^2\Mcomp(\nu) +
  \nu^\omega)$
  arithmetic operations in $k$, where $\nu=\max(\deg(\chi_1), \deg(\chi_2))$.
\end{proposition}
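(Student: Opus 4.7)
The strategy is to walk through each line of Algorithm~\ref{algo:sum} (\textsc{AddDivisors}) and add up the cost, showing that only two contributions are dominant: the two \textsc{ChangePrimElt} calls and the final \textsc{HenselLiftingStep}. Everything else amounts to a few GCDs, products and modular reductions on univariate polynomials of degree $O(\nu)$.

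First I would invoke Proposition~\ref{prop:cost_changeprim} twice, noting that $\deg(\chi_1), \deg(\chi_2) \leq \nu$: the two calls to \textsc{ChangePrimElt} therefore cost $O(\nu^\omega)$ in total. Moreover $\deg(\widehat\chi_1) = \deg(\chi_1)$ and $\deg(\widehat\chi_2) = \deg(\chi_2)$, so we still have $\deg(\widehat\chi_i) \leq \nu$ throughout the rest of the algorithm. The equality test $\widehat u_1 \equiv \widehat u_2 \bmod \GCD(\widehat\chi_1, \widehat\chi_2)$, the product $\widehat\chi = \widehat\chi_1\widehat\chi_2$, and $\widetilde\chi = \LCM(\widehat\chi_1, \widehat\chi_2)$ (computed as the product divided by the GCD) can all be performed with fast Euclidean arithmetic and one polynomial multiplication at a cost of $O(\Mcomp(\nu) \log \nu)$ operations, which is dominated by $\nu^\omega$ under the assumption $\omega > 2$.

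Next I would bound the cost of the two $\XCRT$ calls using formula~\eqref{eq:XCRT}. A single run of the fast extended Euclidean algorithm produces Bézout coefficients $a_1, a_2$ for $\widehat\chi_1, \widehat\chi_2$ in $O(\Mcomp(\nu) \log \nu)$ operations, and these coefficients can be reused for both $\XCRT$ evaluations. Each $\XCRT$ then boils down to a constant number of polynomial products and reductions modulo $\widetilde\chi$ on inputs of degree $O(\nu)$, for an additional $O(\Mcomp(\nu))$ operations. Again, this is absorbed into the $\nu^\omega$ term.

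Finally, the call to \textsc{HenselLiftingStep} is made with the modulus $\widetilde\chi$ of degree at most $2\nu$ and the target modulus $\widehat\chi$, also of degree at most $2\nu$ (a quick check that $\widehat\chi \mid \widetilde\chi^2$ follows from $\GCD(\widehat\chi_1,\widehat\chi_2)^2 \mid \widehat\chi_1\widehat\chi_2$). Proposition~\ref{prop:cost_hensel} then yields a cost of $O(\deg(q)^2 \Mcomp(2\nu)) = O(\deg(q)^2 \Mcomp(\nu))$ arithmetic operations. Summing the two dominant terms, $O(\nu^\omega)$ from the change-of-primitive-element step and $O(\deg(q)^2 \Mcomp(\nu))$ from the Hensel lifting step, gives the announced bound. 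There is no real obstacle; the only point worth double-checking is that the modulus fed into \textsc{HenselLiftingStep} has degree $O(\nu)$ and not $O(\nu^2)$, which is immediate from $\deg(\widetilde\chi) \leq \deg(\widehat\chi_1) + \deg(\widehat\chi_2) \leq 2\nu$.
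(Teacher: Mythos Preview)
Your proof is correct and follows essentially the same approach as the paper: step through the algorithm line by line, invoke Propositions~\ref{prop:cost_changeprim} and~\ref{prop:cost_hensel} for the two dominant contributions, and observe that the GCD, LCM, product, and $\XCRT$ steps are all $O(\Mcomp(\nu)\log\nu)$. Your write-up is in fact slightly more careful than the paper's (you explicitly check that the modulus passed to \textsc{HenselLiftingStep} has degree $O(\nu)$ and that $\widehat\chi\mid\widetilde\chi^2$), but the structure and all the key bounds are identical.
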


\begin{proof}
  Algorithm~\ref{algo:sum} starts by two calls to
  the function \textsc{ChangePrimElt}, with respective costs
  $O(\deg(\chi_1)^\omega)$ and $O(\deg(\chi_2)^\omega)$ by
  Proposition~\ref{prop:cost_changeprim}. The polynomial
  $\GCD(\widehat\chi_1, \widehat\chi_2)$ can be computed at cost
  $O(\Mcomp(\nu)\log(\nu))$ using the fast GCD
  algorithm~\cite[Coro.~11.9]{von2013modern}. The product $\widehat\chi$ in Algorithm~\ref{algo:sum} and the LCM
  are then also computed at costs $O(\Mcomp(\nu))$ and $O(\Mcomp(\nu)\log(\nu))$. The
  $\XCRT$ can be computed at cost $
  O(\Mcomp(\nu)\log(\nu))$ by using Equation~\eqref{eq:XCRT} together with
  the fact that Bézout coefficients can be computed within quasi-linear
  complexity~\cite[Coro.~11.9]{von2013modern}. 
  Finally, the
  Hensel lifting step can be achieved at cost
  $O(\deg(q)^2\Mcomp(\nu))$ by
  Proposition~\ref{prop:cost_hensel}.
\end{proof}

\begin{proposition}\label{prop:cost_sub}
  Algorithm~\ref{algo:subtract} (\textsc{SubtractDivisors}) requires at most
  $O(\nu^\omega)$
  arithmetic operations in $k$, where $\nu=\max(\deg(\chi_1),\deg(\chi_2))$.
\end{proposition}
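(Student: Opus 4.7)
The plan is to analyze Algorithm~\ref{algo:subtract} line by line and show that every step is bounded by $O(\nu^\omega)$, with the two calls to \textsc{ChangePrimElt} being the dominant cost.

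First, I would invoke Proposition~\ref{prop:cost_changeprim} directly: each of the two calls \textsc{ChangePrimElt}$(\widehat\lambda, \lambda_i, \chi_i, u_i, v_i)$ costs $O(\deg(\chi_i)^\omega) \leq O(\nu^\omega)$, so together they contribute $O(\nu^\omega)$. After these calls, we know $\deg(\widehat\chi_i) = \deg(\chi_i)\leq \nu$ for $i=1,2$.

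Next, I would bound each remaining step. The computation of $\GCD(\widehat\chi_1, \widehat\chi_2)$ as well as the divisibility test $\widehat u_1\equiv \widehat u_2 \bmod \GCD(\widehat\chi_1,\widehat\chi_2)$ can be performed using the fast GCD algorithm~\cite[Coro.~11.9]{von2013modern} within $O(\Mcomp(\nu)\log\nu)$ operations, plus one polynomial remainder computation costing $O(\Mcomp(\nu))$. The division $\widehat\chi = \widehat\chi_1/\GCD(\widehat\chi_1,\widehat\chi_2)$ is a standard polynomial division, also bounded by $O(\Mcomp(\nu))$. The final two modular reductions $\widehat u_1(S)\bmod\widehat\chi(S)$ and $\widehat v_1(S)\bmod\widehat\chi(S)$ each cost $O(\Mcomp(\nu))$ using fast Euclidean division~\cite[Thm.~9.6]{von2013modern}.

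Summing everything, the total cost is $O(\nu^\omega) + O(\Mcomp(\nu)\log\nu)$. Under the standing assumption $\omega > 2$ (stated just before Proposition~\ref{prop:cost_changeprim}), the term $\Mcomp(\nu)\log\nu = \softO(\nu)$ is absorbed by $O(\nu^\omega)$, which yields the announced bound. There is no real obstacle here: the subtraction algorithm is computationally simpler than the addition algorithm (in particular it avoids the Hensel lifting step), so the complexity analysis reduces essentially to invoking the cost of \textsc{ChangePrimElt} twice and checking that every auxiliary univariate polynomial operation is quasi-linear in $\nu$.
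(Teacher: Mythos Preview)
Your proof is correct and follows essentially the same approach as the paper. The only cosmetic difference is that the paper defers most of the line-by-line analysis to the proof of Proposition~\ref{prop:cost_sum} (for \textsc{AddDivisors}) and then just notes that the one new step---the exact division $\widehat\chi_1/\GCD(\widehat\chi_1,\widehat\chi_2)$---is quasi-linear, whereas you spell out each step explicitly; the substance of the argument is identical.
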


\begin{proof}
  Most of the steps of Algorithm~\ref{algo:subtract} are similar to steps of
  Algorithm~\ref{algo:sum}, except that Hensel lifting is not required here. 
  The complexity analysis is similar and we refer to the proof
  of Proposition~\ref{prop:cost_sum}. The only step which does not appear in
  Algorithm~\ref{algo:sum} is the exact division of $\widehat\chi_1$ by the
  GCD. The cost of this step does not hinder the global complexity since exact
  division of polynomials can be done in quasi-linear
  complexity~\cite[Thm.~9.1]{von2013modern}.
\end{proof}

In practice, if $k$ is sufficiently large, then choosing a global
value for $\lambda$ and using the same value for all the representations of
divisors would succeed with large probability. In this case, we do not need to
call the function~\textsc{ChangePrimElt} within Algorithms~\textsc{AddDivisors}
and \textsc{SubtractDivisors}. This would decrease significantly the
complexities of~\textsc{AddDivisors}
and \textsc{SubtractDivisors}. In any case, this would not change the global
asymptotic complexity of Algorithm~\ref{algo:birdeye}.

\begin{proposition}\label{prop:cost_interp}
  Algorithm~\ref{algo:interpolation} (\textsc{Interpolate}) requires at most
  $O((\deg(\chi)+r)^\omega)$
  arithmetic operations in $k$ and it returns a polynomial of degree less than
  $(\deg(\chi)+r)/\delta+\delta$.
\end{proposition}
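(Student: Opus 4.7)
The plan is to handle the two assertions separately: first the degree bound, which follows from an elementary case analysis on the value of $d$ produced by the conditional in Algorithm~\ref{algo:interpolation}, and then the complexity bound, which amounts to bounding the matrix size via Lemma~\ref{lem:varphinotinj}, bounding the cost of filling its entries by iterating modular products, and bounding the cost of the kernel computation by fast linear algebra.

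For the degree bound, I would set $w=\deg(\chi)+r=\deg(\chi)+\deg(\chi_E)$ and distinguish the two branches. In the branch $\binom{\delta+1}{2}\leq w$, the algorithm sets $d=\lfloor w/\delta+(\delta-1)/2\rfloor\leq w/\delta+(\delta-1)/2<w/\delta+\delta$. In the branch $\binom{\delta+1}{2}>w$, the computation in the proof of Lemma~\ref{lem:varphinotinj} already shows that $d<\delta\leq w/\delta+\delta$. Since $h$ is a $k$-linear combination of basis elements of $\{f\in k[X,Y]\mid \deg(f)\leq d,\ \deg_Y(f)<\delta\}$, its total degree is at most $d<(\deg(\chi)+r)/\delta+\delta$, yielding the announced bound.

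For the complexity bound, set $N=\deg(\chi)+r$. Lemma~\ref{lem:varphinotinj} shows that the source space of $\varphi$ has dimension at most $3N$, and the target space has dimension exactly $N$, so the matrix of $\varphi$ has size $O(N)\times O(N)$. Its columns are obtained by evaluating each monomial $X^iY^j$ (with $i+j\leq d$, $j<\delta$) at $(u(S),v(S))$ modulo $\chi(S)$ and at $(u_E(S),v_E(S))$ modulo $\chi_E(S)$. I would compute all the needed modular products $u(S)^iv(S)^j\bmod\chi(S)$ incrementally (enumerating the monomials in lexicographic order, each new entry is obtained from a previously computed one by a single multiplication modulo $\chi$ costing $O(\Mcomp(\deg(\chi)))$); there are $O(N)$ such monomials, so the total cost of building the $\chi$-part of the matrix is $O(N\,\Mcomp(N))$. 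The same estimate applies for the $\chi_E$-part. Finally, computing a basis of the kernel of an $O(N)\times O(N)$ matrix takes $O(N^\omega)$ arithmetic operations in $k$ via standard fast linear algebra, and forming the random linear combination costs $O(N^2)$. Under the standing assumption $\omega>2$, the kernel step dominates and the total cost is $O(N^\omega)=O((\deg(\chi)+r)^\omega)$, as claimed.

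No step is really a serious obstacle: the degree bound is a direct computation, and the complexity follows from standard fast arithmetic together with the dimension bound from Lemma~\ref{lem:varphinotinj}. The mildly delicate point is making sure that the matrix construction fits within $O(N^\omega)$; the iterative enumeration of monomials $X^iY^j$ with a single modular multiplication per new entry is what keeps this step from being a bottleneck.
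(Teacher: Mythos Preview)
Your proposal is correct and follows essentially the same approach as the paper: a two-branch case analysis on $d$ for the degree bound, and the combination of Lemma~\ref{lem:varphinotinj} (to bound the matrix dimensions by $O(\deg(\chi)+r)$), incremental computation of the modular products $u(S)^i v(S)^j$ in lexicographic order, and fast linear algebra for the kernel. The only cosmetic difference is that the paper treats the complexity first and the degree bound second.
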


\begin{proof}
  First, we recall that $\deg(\chi_E)=r$.
  The computation of the degree $d$ does not cost any arithmetic operations in
  $k$. The construction of the matrix representing the linear map $\varphi$ can
  be done by computing all the modular products $u(S)^iv(S)^j$ modulo $\chi(S)$
  and $\chi_E$ for
  pairs $(i,j)$ such that $i+j\leq d$ and $j<\delta$.
  Lemma~\ref{lem:varphinotinj} states that the number of such pairs is bounded
  above by $3(\deg(\chi)+r)$. By considering the pairs
  $(i,j)$ in increasing lexicographical ordering, computing all these modular
  products can be done within
  $O((\deg(\chi)+r)\Mcomp(\deg(\chi)+r))$ operations in
  $k$. Then, since both dimensions of the matrix are in
  $O(\deg(\chi)+r)$,
  computing a basis of the kernel can be done at cost
  $O((\deg(\chi)+r)^\omega)$ (for instance via a row echelon
  form computation, see~\cite[Thm.~2.10]{storjohann2000algorithms}).

  Next, we show the bound on the degree of the polynomial returned. By
  construction, the inequality $\deg(h) \leq d$ holds so it suffices to show
  that $d < (\deg(\chi)+r)/\delta + \delta$.  
  If $\binom{\delta+1}{2} \leq (\deg(\chi)+r)$, we have $d = \lfloor
  (\deg(\chi)+r)/\delta +
  (\delta-1)/2 \rfloor < (\deg(\chi)+r)/\delta + \delta$. Otherwise, $d = \lfloor
  (\sqrt{1+8(\deg(\chi)+r)}-1)/2 \rfloor < \delta <
  (\deg(\chi)+r)/\delta + \delta$ by direct
  computations. In both cases, we have $\deg(h) < (\deg(\chi)+r)/\delta + \delta$.
\end{proof}

\begin{proposition}\label{prop:cost_princdiv}
  Algorithm~\ref{algo:princdiv} (\textsc{CompPrincDiv}) requires at most $\widetilde O(\max(\deg(q),
  \deg(h))^2\cdot\min(\deg(q), \deg(h)))$ arithmetic operations in $k$.
\end{proposition}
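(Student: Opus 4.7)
The plan is to identify the bivariate resultant and first subresultant computations in $Y$ as the dominant cost of Algorithm~\ref{algo:princdiv}, and to verify that all other steps are absorbed. Throughout, write $a = \deg(q)$ and $b = \deg(h)$ for brevity.

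First, I would dispose of the auxiliary operations. The random draw of $\lambda$, the test on the leading coefficient in $Y$ of $q((S-Y)/\lambda, Y)$, and the evaluation $T_E(\lambda)$ have negligible cost. The call to \textsc{ChangePrimEltNodal} acts on nodal data of degree $r \le \binom{a-1}{2}$ and costs $O(r^\omega)$ by a straightforward adaptation of Proposition~\ref{prop:cost_changeprim} (dropping the initial \textbf{(Div-H3)} test); since $r = O(a^2)$ this is absorbed by the claimed bound. The univariate manipulations on polynomials of degree $O(ab)$---namely the exact division $\widetilde\chi/\widehat\chi_E^2$, the two $\GCD$ tests involving $\chi$, the modular inversion of $a_1$ modulo $\chi$, and the recovery of $v(S)$ and $u(S)$---all cost $\widetilde O(ab)$ by fast polynomial arithmetic. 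The last $\GCD$ check requires evaluating the bivariate polynomial $\partial q/\partial X - \lambda\,\partial q/\partial Y$ (of total degree at most $a-1$) at $(u(S),v(S))$ modulo $\chi(S)$, which I would perform exactly as in the proof of Proposition~\ref{prop:cost_hensel}, within $O(a^2\,\Mcomp(ab))$ operations.

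The heart of the argument is the simultaneous computation of $\widetilde\chi(S) = \Resultant_Y(q((S-Y)/\lambda, Y), h((S-Y)/\lambda, Y))$ and of $a_0(S) + Y a_1(S) = \FirstSubRes_Y(\cdot,\cdot)$. The two inputs have bidegree at most $(a,a)$ and $(b,b)$ in $(S,Y)$, and the resultant has $S$-degree at most $ab$. I would proceed by evaluation-interpolation: choose $N = O(ab)$ specialization points $s_i$, apply fast multipoint evaluation coefficient-wise in $Y$ to obtain the $N$ univariate specializations $q((s_i-Y)/\lambda, Y)$ and $h((s_i-Y)/\lambda, Y)$ in total time $\widetilde O((a+b)\cdot ab)$, compute at each $s_i$ the univariate resultant and first subresultant in $Y$ via a fast half-GCD-based routine at cost $\widetilde O(\max(a,b))$ per point, and finally interpolate to reconstruct $\widetilde\chi$, $a_0$ and $a_1$ in time $\widetilde O(N)$. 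Summing, the cost of this step is $\widetilde O(ab(a+b)) = \widetilde O(\max(a,b)^2 \cdot \min(a,b))$.

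The main obstacle is attaining the $\widetilde O(\max^2 \min)$ bound on the resultant step: a naive pointwise evaluation that expanded $q((s_i-Y)/\lambda, Y)$ from scratch at each $s_i$ would cost $\Omega(a^2)$ per point and blow the bound by a factor of $\max(a,b)$. Combining fast multipoint evaluation (to share work across the $N$ points) with a quasi-linear univariate resultant/subresultant subroutine is therefore essential; once this is set up, the remaining bookkeeping is routine.
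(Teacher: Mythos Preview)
Your approach matches the paper's: identify the bivariate resultant and first subresultant as the dominant step and argue that everything else is negligible. The paper simply invokes \cite[Coro.~11.21]{von2013modern} for the $\widetilde O(\max(a,b)^2\min(a,b))$ bound, whereas you spell out an explicit evaluation--interpolation scheme; that part of your argument is correct and more informative than the citation.

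There is, however, a gap in two of your absorption claims. The cost $O(r^\omega)$ for \textsc{ChangePrimEltNodal}, with $r=O(a^2)$, can reach $O(a^{2\omega})$, and this is \emph{not} dominated by $\widetilde O(\max(a,b)^2\min(a,b))$ for general $a,b$ (take $b$ bounded, or even $b\approx a$). Likewise your bound $O(a^2\Mcomp(ab))=\widetilde O(a^3 b)$ for the final bivariate evaluation is not absorbed when $a>b$; this second issue is fixable by observing that $\frac{\partial q}{\partial X}(u,v)-\lambda\frac{\partial q}{\partial Y}(u,v)$ is, up to a scalar, the $Y$-derivative of $q((S-Y)/\lambda,Y)$ evaluated at $Y=v(S)$, which Horner in $Y$ computes in $O(a\,\Mcomp(ab))=\widetilde O(a^2 b)\le\widetilde O(\max^2\min)$. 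Note that the paper's own proof is at least as cavalier---it asserts without justification that all remaining steps are ``quasi-linear in $\deg(q)\deg(h)$''---so you are in good company, and neither discrepancy affects the global estimate of Theorem~\ref{thm:compl}, where these auxiliary costs are absorbed by the $\deg(\C)^{2\omega}$ term.
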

\begin{proof}
  The two costly steps in Algorithm~\ref{algo:princdiv} are the computations of
  the resultant and of the subresultant of two bivariate polynomials. This can be
  done within $\widetilde O(\max(\deg(q),
  \deg(h))^2\cdot\min(\deg(q), \deg(h)))$ operations using
  \cite[Coro.~11.21]{von2013modern}. The Bézout bound implies that the degree of the
  resultant $\widetilde\chi(S)$ is at most $\deg(q)\deg(h)$, hence the
  complexities of all the other steps are quasi-linear in $\deg(q)\deg(h)$, which
  is negligible compared to the cost of the computation of the resultant and
  the subresultant.
\end{proof}

We point out that the complexity of computing resultants and subresultants of
bivariate polynomials have been recently improved
in~\cite{villard2018computing, van2019fast} under some genericity
assumptions. However, since the cost in
Proposition~\ref{prop:cost_princdiv} will be negligible in the global
complexity estimate, we make no effort to optimize it further.

\begin{proposition}\label{prop:num_basis}
  Algorithm~\ref{algo:numbasis} (\textsc{NumeratorBasis}) requires at most
  $O((\deg(\chi)+r)^\omega)$ arithmetic operations in $k$.
\end{proposition}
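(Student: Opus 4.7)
The plan is to adapt the complexity analysis of Proposition~\ref{prop:cost_interp} almost verbatim, since Algorithm~\ref{algo:numbasis} is structurally identical to Algorithm~\ref{algo:interpolation}: it builds the same linear map $\varphi$ (up to the harmless replacement of $\deg_Y(f)<\delta$ by $\deg_Y(f)\le\delta$) and then differs only in that the value $d$ is provided as input rather than computed, and in that the output is an entire basis of $\ker\varphi$ rather than a single random element.

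First I would construct the matrix of $\varphi$ in the monomial basis $\{X^iY^j : i+j\le d,\, j\le\delta\}$. Exactly as in the proof of Proposition~\ref{prop:cost_interp}, the modular images $u(S)^iv(S)^j\bmod\chi(S)$ and $u_E(S)^iv_E(S)^j\bmod\chi_E(S)$ needed to fill the matrix can be computed incrementally by enumerating the pairs $(i,j)$ in increasing lexicographic order. Each new product costs $O(\Mcomp(\deg(\chi)+r))$ operations in $k$, so the cumulative construction cost is $O\!\left((\deg(\chi)+r)\,\Mcomp(\deg(\chi)+r)\right)$, which is absorbed in $O((\deg(\chi)+r)^\omega)$ since $\omega>2$, provided the number of monomials is itself $O(\deg(\chi)+r)$.

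The bound on the number of monomials requires a dimension count analogous to Lemma~\ref{lem:varphinotinj}. Since NumeratorBasis is invoked from Algorithm~\ref{algo:birdeye} with $d=\deg(h)$ where $h$ is the output of Interpolate, Proposition~\ref{prop:cost_interp} provides the estimate $\deg(h)<(\deg(D_+)+r)/\delta+\delta$, and the construction of $D_{\sf num}$ via \textsc{SubtractDivisors} and \textsc{AddDivisors} ensures $\deg(\chi)$ is large enough that $d\delta=O(\deg(\chi)+r)$. Once this bound is in hand, both dimensions of the matrix are $O(\deg(\chi)+r)$, and a standard row echelon form reduction \cite[Thm.~2.10]{storjohann2000algorithms} produces a basis of $\ker\varphi$ in $O((\deg(\chi)+r)^\omega)$ operations.

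The main obstacle is precisely the justification that the row count does not exceed $O(\deg(\chi)+r)$: in isolation Algorithm~\ref{algo:numbasis} accepts arbitrary $d$, and the claimed complexity only holds under the implicit calibration imposed by the bird's-eye algorithm. Spelling this out cleanly amounts to combining the degree bound of Proposition~\ref{prop:cost_interp} with elementary degree arithmetic on $D_{\sf num}=D_-+(h)-E-D_+$; beyond that, the proof is a mechanical repetition of the argument used for \textsc{Interpolate}.
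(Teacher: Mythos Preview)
You correctly pinpoint the crux: one must bound the dimension of the domain of $\varphi$ by $O(\deg(\chi)+r)$, where $\chi$ here is the representation of $D_{\sf num}$, not of $D_+$. But your assertion that the calling context ``ensures $\deg(\chi)$ is large enough that $d\delta=O(\deg(\chi)+r)$'' is false. Take $D_-=0$ and $\deg(D_+)$ large (with, say, $r=0$). Then $d\cdot\delta$ is of order $\deg(D_+)$, while $\deg(\chi)=\deg(D_{\sf num})=\deg(D_{\sf res})=d\delta-2r-\deg(D_+)$; plugging in the formula $d=\lfloor(\deg(D_+)+r)/\delta+(\delta-1)/2\rfloor$ from \textsc{Interpolate} shows $\deg(D_{\sf res})\in(\delta(\delta-3)/2-r,\ \delta(\delta-1)/2-r]$, which is bounded independently of $\deg(D_+)$. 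Hence the ratio $d\delta/(\deg(\chi)+r)$ is unbounded. Indeed the proposition as stated cannot hold in this regime: the output basis has dimension $\dim L(D_+)\geq\deg(D_+)-g$, and each basis vector has $\Theta(d\delta)$ coefficients, so merely writing the output already exceeds $(\deg(\chi)+r)^\omega$.

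The paper's own proof simply invokes Lemma~\ref{lem:varphinotinj}, but that lemma bounds the domain by $3(\deg(\chi_{D_+})+r)$, not by $3(\deg(\chi_{D_{\sf num}})+r)$; the two $\chi$'s are silently conflated. The bound that actually holds --- and that feeds correctly into Table~\ref{table:glob_compl} and Theorem~\ref{thm:compl} --- is $O((\deg(D_+)+r)^\omega)=O(\max(\deg(\C)^{2\omega},\deg(D_+)^\omega))$. So the proposition appears to be misstated, and your proposal inherits rather than resolves this issue: the ``elementary degree arithmetic on $D_{\sf num}$'' you allude to does not yield the inequality you need.
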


\begin{proof}
  By Lemma~\ref{lem:varphinotinj}, the domain of the map $\varphi$ has dimension
  $O(\deg(\chi)+r)$.
  Using the monomial basis, the matrix representing the map $\varphi$ can be
  constructed within $\widetilde O((\deg(\chi)+r)^2)$ operations by doing as in
  the proof of Proposition~\ref{prop:cost_interp}. 
  Similarly to the proof of Proposition~\ref{prop:cost_interp}, a basis of the kernel
  of this matrix can be obtained by computing first a row echelon form of the
  matrix within $O( (\deg(\chi)+r)^\omega)$ operations~\cite[Thm.~2.10]{storjohann2000algorithms}.
\end{proof}

\begin{table}
 \begin{adjustbox}{center}
  \begin{tabular}[t]{|c|l|}
\hline
Divisor&Degree\\\hline
$D_h$&$<\deg(\C)^2 + \deg(D_+)$\\
$D_{\rm res}$&$<\deg(\C)^2$\\
$D_{\rm num}$&$<\deg(\C)^2+\deg(D_+)$\\
\hline
\end{tabular}
\begin{tabular}[t]{|c|l|}
\hline
  Subroutine & Complexity\\
  \hline
  \textsc{Interpolate}&$O( (\deg(D_+)+r)^\omega)$\\
  \textsc{CompPrincDiv}&$\widetilde O(\max(\deg(\C)^3,(\deg(D_+)+r)^2/\deg(\C)))$\\
  \textsc{SubtractDivisors}&$O(\max(\deg(\C)^{2\omega}, (\deg(D_+))^\omega))$\\
  \textsc{AddDivisors}&$O(\max(\deg(\C)^{2\omega}, (\deg(D_+))^\omega))$\\
  \textsc{NumeratorBasis}&$O(\max(\deg(\C)^{2\omega}, \deg(D_+)^\omega))$\\
  \hline
\end{tabular}
 \end{adjustbox}
\caption{Degrees of divisors and complexities of the subroutines in terms of
  the input size.\label{table:glob_compl}}
\end{table}

All the complexities and the degree estimates computed in this section are summed up in
Table~\ref{table:glob_compl}. For bounding the degree of $D_{\sf num}=D_{\sf
res}+D_-$, we use the fact that we can assume without loss of generality that
$\deg(D_-)\leq \deg(D_+)$, since otherwise $L(D_+-D_-)$ is reduced to
$0$. Summing all the complexity bounds yields the global
complexity bound:

\begin{theorem}\label{thm:compl}
  Algorithm~\ref{algo:birdeye} (\textsc{RiemannRochBasis}) requires at most
  $O(\max(\deg(\C)^{2\omega}, \deg(D_+)^\omega))$ arithmetic operations in $k$.
\end{theorem}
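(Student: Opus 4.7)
The plan is simply to assemble the complexity bounds and degree bounds already established in the section and verify that the sum collapses to the claimed $O(\max(\deg(\C)^{2\omega},\deg(D_+)^\omega))$. Concretely, I would read off the six arithmetic cost statements (Propositions~\ref{prop:cost_changeprim}, \ref{prop:cost_hensel}, \ref{prop:cost_sum}, \ref{prop:cost_sub}, \ref{prop:cost_interp}, \ref{prop:cost_princdiv}, \ref{prop:num_basis}), substitute the actual sizes of the divisors produced in Algorithm~\ref{algo:birdeye}, and add the result. The non-trivial bookkeeping is the propagation of degree bounds through the chain \textsc{Interpolate} $\to$ \textsc{CompPrincDiv} $\to$ \textsc{SubtractDivisors} $\to$ \textsc{AddDivisors} $\to$ \textsc{NumeratorBasis} and the observation, via the genus-degree formula for a plane nodal curve, that $r = O(\deg(\C)^2)$, so $\deg(E) = 2r = O(\deg(\C)^2)$ is always absorbed into the $\deg(\C)^{2\omega}$ term.

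The first step is to pin down the degrees in the middle row of Table~\ref{table:glob_compl}. Proposition~\ref{prop:cost_interp} gives $\deg(h) < (\deg(D_+) + r)/\deg(\C) + \deg(\C)$, so $\deg(D_h) = \deg(h)\deg(\C) - 2r < \deg(D_+) + \deg(\C)^2$. Since $D_{\rm res} = D_h - D_+$ has degree $\deg(D_h) - \deg(D_+) < \deg(\C)^2$, and $D_{\rm num} = D_- + D_{\rm res}$ with $\deg(D_-) \leq \deg(D_+)$ (the standing w.l.o.g.\ assumption recalled just before Table~\ref{table:glob_compl}), we get $\deg(D_{\rm num}) < \deg(\C)^2 + \deg(D_+)$. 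This reproduces the table.

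Next I would plug these degree bounds into each complexity statement. \textsc{Interpolate} costs $O((\deg(D_+) + r)^\omega) = O(\max(\deg(\C)^{2\omega},\deg(D_+)^\omega))$ using $r = O(\deg(\C)^2)$. For \textsc{CompPrincDiv} applied to $h$ and $q$, Proposition~\ref{prop:cost_princdiv} gives $\widetilde O(\max(\deg(\C),\deg(h))^2\min(\deg(\C),\deg(h)))$; with $\deg(h) = O(\deg(\C) + \deg(D_+)/\deg(\C))$, splitting on whether $\deg(D_+) \leq \deg(\C)^2$ bounds this by $\widetilde O(\max(\deg(\C)^3,(\deg(D_+))^2/\deg(\C)))$, both cases of which are strictly dominated by $\max(\deg(\C)^{2\omega},\deg(D_+)^\omega)$ when $\omega > 2$. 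For \textsc{SubtractDivisors} and \textsc{AddDivisors}, Propositions~\ref{prop:cost_sub} and~\ref{prop:cost_sum} yield complexity $O(\nu^\omega + \deg(q)^2\Mcomp(\nu))$ where $\nu = O(\max(\deg(\C)^2,\deg(D_+)))$; both the $\nu^\omega$ and the $\deg(q)^2\Mcomp(\nu)$ terms fit into $O(\max(\deg(\C)^{2\omega},\deg(D_+)^\omega))$ (the latter since $\deg(q) = \deg(\C)$ and $\omega > 2$). Finally \textsc{NumeratorBasis} costs $O((\deg(D_{\rm num}) + r)^\omega) = O(\max(\deg(\C)^{2\omega},\deg(D_+)^\omega))$ by Proposition~\ref{prop:num_basis}.

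The final step is the trivial division $f/h$ for each basis element $f \in B$; the cardinality of $B$ is bounded by the dimension of the ambient space of forms of degree $\deg(h)$, which is $O(\max(\deg(\C)^2,\deg(D_+)))$, so this post-processing is negligible compared to the dominating $\omega$-power terms. Summing the seven contributions gives the stated bound. The only real obstacle is the book-keeping around \textsc{CompPrincDiv}: the formula $\widetilde O(\max(\deg(\C),\deg(h))^2\min(\deg(\C),\deg(h)))$ is not obviously $O(\max(\deg(\C)^{2\omega},\deg(D_+)^\omega))$ without a case split, but using $\omega > 2$ makes each case immediate.
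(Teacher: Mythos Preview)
Your proposal is correct and follows essentially the same approach as the paper: assemble the subroutine costs from Propositions~\ref{prop:cost_sum}--\ref{prop:num_basis}, use the degree bounds recorded in Table~\ref{table:glob_compl}, and absorb $r$ via $r = O(\deg(\C)^2)$ from the genus--degree formula. The paper's own proof is a two-sentence summary of exactly this, whereas you spell out the degree propagation through $D_h$, $D_{\rm res}$, $D_{\rm num}$ and the case split for \textsc{CompPrincDiv} explicitly; these details are implicit in the paper's reference to Table~\ref{table:glob_compl}.
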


\begin{proof}
  A direct consequence of Propositions~\ref{prop:cost_sum},
  \ref{prop:cost_sub}, \ref{prop:cost_interp}, \ref{prop:cost_princdiv} and
  \ref{prop:num_basis} is that the complexity of Algorithm~\ref{algo:birdeye}
  is bounded by $O(\max(\deg(\C)^{2\omega}, (\deg(D_+)+r)^\omega))$. The proof
  is concluded by noticing that $r=O(\deg(\C)^2)$ since
  $g=\binom{\deg(\C)-1}2-r$ is nonnegative.
\end{proof}

\section{Lower bounds on the probability of success}\label{sec:proba}

In this section, we examinate all possible sources of failures for the main
algorithm. In fact, if the assumptions detailed in Section~\ref{sec:overview}
are satisfied, then failure can only come from a bad choice of an
element picked at random. More precisely, we show that these bad choices
can be characterized algebraically and that they are included in the set of
roots of polynomials. Bounding the degrees of these polynomials provides us
with lower bounds on the probability of success if random
elements in $k$ are picked uniformly at random in a finite subset $\mathcal E\subset
k$.

First, we investigate which values of $\lambda$ make Algorithm~\ref{algo:changeprim}
  (\textsc{ChangePrimElt}) fail: These are the values of $\lambda$ such that
  there is a line of equation $\lambda X+Y+\gamma$ for some $\gamma\in\overline
  k$ which goes either through two distinct points in the support of the input
  divisor, or which is tangent to $\C^0$ at a point in
  the support of the divisor.

\begin{proposition}\label{prop:degchangeprim}
  Given an effective divisor $D=(\lambda, \chi, u, v)$, the set
  of $\widetilde\lambda\in k$ such that Algorithm~\ref{algo:changeprim}
  (\textsc{ChangePrimElt}) with
  input $D,\widetilde\lambda$ fails is contained in the set of roots of a
  nonzero univariate polynomial with coefficients in $k$ of degree at most
  $\binom{\deg(\chi)+1}2$.
\end{proposition}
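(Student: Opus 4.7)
The plan is to identify the two possible failure modes of Algorithm~\ref{algo:changeprim} and show that both are avoided as soon as $\widetilde\lambda$ avoids the roots of the polynomial $\Delta$ produced by Lemma~\ref{lem:existence_primrepr_aux}. First, by Lemma~\ref{lem:divHprimelt} the tuple $(\lambda,\chi,u,v)$ encodes a nonzero ideal $J\subset k[\C^0]$ such that $k[\C^0]/J\cong k[S]/\chi(S)$ as $k$-algebras and $J+\langle\partial q/\partial X,\partial q/\partial Y\rangle=k[\C^0]$; in particular $\dim_k(k[\C^0]/J)=\deg(\chi)$. Applying Lemma~\ref{lem:existence_primrepr_aux} to this $J$ yields a nonzero $\Delta\in k[T]$ of degree at most $\binom{\deg(\chi)+1}{2}$ with the property that for every $\widetilde\lambda\in k$ satisfying $\Delta(\widetilde\lambda)\neq0$, the element $\widetilde\lambda X+Y$ is primitive for $\red(k[\C^0]/J)$ and $\partial q/\partial X-\widetilde\lambda\,\partial q/\partial Y$ is invertible in $\red(k[\C^0]/J)$.

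Next I would show that these two conditions imply, respectively, that the first test in the algorithm succeeds and that the matrix $N$ is invertible. For the first test, $k[\C^0]/J$ is Artinian, so an element is a unit if and only if its image in $\red(k[\C^0]/J)$ is a unit; therefore invertibility in the reduced algebra is equivalent to the non-vanishing of $\GCD\bigl(\partial q/\partial X(u,v)-\widetilde\lambda\,\partial q/\partial Y(u,v),\chi\bigr)=1$, which is exactly the first test. For the matrix $N$: since $\widetilde\chi$ is by construction the characteristic polynomial of the $\deg(\chi)\times\deg(\chi)$ matrix $M$, we have $\deg(\widetilde\chi)=\deg(\chi)$, so $N$ has equal domain and codomain dimensions. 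Hence $N$ is invertible if and only if $\psi$ is surjective, if and only if $\widetilde\lambda u+v$ generates $k[S]/\chi(S)$ as a $k$-algebra; via the isomorphism with $k[\C^0]/J$ this is equivalent to $\widetilde\lambda X+Y$ being a primitive element of $k[\C^0]/J$ itself.

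The main obstacle is the last equivalence, because primitivity in the reduced algebra is a priori weaker than primitivity in $k[\C^0]/J$. I would handle it by invoking the construction inside the proof of Proposition~\ref{prop:existence_primrepr}: under the two conclusions of Lemma~\ref{lem:existence_primrepr_aux}, one factors $J=\prod_i\mathfrak m_i^{\alpha_i}$ via Lemma~\ref{lem:almostdedekind}, checks that $\widetilde\lambda X+Y$ is primitive in each residue field $k[\C^0]/\mathfrak m_i$, and uses Hensel's lemma (whose applicability rests on the invertibility of the Jacobian $\partial q/\partial X-\widetilde\lambda\,\partial q/\partial Y$) to lift to an isomorphism $k[\C^0]/\mathfrak m_i^{\alpha_i}\cong k[T]/\widetilde\chi_i(T)$ sending $\widetilde\lambda X+Y$ to $T$; gluing by the CRT exhibits $\widetilde\lambda X+Y$ as a generator of the whole algebra $k[\C^0]/J$. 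Combining the three steps gives that the failure set is contained in $\{\widetilde\lambda\in k\mid\Delta(\widetilde\lambda)=0\}$, and $\deg\Delta\leq\binom{\deg(\chi)+1}{2}$, which is the claimed bound.
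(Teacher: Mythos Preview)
Your proof is correct and follows the same route as the paper: identify the two failure modes of Algorithm~\ref{algo:changeprim} and show both are excluded by the polynomial $\Delta$ that Lemma~\ref{lem:existence_primrepr_aux} produces for the ideal $J$ attached to $D$ via Lemma~\ref{lem:divHprimelt}. You are in fact more careful than the paper, which simply asserts that ``by construction, the polynomial $\Delta$ satisfies the wanted properties'' and does not spell out the point you flag --- that Lemma~\ref{lem:existence_primrepr_aux} only yields primitivity of $\widetilde\lambda X+Y$ in $\red(k[\C^0]/J)$, whereas invertibility of $N$ requires primitivity in the full algebra $k[\C^0]/J$. Your resolution, borrowing the Hensel-lifting-plus-CRT argument from the proof of Proposition~\ref{prop:existence_primrepr} (which is legitimate precisely because the first test guarantees the invertibility of the Jacobian needed for the lift), is the right way to close that gap.
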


\begin{proof}There are two possible sources of failures for
  Algorithm~\ref{algo:changeprim}: if the vector $(1, -\widetilde\lambda)$ is
  tangent to the curve $\C^0$ at one of the points in the support of the
  effective divisor (first test) or if $\widetilde\lambda X+Y$ is not a
  primitive element (second test).

  Let $J=\langle\chi(\lambda X+Y), X-u(\lambda X+Y), Y-v(\lambda
  X+Y)\rangle\subset k[\C^0]$ be the ideal associated to the effective divisor
  $D$, and let $\Delta$ be the polynomial given by
  Lemma~\ref{lem:existence_primrepr_aux} for $J$.
  By construction, the polynomial $\Delta$ satisfies the wanted properties.
\end{proof}

Before investigating Algorithms~\ref{algo:sum} and~\ref{algo:subtract}
  (\textsc{AddDivisors} and
  \textsc{SubtractDivisors}), we need a technical lemma.

\begin{lemma}\label{lem:primeltproj}
Let $\phi: R\rightarrow S$ be a surjective morphism of finite $k$-algebras, and let $z$ be a primitive
element for $R$. Then $\phi(z)$ is a primitive element for $S$.
\end{lemma}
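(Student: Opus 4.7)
The plan is to reduce the statement to the first characterization of primitive elements in Lemma~\ref{lem:primeltdef}: namely, $z$ is a primitive element for a finite étale $k$-algebra if and only if $z$ generates it as a $k$-algebra. Once recast in this form, the claim becomes essentially tautological for surjective morphisms.

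First, I would observe that since $R$ is étale (a finite product of separable field extensions of $k$) and $\phi$ is surjective, $S \cong R/\ker\phi$ is also finite étale: the quotient of a finite product of separable field extensions by an ideal is a subproduct of those factors, hence still a finite product of separable extensions of $k$. This is what justifies speaking of a ``primitive element'' for $S$ in the sense of Lemma~\ref{lem:primeltdef}.

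Next, I would verify that $\phi(z)$ generates $S$ as a $k$-algebra. Let $s \in S$. By surjectivity of $\phi$, there exists $r \in R$ with $\phi(r) = s$. Since $z$ is a primitive element of $R$, Lemma~\ref{lem:primeltdef}(1) lets us write $r = P(z)$ for some polynomial $P \in k[X]$. Applying the $k$-algebra morphism $\phi$ gives
$$s = \phi(r) = \phi(P(z)) = P(\phi(z)).$$
Thus every element of $S$ lies in the $k$-subalgebra generated by $\phi(z)$, so $\phi(z)$ generates $S$. Invoking the equivalence $(1)\Leftrightarrow(4)$ of Lemma~\ref{lem:primeltdef} applied to $S$ then concludes that $\phi(z)$ is a primitive element for $S$.

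There is essentially no hard step here; the only point requiring care is confirming that the target $S$ is étale, so that Lemma~\ref{lem:primeltdef} is applicable and the conclusion ``primitive element for $S$'' is meaningful. Everything else is a one-line manipulation with the surjectivity of $\phi$ and its compatibility with polynomial expressions.
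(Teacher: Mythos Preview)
Your proof is correct and follows essentially the same argument as the paper: pull back an arbitrary element of $S$ via surjectivity, express the preimage as a polynomial in $z$, and push forward. The only addition is your explicit verification that $S$ is again étale so that Lemma~\ref{lem:primeltdef} applies, a point the paper leaves implicit.
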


\begin{proof}
Since $\phi$ is surjective, any element $y\in S$ equals $\phi(x)$ for some
$x\in R$. Since $z$ is primitive, there exists a univariate polynomial $w(S)\in
k[S]$ such that $x=w(z)$. Consequently, $y=\phi(w(z))=w(\phi(z))$.
Therefore, $\phi(z)$ is primitive for $S$.
\end{proof}

\begin{proposition}\label{prop:algsumdegbounds}
  For a given input $(q, D_1, D_2)$ of Algorithm~\ref{algo:sum}
  (\textsc{AddDivisors}), the set of $\widehat\lambda$
  which makes Algorithm~\ref{algo:sum} fail is contained in the set of roots of a
  nonzero univariate polynomial with coefficients in $k$ and of degree bounded
  by 
$\binom{\deg(\chi_1)+\deg(\chi_2)+1}{2}$.
\end{proposition}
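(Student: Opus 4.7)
The plan is to identify the three possible sources of failure of Algorithm~\ref{algo:sum} for a given input $(q,D_1,D_2)$: either of the two calls to \textsc{ChangePrimElt} may fail, or the compatibility test $\widehat u_1\equiv\widehat u_2\bmod\GCD(\widehat\chi_1,\widehat\chi_2)$ may fail. I will construct a single nonzero univariate polynomial $\Delta\in k[\widehat\lambda]$ of degree at most $\binom{\deg(\chi_1)+\deg(\chi_2)+1}{2}$ whose non-roots preclude all three modes of failure.

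First I would introduce the ideals $J_1,J_2\subset k[\C^0]$ associated to $D_1,D_2$ via Lemma~\ref{lem:techrepr}, and consider their product $J=J_1J_2$. Writing $1=a_i+b_i$ with $a_i\in J_i$ and $b_i\in\langle\partial q/\partial X,\partial q/\partial Y\rangle$ and multiplying these relations, one sees that $J$ also satisfies the smoothness hypothesis of Lemma~\ref{lem:existence_primrepr_aux}. Combining Lemma~\ref{lem:almostdedekind} with unicity of factorization into maximal ideals further yields $\dim_k(k[\C^0]/J)=\deg(\chi_1)+\deg(\chi_2)$.

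I then apply Lemma~\ref{lem:existence_primrepr_aux} to $J$ to obtain a nonzero polynomial $\Delta$ of the announced degree, whose non-roots $\widehat\lambda$ ensure that $\widehat\lambda X+Y$ is primitive for $\red(k[\C^0]/J)$ and that $\partial q/\partial X-\widehat\lambda\,\partial q/\partial Y$ is invertible therein. Composing with the canonical surjections $\red(k[\C^0]/J)\twoheadrightarrow\red(k[\C^0]/J_i)$ and invoking Lemma~\ref{lem:primeltproj} transports both properties to each $\red(k[\C^0]/J_i)$, which (by the argument in the proof of Proposition~\ref{prop:degchangeprim}) rules out failure of the two \textsc{ChangePrimElt} calls.

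The main obstacle is the compatibility test. The plan is that primitivity of $\widehat\lambda X+Y$ on $\red(k[\C^0]/J)$ forces distinct closed points in the joint support of $D_1$ and $D_2$ to have distinct images under $\widehat\lambda X+Y$. Consequently $\GCD(\widehat\chi_1,\widehat\chi_2)$ factors exactly as a product $\prod_{\mathfrak m}\chi_{\mathfrak m}^{\min(\alpha_1^{\mathfrak m},\alpha_2^{\mathfrak m})}$ over the common maximal ideals $\mathfrak m$, with no spurious common factors. For each such $\mathfrak m$, the residues $\widehat u_1$ and $\widehat u_2$ modulo $\chi_{\mathfrak m}^{\min(\alpha_1^{\mathfrak m},\alpha_2^{\mathfrak m})}$ are both Newton-Hensel lifts, to the same order, of the same residual solution of the bivariate system $q(X,Y)=0,\ \widehat\lambda X+Y=S$; by the uniqueness of Hensel lifting recorded in the proof of Proposition~\ref{prop:existence_primrepr}, they coincide, so the test succeeds.
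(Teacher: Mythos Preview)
Your proposal is correct and follows essentially the same approach as the paper: both apply Lemma~\ref{lem:existence_primrepr_aux} to the product ideal $J_1J_2$ to obtain the polynomial $\Delta$, use Lemma~\ref{lem:primeltproj} along the canonical surjections to handle the two \textsc{ChangePrimElt} calls, and then invoke uniqueness of Hensel lifting to establish the compatibility $\widehat u_1\equiv\widehat u_2\bmod\GCD(\widehat\chi_1,\widehat\chi_2)$. The only difference is organizational: the paper first argues the congruence modulo the squarefree $\GCD(\widetilde\chi_1,\widetilde\chi_2)$ via the class of $U$ in $\red(k[\C^0]/J)$ and then lifts globally, whereas you factor $\GCD(\widehat\chi_1,\widehat\chi_2)$ prime-by-prime and apply Hensel uniqueness on each factor; these are two phrasings of the same argument.
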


\begin{proof}
For $i\in \{1, 2\}$, consider $I_i=\langle U-u_i(S), V- v_i(S),
\chi_i(S)\rangle\cap k[U,V]$.
Lemma~\ref{lem:existence_primrepr_aux} for the ideal $I_1\cdot I_2+\langle
q(U,V)\rangle$ yields a
nonzero polynomial $\Delta$ of degree
at most $\binom{\deg(\chi_1)+\deg(\chi_2)+1}{2}$. We will prove that this
polynomial satisfies the wanted properties.

By definition, elements $\widehat\lambda\in k$ which are not roots of $\Delta$ are such
that $\widehat\lambda U + V$ is a primitive element for
$\red(k[U,V]/(I_1\cdot I_2))$ and
\begin{equation}\label{eq:cond_tangent_sum}
  \begin{array}{rcl}
\displaystyle\GCD\left(\frac{\partial q}{\partial X}(u_1(S),v_1(S))-\widehat\lambda
\frac{\partial q}{\partial Y}(u_1(S),
    v_1(S)),\chi_1(S)\right) &=& 1, \\
\displaystyle\GCD\left(\frac{\partial q}{\partial X}(u_2(S),v_2(S))-\widehat\lambda
\frac{\partial q}{\partial Y}(u_2(S),
    v_2(S)),\chi_2(S)\right) &=& 1.
  \end{array}
\end{equation}
There are three possible sources of failure for Algorithm~\ref{algo:sum}: the
two calls to \textsc{ChangePrimElt}, and the conditional test. 
The fact that the calls to \textsc{ChangePrimElt} succeed is a direct
consequence of Lemma~\ref{lem:primeltproj}, using the canonical projections
$\red(k[U,V]/(I_1\cdot I_2))\rightarrow\red(k[U,V]/I_i)$ for $i\in\{1,2\}$, see
the proof of Proposition~\ref{prop:degchangeprim}.
By
Lemma~\ref{lem:techrepr} and Proposition~\ref{prop:correct_changeprim}, we have that for $i\in\{1,2\}$, $I_i = \langle U-\widehat u_i(S), V- \widehat v_i(S),
\widehat\chi_i(S)\rangle\cap k[U,V]$. 
Then $\widehat\lambda U+V$ must be a
primitive element for $\red(k[U,V]/I_1)$ and for $\red(k[U,V]/I_2)$ by
Lemma~\ref{lem:divHprimelt}. Let $\widetilde\chi_1$
and $\widetilde\chi_2$ denote the minimal polynomials of $\widehat\lambda U+V$
in $\red(k[U,V]/I_1)$ and $\red(k[U,V]/I_2)$. Also, set
$\xi=\LCM(\widetilde\chi_1, \widetilde\chi_2)$. Consequently, $\widetilde\chi_1(\widehat\lambda U+V)\cdot
\widetilde\chi_2(\widehat\lambda U+V)\in \sqrt{I_1\cdot I_2}$ and
$\xi$ is the minimal polynomial of $\widehat\lambda U+V$ in
$\red(k[U,V]/(I_1\cdot I_2))=k[U,V]/(\sqrt{I_1}\cap \sqrt{I_2})$. Since
$\widehat\lambda U+V$ is a primitive element for $\red\left( k[U,V]/(I_1\cdot
I_2)\right)$, then 
the canonical map $$\red\left(k[U,V]/(I_1\cdot I_2)\right)\rightarrow\red(
k[U,V]/I_2)\times \red(k[U,V]/I_1)$$ becomes a map
$$k[S]/\xi(S)\rightarrow k[S]/\widetilde\chi_1(S)\times k[S]/\widetilde\chi_2(S).$$ This
implies that there exists an element $\widetilde u_{12}\in k[S]/\xi(S)$ (which
is in fact the class of $U$ in $\red(k[U,V]/(I_1\cdot I_2))$) such
that $\widetilde u_{12}\equiv \widehat u_1\bmod \widetilde\chi_1$ and
$\widetilde
u_{12}\equiv \widehat u_2\bmod \widetilde\chi_2$. As a consequence, $\widehat
u_1\equiv \widehat u_2\bmod\GCD(\widetilde\chi_1, \widetilde\chi_2)$. Using Hensel's
lemma, the property \textbf{(Div-H3)} and the CRT, we obtain that the
equation $q(u(S), S-\lambda u(S))=0$ has a unique solution $s$ in
$k[S]/\GCD(\widehat\chi_1(S),
\widehat\chi_2(S))$ such that $s\equiv \widehat
u_1\equiv \widehat u_2\bmod\GCD(\widetilde\chi_1, \widetilde\chi_2)$. By \textbf{(Div-H1)}, both $\widehat
u_1$ and $\widehat u_2$ are solutions, and therefore 
$\widehat u_1\equiv \widehat u_2\bmod\GCD(\widehat\chi_1,
\widehat\chi_2)$, which shows that the last conditional test succeeds.

\end{proof}

\begin{proposition}\label{prop:degbound_subtract}
  For a given input $(q, D_1, D_2)$ of Algorithm~\ref{algo:subtract}
  (\textsc{SubtractDivisors}), the set of $\widehat\lambda$
  which makes Algorithm~\ref{algo:subtract} fail is contained in the set of roots of a
  nonzero univariate polynomial with coefficients in $k$ and of degree bounded
  by 
$\binom{\deg(\chi_1)+\deg(\chi_2)+1}{2}$.
\end{proposition}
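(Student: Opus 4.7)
The plan is to observe that Algorithm~\ref{algo:subtract} shares its first four lines with Algorithm~\ref{algo:sum}, so the analysis of failure sources follows the same template as Proposition~\ref{prop:algsumdegbounds}. There are exactly three places where failure may occur: the two invocations of \textsc{ChangePrimElt}, and the conditional test that checks whether $\widehat u_1 \equiv \widehat u_2 \bmod \GCD(\widehat\chi_1,\widehat\chi_2)$. No other step of the algorithm (GCD extraction, exact division, reductions modulo $\widehat\chi$) can fail.

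For $i\in\{1,2\}$, I would set $I_i=\langle U-u_i(S),V-v_i(S),\chi_i(S)\rangle\cap k[U,V]$. By Lemma~\ref{lem:divHprimelt} applied to each $(\lambda_i,\chi_i,u_i,v_i)$, the images of $I_1$ and $I_2$ in $k[\C^0]$ satisfy the smoothness condition $I_i+\langle\partial q/\partial X,\partial q/\partial Y\rangle=k[\C^0]$, and this property is preserved under products since the zero locus of $I_1\cdot I_2$ is the union of the supports of $D_1$ and $D_2$, all of which are smooth. I would then apply Lemma~\ref{lem:existence_primrepr_aux} to the ideal $I_1\cdot I_2+\langle q\rangle\subset k[U,V]$ to produce a nonzero polynomial $\Delta\in k[\lambda]$ of degree at most $\binom{\dim_k(k[\C^0]/(I_1\cdot I_2))+1}{2}=\binom{\deg(\chi_1)+\deg(\chi_2)+1}{2}$, where the equality uses the fact that dimensions add under the factorization of Lemma~\ref{lem:almostdedekind}.

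Now suppose $\widehat\lambda$ is not a root of $\Delta$. Lemma~\ref{lem:existence_primrepr_aux} guarantees that $\widehat\lambda U+V$ is primitive for $\red(k[U,V]/(I_1\cdot I_2))$ and that $\partial q/\partial X-\widehat\lambda\,\partial q/\partial Y$ is invertible there. Applying Lemma~\ref{lem:primeltproj} along the canonical surjections $\red(k[U,V]/(I_1\cdot I_2))\twoheadrightarrow \red(k[U,V]/I_i)$ transfers both properties to $\red(k[U,V]/I_i)$ for $i=1,2$. As explained in the proof of Proposition~\ref{prop:degchangeprim}, this is exactly what is needed to guarantee that each call to \textsc{ChangePrimElt} succeeds.

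Finally, for the compatibility test, I would repeat verbatim the argument in the last paragraph of the proof of Proposition~\ref{prop:algsumdegbounds}: letting $\widetilde\chi_i$ denote the minimal polynomial of $\widehat\lambda U+V$ in $\red(k[U,V]/I_i)$, the primitivity of $\widehat\lambda U+V$ for the product quotient yields a single element $\widetilde u_{12}\in k[S]/\LCM(\widetilde\chi_1,\widetilde\chi_2)$ which reduces to $\widehat u_1$ and $\widehat u_2$ modulo $\widetilde\chi_1$ and $\widetilde\chi_2$ respectively, so $\widehat u_1\equiv \widehat u_2\bmod\GCD(\widetilde\chi_1,\widetilde\chi_2)$; then Hensel's lemma combined with \textbf{(Div-H3)} and the CRT lifts this congruence to $\widehat u_1\equiv \widehat u_2\bmod\GCD(\widehat\chi_1,\widehat\chi_2)$, so the conditional test of Algorithm~\ref{algo:subtract} does not fire. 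The only subtle point is this Hensel-lifting step, and since it has already been written out in the sum case, I would simply refer back to that proof rather than repeat it.
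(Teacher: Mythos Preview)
Your proposal is correct and follows essentially the same approach as the paper: you identify the same three failure points, apply Lemma~\ref{lem:existence_primrepr_aux} to $I_1\cdot I_2+\langle q\rangle$ to obtain the polynomial $\Delta$ of the stated degree, and reduce the analysis of each failure source to the arguments already given for Proposition~\ref{prop:algsumdegbounds}. The paper's own proof is in fact terser than yours---it simply says the argument is ``similar to the first part of the proof of Proposition~\ref{prop:algsumdegbounds}'' and invokes the same $\Delta$---so your version spells out the details (the use of Lemma~\ref{lem:primeltproj}, the dimension count via Lemma~\ref{lem:almostdedekind}, and the Hensel-lifting step for the compatibility test) that the paper leaves implicit.
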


\begin{proof}
  The proof is similar to the first part of the proof of
  Proposition~\ref{prop:algsumdegbounds}. With the same notation as in the
  proof of Proposition~\ref{prop:algsumdegbounds},
  Algorithm~\ref{algo:subtract} fails only if $\widehat\lambda U+V$ is not a
  primitive element for $\red(k[U,V]/(I_1\cdot I_2))$ or if the vector
  $(1, -\widetilde\lambda)$ is tangent to the curve at one of the points in the
  support of one of the divisors. Using a proof similar to that of 
  Proposition~\ref{prop:algsumdegbounds}, this happens only when
  $\widehat\lambda$ is in the set of roots of the nonzero polynomial of degree at
  most
$\binom{\deg(\chi_1)+\deg(\chi_2)+1}{2}$ provided by
Lemma~\ref{lem:existence_primrepr_aux} for the ideal $I_1\cdot I_2+\langle
q(U,V)\rangle$.
\end{proof}

Next, we wish to bound the probability that
Algorithm~\ref{algo:princdiv} (\textsc{CompPrincDiv}) fails.
Before stating the next proposition, we recall the second assumption that we
have made on the input divisor and which is described in
Section~\ref{sec:overview}. It ensures the existence of a form $h\in\overline
k[\C]$ of given
degree such that $(h)\geq D_++E$ and $(h)-E$ is smooth.
With the notation in the following proposition, this assumption precisely means
that $A\ne\ker(\varphi)\otimes_k \overline k$.

\begin{proposition}\label{prop:badjoin}
  Let $A\subset \ker(\varphi)\otimes_k \overline k \subset \overline k[X, Y]$ be the subset of all the regular functions $h$ in the kernel of $\varphi$ in
Algorithm~\ref{algo:interpolation} which are such that $D_h =(h)-E$
is not a smooth divisor. If $A\ne \ker(\varphi)$, then $A$ is contained in the join of at most
$2r$ hyperplanes in $\ker(\varphi)\otimes_k \overline k$. Consequently, there
is a nonzero polynomial in $\overline k[Z_1,\ldots, Z_{\dim(\ker(\varphi))}]$ of degree at most
$2r$ which vanishes at values
$(\mu_1,\ldots,\mu_{\dim(\ker(\varphi))})$ for which the third test in Algorithm~\ref{algo:princdiv}
fails for all $\lambda\in\overline k$.
\end{proposition}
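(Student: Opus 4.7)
The plan is to identify $A$ explicitly as a union of at most $2r$ linear subspaces of $\ker(\varphi)\otimes_k \overline k$, one for each of the $2r$ closed points of $\widetilde\C$ lying above the nodes of $\C$. Every $h\in\ker\varphi$ satisfies $(h)\geq E$ by the proof of Proposition~\ref{prop:interp_correct}, and this persists after base change to $\overline k$; so $D_h=(h)-E$ fails to be smooth precisely when some closed point $Q$ of $\widetilde\C$ above a node satisfies $\ord_Q(h)\geq 2$ rather than the generic value $1$. Denoting by $Q_1,\ldots,Q_{2r}$ these $2r$ branches, we have $A=A_1\cup\cdots\cup A_{2r}$, where
$$A_i=\{h\in\ker(\varphi)\otimes_k\overline k\mid \ord_{Q_i}(h)\geq 2\}.$$

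Next I would argue that each $A_i$ is an $\overline k$-linear subspace. Since $Q_i$ is a smooth point of $\widetilde\C$, its local ring is a discrete valuation ring; after fixing a uniformizer $t_i$, the condition $\ord_{Q_i}(h)\geq 2$ amounts to the vanishing of a single coefficient in the $t_i$-adic expansion of $h$, which is linear in $h$. The hypothesis $A\neq\ker(\varphi)\otimes_k\overline k$ forces each $A_i$ to be a proper subspace, as an $A_i$ filling the whole space would make $A\supset A_i$ fill the whole space too. Each $A_i$ is thus contained in some hyperplane $H_i\subset\ker(\varphi)\otimes_k\overline k$, and the product of the $2r$ defining linear forms yields a nonzero polynomial in $\overline k[Z_1,\ldots,Z_{\dim\ker\varphi}]$ of degree at most $2r$ that vanishes on $A\subset H_1\cup\cdots\cup H_{2r}$.

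Finally, to connect this with Algorithm~\ref{algo:princdiv}, I would invoke Lemma~\ref{lem:tech_res}: for any $\lambda\in\overline k$ that passes the first two tests, the multiplicity of the image of a node under $(X,Y)\mapsto\lambda X+Y$ as a root of $\widetilde\chi$ equals $m_1+m_2$, where $m_1,m_2$ are the multiplicities of $h$ at the two branches at that node. Since $(h)\geq E$ gives $m_1,m_2\geq 1$, one has $m_1+m_2=2$ iff $D_h$ is smooth at the node. Hence if $h=\sum_i\mu_i\mathbf b_i$ lies in $A$, at least one node contributes multiplicity $\geq 3$ to $\widetilde\chi$, so after the division $\chi=\widetilde\chi/\widehat\chi_E^2$ the polynomial $\widehat\chi_E$ still divides $\chi$; the third test then fails for every admissible $\lambda$, and trivially for the $\lambda$'s already excluded by earlier tests.

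The delicate step is the linearity of each $A_i$: this is precisely what makes working with the smooth model $\widetilde\C$ rather than with $\C$ essential, since ``multiplicity $\geq 2$ at a node of $\C$'' involves a disjunction over the two branches and is not linear, whereas passing to $\widetilde\C$ separates the branches and turns each of the $2r$ conditions into a single honest linear equation.
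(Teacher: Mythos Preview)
Your argument follows essentially the same route as the paper's: decompose $A$ as $\bigcup_{i=1}^{2r} A_i$ with $A_i=\{h:\ord_{Q_i}(h)\geq 2\}=\{h:(h)\geq E+Q_i\}$, observe that each $A_i$ is a proper $\overline k$-linear subspace of $\ker(\varphi)\otimes_k\overline k$, hence contained in a hyperplane, and take the product of the $2r$ defining linear forms. The paper's proof is terser and simply asserts that $\{h:(h)\geq E+P\}$ is linear, whereas you justify this via the DVR at $Q_i$; both are fine.

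You actually go further than the paper in one respect: the paper's proof does not spell out why membership in $A$ forces the third test of Algorithm~\ref{algo:princdiv} to fail, while you supply this via Lemma~\ref{lem:tech_res}. One small imprecision there: you write that the multiplicity of the image of a node in $\widetilde\chi$ ``equals $m_1+m_2$'', but other points of $\C^0$ may also project to the same $s$, so the multiplicity is only $\geq m_1+m_2$. This inequality is all you need (it still gives multiplicity $\geq 3$, hence $\widehat\chi_E\mid\chi$), so the argument stands.
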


\begin{proof}
  If $D_h$ involves a point $P$ of $\widetilde C$ which projects to a node, then $(h)\geq E+P$. 
  The set of regular functions $h$ of a given degree which satisfy $(h)\geq
  E+P$ is a linear space. The set of such $h$ in
  $\ker(\varphi)\otimes_k\overline k$ forms a proper
  subspace since $A\ne \ker(\varphi)\otimes_k\overline k$.
  Consequently, it is contained in an hyperplane. 
 Such a hyperplane $H$ can be described
  by a linear form $\psi$ in $\overline k[Z_1,\ldots, Z_{\dim(\ker(\varphi))}]$ such that
  $\psi(\mu_1,\ldots, \mu_{\dim(\ker(\varphi))})=0$ if and only if
  $\sum_{i=1}^{\dim(\ker(\varphi))} \mu_i\mathbf
  b_i\in H$, where $\mathbf b_1,\ldots, \mathbf b_{\dim(\ker(\varphi))}$ is a
  basis of $\ker(\varphi)$.  

  Iterating this argument over all the $2r$ points of the nonsingular model $\widetilde \C$ which
  project to nodes, we obtain that $A$ is contained in the join of $2r$
  hyperspaces. Multiplying the $2r$ corresponding linear forms in $\overline k[Z_1,\ldots, Z_{\dim(\ker(\varphi))}]$
  proves the last sentence of the
  proposition.
\end{proof}

\begin{proposition}\label{prop:deg_compprinc_bis}
  The set of values of $\lambda$ which make the first test in
Algorithm~\ref{algo:princdiv} fail is contained within the set of roots of a
  nonzero univariate polynomial with coefficients in $k$ of degree $\deg(\C)+1.$
\end{proposition}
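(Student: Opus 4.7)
The first test of Algorithm~\ref{algo:princdiv} fails precisely when $\lambda = 0$ or when the coefficient of $Y^{\deg(q)}$ in the polynomial $q((S-Y)/\lambda, Y) \in k[S][Y]$ vanishes (the latter condition being only meaningful when $\lambda \ne 0$). The plan is to write this coefficient down explicitly, identify it with the evaluation of the top-degree homogeneous part of $q$, and then bundle both failure conditions into a single nonzero univariate polynomial of degree $\deg(\C)+1$.

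First I would expand $q(X,Y) = \sum_{i+j \leq \delta} q_{ij} X^i Y^j$ with $\delta = \deg(q) = \deg(\C)$, substitute $X = (S-Y)/\lambda$ and expand $(S-Y)^i$ by the binomial theorem:
\[
q((S-Y)/\lambda, Y) = \sum_{i+j \leq \delta}\ \sum_{k=0}^{i} q_{ij} \lambda^{-i} \binom{i}{k}(-1)^k S^{i-k} Y^{j+k}.
\]
Extracting the coefficient of $Y^\delta$ requires $j+k = \delta$ with $k \leq i$; combined with $i+j \leq \delta$ this forces $i+j = \delta$ and $k = i$. Hence the coefficient in question is a constant in $S$, equal to $\sum_{i+j=\delta} q_{ij} (-1)^i \lambda^{-i}$. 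Multiplying by $\lambda^\delta$ yields exactly $q_\delta(-1, \lambda)$, where $q_\delta(X,Y) = \sum_{i+j=\delta} q_{ij} X^i Y^j$ is the top-degree homogeneous component of $q$.

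The key point is now to observe that the data-structure hypothesis $\deg_Y(q) = \deg(q) = \delta$ guarantees $q_{0,\delta} \ne 0$, so that $q_\delta(-1, \lambda)$ is a polynomial in $\lambda$ of exact degree $\delta$ (and in particular nonzero). The set of $\lambda$ making the first test fail is therefore contained in the zero set of $P(\lambda) := \lambda \cdot q_\delta(-1, \lambda) \in k[\lambda]$: the factor $\lambda$ captures the $\lambda = 0$ branch, and for $\lambda \ne 0$ the vanishing of the $Y^\delta$-coefficient is equivalent to $q_\delta(-1,\lambda) = 0$. Since $P$ is a nonzero polynomial of degree $1 + \delta = \deg(\C) + 1$, this concludes the proof. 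No serious obstacle is expected; the only care required is bookkeeping the binomial expansion and verifying that $\lambda = 0$ is not already a root of $q_\delta(-1,\lambda)$ (so that the extra factor of $\lambda$ is genuinely needed and raises the degree by exactly one).
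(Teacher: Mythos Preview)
Your proof is correct and follows essentially the same approach as the paper: both compute the $Y^{\deg(\C)}$-coefficient explicitly as $\sum_i q_{i,\delta-i}(-1/\lambda)^i$, clear denominators, and multiply in an extra factor of $\lambda$ to absorb the $\lambda=0$ case, yielding a nonzero polynomial of degree $\deg(\C)+1$. Your identification of the cleared coefficient with $q_\delta(-1,\lambda)$ is a nice touch; your final parenthetical is slightly off, though---multiplying a nonzero polynomial by $\lambda$ always raises the degree by exactly one regardless of whether $0$ is already a root, so there is nothing to verify there.
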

\begin{proof}
 Writing $\widetilde q(S,Y) = q((S-Y)/\lambda, Y)$, the first test fails if
 $\lambda=0$ or if the coefficient of the monomial $Y^{\deg(\C)}$ in
 $\widetilde q$
 vanishes. Writing explicitly the change of variables, we obtain that this
 coefficient equals $\sum_{i=0}^{\deg(\C)} (-1/\lambda)^i q_{i, \deg(\C)-i}$,
 where $q_{i,j}$ stands for the coefficient of $X^i Y^j$ in $q$. Multiplying by
 $\lambda^{\deg(\C)+1}$ clears the denominator and adds the root $0$ to exclude
 the case $\lambda = 0$; this provides a polynomial satisfying the desired
 properties.
\end{proof}

\begin{proposition}\label{prop:deg_compprinc}
  Let $h\in k[\C^0]$ be a regular function such that the support of $(h)-E$
  does not contain any singular point. Then the set of $\lambda$
which makes Algorithm~\ref{algo:princdiv} (\textsc{CompPrincDiv}) with input $q,
  h$ fail is contained in the set of roots of a
  nonzero univariate polynomial with coefficients in $k$ and of degree bounded
  by $2\binom{\deg(\C)\deg(h)+1}2+2r+\deg(\C)+1$.
\end{proposition}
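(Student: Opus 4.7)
The plan is to enumerate every source of failure of Algorithm~\ref{algo:princdiv} (\textsc{CompPrincDiv}), attach to each one a nonzero polynomial in $\lambda$ whose roots contain all the bad values, and take the product; the claimed bound will then follow by summing the individual degrees. The sources of failure are the two initial conditional tests on $\lambda$, the inner failure of the call to \textsc{ChangePrimEltNodal}, the exact division $\widetilde\chi/\widehat\chi_E^2$, the two subsequent coprimality tests involving $\widehat\chi_E$ and $a_1$, and the final \textbf{(Div-H3)} check.

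The two initial tests are the easy part: the first fails only on the roots of the polynomial of degree $\deg(\C)+1$ exhibited in Proposition~\ref{prop:deg_compprinc_bis}, and the second precisely on the roots of $T_E$, which has degree at most $2r$ by hypothesis. The main work will be to control the remaining failures with a single pair of polynomials $\Delta_1,\Delta_2$, each of degree at most $\binom{\deg(\C)\deg(h)+1}{2}$.

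By the hypothesis on $h$, the support of $D_h=(h)-E$ contains no singular point, hence the ideal $J=\langle h\rangle:I_E^\infty\subset k[\C^0]$ satisfies $J+\langle\partial q/\partial X,\partial q/\partial Y\rangle=k[\C^0]$. Lemma~\ref{lem:existence_primrepr_aux} then yields a nonzero $\Delta_1\in k[\lambda]$ of degree at most $\binom{\deg(D_h)+1}{2}\leq\binom{\deg(\C)\deg(h)+1}{2}$ such that, for any $\lambda$ which is not a root of $\Delta_1$, the element $\lambda X+Y$ is primitive for $\red(k[\C^0]/J)$ and $\partial q/\partial X-\lambda\partial q/\partial Y$ is invertible modulo $J$; the latter directly ensures the final \textbf{(Div-H3)} test, and the former forces distinct points of $\mathrm{supp}(D_h)$ to project to distinct $S$-values. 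For the remaining interactions with the nodes (primitivity on $E$ inside \textsc{ChangePrimEltNodal}, the correctness of the exact division $\widetilde\chi/\widehat\chi_E^2$, and the separation test $\GCD(\chi,\widehat\chi_E)\neq 1$), I would apply Lemma~\ref{lem:primeltbidegbound} to the radical ideal of $\mathrm{supp}(D_h)\cup\mathrm{supp}(E)\subset\overline k^2$, a set of at most $\deg(D_h)+r\leq\deg(\C)\deg(h)$ points; this produces a nonzero $\Delta_2\in k[\lambda]$ of degree at most $\binom{\deg(\C)\deg(h)}{2}$ whose non-roots are primitive on the whole set, hence in particular primitive on $E$ alone and separating $D_h$ from $E$. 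The sought polynomial is the product of these four factors, of total degree at most $2\binom{\deg(\C)\deg(h)+1}{2}+2r+\deg(\C)+1$.

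The main obstacle is to verify that the remaining test $\GCD(a_1,\chi)\neq 1$ is actually implied by the two genericity conditions above, since $a_1$ is the leading coefficient of the first subresultant and depends intricately on $\lambda$. The key observation is that, combining Lemma~\ref{lem:tech_res} with the standard specialization property of subresultants, $a_1$ can vanish at a root $s$ of $\chi$ only when either two distinct points of $\mathrm{supp}(D_h)$ project to $s$ (ruled out by primitivity inside $\Delta_1$) or $(1,-\lambda)$ is tangent to $\C$ at a point of $\mathrm{supp}(D_h)$ above $s$ (ruled out by the Jacobian invertibility inside $\Delta_1$). Once this reduction is established, the rest of the argument is pure degree bookkeeping.
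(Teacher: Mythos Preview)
Your proposal is correct and follows essentially the same approach as the paper: you decompose the failure set into the same four pieces (the leading-coefficient test via Proposition~\ref{prop:deg_compprinc_bis}, the tangent test via $T_E$, primitivity on the full intersection $\sqrt{\langle q,h\rangle}$ via Lemma~\ref{lem:primeltbidegbound}, and the smooth-part condition via Lemma~\ref{lem:existence_primrepr_aux} applied to $\langle h\rangle:I_E^\infty$), and you correctly identify the $a_1$-invertibility test as the delicate point, reducing it to primitivity plus non-tangency exactly as the paper does. The only cosmetic differences are that the paper names your $\Delta_1,\Delta_2$ as $\Delta_3,\Delta_2$ respectively and phrases your ``radical ideal of $\mathrm{supp}(D_h)\cup\mathrm{supp}(E)$'' directly as $\sqrt{\langle q,h\rangle}$, which is the same object.
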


\begin{proof}
First, let $\Delta_1$ be the univariate polynomial constructed in
Proposition~\ref{prop:deg_compprinc_bis}. The first test in Algorithm~\ref{algo:princdiv} does
not fail only if $\lambda$ is not a root of $\Delta_1$.

The second test in Algorithm~\ref{algo:princdiv} fails only if $\lambda$ is a root of $T_E$.

  By Bézout theorem, the effective divisor $(h)$ has degree at most
  $\deg(\C)\deg(h)$. 
  Therefore, Lemma~\ref{lem:primeltbidegbound} for the ideal $\sqrt{\langle
  q,h\rangle}$
  yields a nonzero polynomial $\Delta_2$ of degree at most
  $\binom{\deg(\C)\deg(h)}2$ such that the set of
  $\lambda$ such that $\lambda X+Y$ is not a primitive element for
  $\red(k[\C^0]/\langle h\rangle)$.
  Since $(h)\geq E$, the fact that $\Delta_2(\lambda)\ne 0$ implies
  that $\lambda X+Y$ is a primitive element for the $k$-algebra associated to the nodal divisor, and hence the call to the function \textsc{ChangePrimEltNodal} in
  Algorithm~\ref{algo:princdiv} does not fail. Since by assumption $(h)-E$ is
  smooth, this also implies that the roots of $\widehat\chi_E$ are roots of
  $\chi$ with multiplicity exactly $2$ by Lemma~\ref{lem:tech_res}. Consequently, if $\lambda$ is
  not a root of $\Delta_2$, then $\GCD(\chi, \widehat\chi_E)=1$ and
  therefore the third test in Algorithm~\ref{algo:princdiv} must succeed.

  Finally, Lemma~\ref{lem:existence_primrepr_aux} for the ideal
  $\sqrt{\langle
  q,h\rangle}:I_E^\infty\subset k[\C^0]$
  yields a nonzero polynomial $\Delta_3$ of degree at most
  $\binom{\deg(\C)\deg(h)+1}2$ such that the set of
  $\lambda$ such that $\lambda X+Y$ is not a primitive element for
  $\red(k[\C^0]/\langle h\rangle)$ or such that the last test in Algorithm~\ref{algo:princdiv} fails is
  contained within the set of roots of $\Delta_3$.

 We claim that the product $\Delta_1\cdot \Delta_2\cdot\Delta_3\cdot T_E$ satisfies the required
 properties. To prove this claim, it remains to show that if $\lambda$ is not a root 
 of $\Delta_1\cdot \Delta_2\cdot\Delta_3\cdot T_E$, then the fourth test
 succeeds, i.e. $a_1(S)$ is invertible
 modulo $\chi(S)$.

 To this end, we notice that $a_1(S)$ is invertible modulo $\chi(S)$ if and
 only if
 $a_1(s)$ is nonzero for any root $s\in \overline k$ of $\chi(S)$. By
 \cite[Cor.~5.1]{kahoui2003elementary}, this is equivalent to the fact that the GCD
 of the polynomials $q((s-Y)/\lambda, Y)$, $h( (s-Y)/\lambda, Y)$ has degree
 $1$ for any root $s$ of $\chi(S)$.  Next, we note that if $\lambda$ is not a
 root of $\Delta_3$, then any common root $y$ of $q((s-Y)/\lambda, Y)$ and
 $h( (s-Y)/\lambda, Y)$ has multiplicity $1$ in $q((s-Y)/\lambda, Y)$:
 Indeed, the vanishing of the derivative $\partial/\partial Y$ of
 $q((s-Y)/\lambda, Y)$ at $Y=y$ would precisely mean that the vector
 $(1,-\lambda)$ is tangent to the curve at the intersection point, which is
 impossible by definition of $\Delta_3$.  Consequently, the GCD of the
 polynomials $q((s-Y)/\lambda, Y)$, $h( (s-Y)/\lambda, Y)$ must be squarefree.
 Finally, let $y_1, y_2\in \overline k$ be two common roots of
 $q((s-Y)/\lambda, Y)$, $h( (s-Y)/\lambda, Y)$. This means that $(
 (s-y_1)/\lambda, y_1)$ and $( (s-y_2)/\lambda, y_2)$ are two common zeros of
 $q(X,Y)$ and $h(X,Y)$. Since $\lambda$ is not a root of $\Delta_2$, $\lambda
 X+Y$ is a primitive element for $\red(k[\C^0]/\langle h\rangle)=k[X,Y]/\langle
 q, h\rangle$, which implies that $\lambda X+Y$ takes distinct values at all
 points $(x,y)$ in the variety associated to the system $h(X,Y)=q(X,Y)=0$
 (the endomorphism of multiplication by $\lambda X+Y$ must have
 distinct eigenvalues, see e.g.  the proof of Lemma~\ref{lem:primeltbidegbound}
 for more details). In particular, this means that $y_1=y_2$, since
 $\lambda X+Y$ takes the same value $s$ at
 $((s-y_1)/\lambda, y_1)$ and $((s-y_2)/\lambda, y_2)$.  Consequently,
 the GCD of the polynomials $q((s-Y)/\lambda, Y)$, $h( (s-Y)/\lambda, Y)$ is a
 squarefree polynomial with at most one root, hence it has degree at most $1$.
 Since $\Resultant(q((s-Y)/\lambda, Y), h(
 (s-Y)/\lambda, Y))$ vanishes and the coefficient of $Y^{\deg(q)}$ in
 $q((S-Y)/\lambda, Y)$ is nonzero because
 $\lambda$ is not a root of $\Delta_1$, this GCD must have degree at least $1$. Therefore,
 this GCD has degree exactly $1$, and hence $a_1(S)$ is invertible modulo
 $\chi(S)$.
 \end{proof}

 \begin{table}
 \begin{adjustbox}{center}
  \begin{tabular}{|c|c|c|}
    \hline
    Algorithm & Failure probability & Statement\\
    \hline
    \hline
    \textsc{ChangePrimElt}&$\deg(D)^2/\lvert \mathcal E\rvert$&Prop.~\ref{prop:degchangeprim}\\
    \hline
    \textsc{AddDivisors}&$O(\max(\deg(D_1),\deg(D_2))^2/\lvert \mathcal E\rvert)$&Prop.~\ref{prop:algsumdegbounds}\\
    \hline
    \textsc{SubtractDivisors}&$O(\max(\deg(D_1),\deg(D_2))^2/\lvert \mathcal E\rvert)$&Prop.~\ref{prop:degbound_subtract}\\
    \hline
    \textsc{CompPrincDiv}&$O(\deg(\C)^2\deg(h)^2/\lvert
    \mathcal E\rvert)$&\begin{tabular}{c}Prop.~\ref{prop:badjoin}\\
      Prop.~\ref{prop:deg_compprinc}\\
      Schwartz-Zippel
      lemma~\cite[Coro.~1]{schwartz1979probabilistic}
  \end{tabular}\\
    \hline
\end{tabular}
\end{adjustbox}
\caption{Probabilities of failure.\label{table:probas}}
\end{table} 

Finally, we can derive our bound on the probability that the toplevel
algorithm fails by summing the probabilities that the subroutines fail.

\begin{theorem}\label{thm:glob_bound_proba}
  Let $\mathcal E\subset k$ be a finite set.
  Assume that each call to the function \textsc{Random}($k$) is done by picking an
  element uniformly at random in $\mathcal E$. Then the probability that
  Algorithm~\ref{algo:birdeye} fails is bounded above by
  $$O(\max(\deg(\C)^4, \deg(D_+)^2)/\lvert \mathcal E\rvert).$$
\end{theorem}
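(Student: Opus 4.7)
The plan is to enumerate every random draw made during an execution of Algorithm~\ref{algo:birdeye} and apply a union bound, invoking the Schwartz--Zippel lemma together with the propositions collected in Table~\ref{table:probas}. The top-level algorithm calls \textsc{Interpolate}, \textsc{CompPrincDiv}, \textsc{SubtractDivisors}, \textsc{AddDivisors}, and \textsc{NumeratorBasis}. Of these, \textsc{NumeratorBasis} is deterministic; \textsc{Interpolate} draws a vector $(\mu_1,\dots,\mu_\ell)\in k^\ell$; \textsc{CompPrincDiv} draws a single scalar $\lambda\in k$; and each of \textsc{AddDivisors} and \textsc{SubtractDivisors} draws a single $\widehat\lambda\in k$ (and internally invokes \textsc{ChangePrimElt}, whose characteristic-polynomial subroutine has its own $O(\deg(\chi)^2/\lvert\mathcal E\rvert)$ failure probability, as noted in the proof of Proposition~\ref{prop:cost_changeprim}).

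For each such draw, Propositions~\ref{prop:degchangeprim}, \ref{prop:algsumdegbounds}, \ref{prop:degbound_subtract}, \ref{prop:badjoin} and~\ref{prop:deg_compprinc} exhibit an explicit nonzero polynomial whose zero set contains every ``bad'' value. By Schwartz--Zippel, the probability of falling in the zero set is at most (degree)/$\lvert\mathcal E\rvert$. So the main task is to convert the degree bounds of these polynomials, which are expressed in terms of the intermediate quantities $\deg(\chi_i)$, $\deg(h)$, etc., into bounds in $\deg(\C)$ and $\deg(D_+)$. For this I would use Proposition~\ref{prop:cost_interp}, which gives $\deg(h) = O((\deg(D_+)+r)/\deg(\C) + \deg(\C)) = O(\max(\deg(D_+)/\deg(\C), \deg(\C)))$, since $r = O(\deg(\C)^2)$ from the genus formula. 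This yields $\deg(\C)^2\deg(h)^2 = O(\max(\deg(D_+)^2, \deg(\C)^4))$, which matches the claimed bound; together with Table~\ref{table:glob_compl}, which states $\deg(D_h), \deg(D_{\rm res}), \deg(D_{\rm num}) = O(\deg(\C)^2 + \deg(D_+))$, every term that enters the union bound is $O(\max(\deg(\C)^4,\deg(D_+)^2)/\lvert\mathcal E\rvert)$.

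Summing a constant number of such contributions yields the theorem. The only subtle point worth checking is that Proposition~\ref{prop:deg_compprinc}, applied to bound the failure probability of \textsc{CompPrincDiv}, assumes that $(h)-E$ is smooth; this hypothesis is guaranteed exactly when the random draw in \textsc{Interpolate} avoids the zero set produced by Proposition~\ref{prop:badjoin}, so the bad events for the two successive random draws are covered additively by the union bound and the second input assumption from Section~\ref{sec:overview} guarantees that the nonzero polynomial in Proposition~\ref{prop:badjoin} is genuinely nonzero. With these observations in place the remaining steps are mechanical: list the five (or six, counting the internal characteristic-polynomial call) failure events, plug in the degree estimates above, and add.
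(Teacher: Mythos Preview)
Your proposal is correct and follows essentially the same approach as the paper's own proof: enumerate the random draws in each subroutine, apply the Schwartz--Zippel lemma via Propositions~\ref{prop:algsumdegbounds}, \ref{prop:degbound_subtract}, \ref{prop:badjoin}, \ref{prop:deg_compprinc} (together with the characteristic-polynomial failure from Proposition~\ref{prop:cost_changeprim}), convert the resulting degree bounds into $\deg(\C)$ and $\deg(D_+)$ using Proposition~\ref{prop:cost_interp} and Table~\ref{table:glob_compl}, and take a union bound. Your explicit treatment of the conditional dependence between the draw in \textsc{Interpolate} and the hypothesis of Proposition~\ref{prop:deg_compprinc} is exactly what the paper does when it invokes the second input assumption to ensure $A\ne\ker(\varphi)$ in Proposition~\ref{prop:badjoin}.
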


\begin{proof}
 Propositions~\ref{prop:algsumdegbounds},
 \ref{prop:degbound_subtract}, together with the fact that the number of
 roots in $k$
 of a univariate polynomial is bounded by its degree, directly imply that the
 probabilities of failure of Algorithms~\textsc{AddDivisors} and \textsc{SubtractDivisors} are bounded by
 $O(\max(\deg(D_1),\deg(D_2))^2/\lvert \mathcal E\rvert)$, if the computation of the
 characteristic polynomial in Algorithm~\textsc{ChangePrimElt} succeeds.
 Following~\cite{pernet2007faster} (see also the remark in the proof of
 Proposition~\ref{prop:cost_changeprim}), the probability that the computation of
 the characteristic polynomial in \textsc{ChangePrim} fails is bounded by
 $\deg(\chi)^2/\lvert \mathcal E\rvert$. Therefore, the probabilities that
 Algorithms~\textsc{AddDivisors} and \textsc{SubtractDivisors} fail are still
 bounded by $O(\max(\deg(D_1,D_2))^2/\lvert \mathcal E\rvert)$ when we take into account
 the probability that the computations of the characteristic polynomials fail.
 Notice that our second technical assumption (described in
 Section~\ref{sec:overview}) on the input divisor ensures that $A\ne\ker(\varphi)$ in
 Proposition~\ref{prop:badjoin}.
Using Proposition~\ref{prop:badjoin}, Schwartz-Zippel lemma~\cite[Coro.~1]{schwartz1979probabilistic},
 Proposition~\ref{prop:deg_compprinc}, together with the fact that $r\leq
 \binom{\deg(\C)-1}2$, we bound the probability that
 \textsc{CompPrincDiv} fails by
 $O(\deg(\C)^2\deg(h)^2/\lvert \mathcal E\rvert)$. 
 
 The failure probabilities are summed up
 in Table~\ref{table:probas}. Next, notice that the probability of failure of
 Algorithm~\ref{algo:birdeye} is bounded by the sum of the probabilities of the
 subroutines. Finally, the proof is concluded by using the inequality
 $\deg(h)<(\deg(D_+)+r)/\deg(\C)+\deg(\C)$ (Proposition~\ref{prop:cost_interp}) and the degree bounds in Table~\ref{table:glob_compl} for the divisors
 arising in Algorithm~\ref{algo:birdeye}.
\end{proof}

\paragraph{Deciding whether the assumptions on the input divisor are
satisfied.} The result in
Theorem~\ref{thm:glob_bound_proba} only holds true if the assumptions on the
input divisor described in Section~\ref{sec:overview} are satisfied. The first
assumption --- namely, the smoothness of the input divisor --- can be easily checked, so we focus here on deciding whether the
second assumption is satisfied or not. Namely, this assumption
requires the
existence of a form $h\in \overline k[\C]$ of degree $d$ --- where $d$ is the value computed during the
execution of Algorithm~\textsc{Interpolate} --- such that $(h)\geq D_++E$ and 
$(h)-E$ is smooth. In order to have a complete Las
Vegas algorithm, we need to be able to check whether this condition is
satisfied.
To this end, instead of returning
only one form during Algorithm~\textsc{Interpolate}, we can return a basis
$(h_1,\ldots,h_\ell)$ of the
forms $h$ such that $(h)\geq D_++E$. Then, we check if there exists a point
above a node which is simultaneously in the support of all the divisors
$\{(h_i)-E\}_{i\in \{1,\ldots,\ell\}}$. This boils
down to computing primitive element representations of the principal divisors
$(h_1),\ldots,(h_\ell)$, which is done by running on these $\ell$ forms a modified version of Algorithm~\textsc{CompPrincDiv} where the
last test is removed in order to allow singular points. 
Each execution of Algorithm~\textsc{CompPrincDiv} costs $\widetilde
O(\max(\deg(\C)^3, (\deg(D_+)+r)^2/\deg(\C)))$ operations in $k$, and thus ---
using the fact that $\ell=O(\deg(D_+)+\deg(\C)^2)$ --- the total cost of the
procedure is bounded above by $\widetilde O(\max(\deg(\C)^5,
\deg(D_+)^{5/2}))$. Therefore, in theory, running this decision procedure
increases the overall complexity stated in Theorem~\ref{thm:compl} since the best known value of $\omega$ is less than
$5/2$. However, in practice this does not change the asymptotic complexity since
practical algorithms for linear algebra rely on Gauss or Strassen
approaches; In this case, $\omega > 5/2$, and hence the cost of this
verification procedure is negligible compared to the global complexity of our
algorithm. Multiplying the probability of failure of
Algorithm~\textsc{CompPrincDiv} by the number of basis vectors yields the
bound $O(\max(\deg(D_+)^3,\deg(\C)^6)/\lvert\mathcal E\rvert)$
for the probability of failure of this verification procedure.

\section{Experimental results}\label{sec:expe}

We have implemented Algorithm~\ref{algo:birdeye} in C++ for $k=\mathbb
Z/p\mathbb Z$, relying on the NTL library
for all operations on univariate polynomials and for linear algebra. 
We have also implemented the group law on the Jacobian of a curve via Riemann-Roch space computations.
Our software \texttt{rrspace} is freely available at
\url{https://gitlab.inria.fr/pspaenle/rrspace} and it is distributed under the
LGPL-2.1+ license.

All the
experiments presented below have been conducted on a
Intel(R)~Core(TM)~i5-6500~CPU@3.20GHz with 16GB RAM. The comparisons with the
computer algebra system Magma have been done with its version V2.23-8.

\pgfplotstableread{rrspace1.data}{\rrspaceA}
\pgfplotstableread{rrspace2.data}{\rrspaceB}
\pgfplotstableread{rrspace3.data}{\rrspaceC}
\pgfplotstableread{rrspace4.data}{\rrspaceD}
\pgfplotstableread{magma1.data}{\magmaA}
\pgfplotstableread{magma2.data}{\magmaB}
\pgfplotstableread{magma3.data}{\magmaC}
\pgfplotstableread{magma4.data}{\magmaD}

\begin{figure}
 \begin{adjustbox}{center}
\begin{tikzpicture}[scale=0.9]
  \pgfplotsset{every axis legend/.append style={
at={(0.05,0.95)},
anchor=north west}}
\begin{loglogaxis}[minor tick num=1,
xlabel=Degree of the divisor,
ylabel=Time in seconds,
xtick pos=left,
ytick pos=left,
width=8cm,
height=6cm,
log basis x=2,
log basis y=2]
\addplot [blue,very thick] table [x={d}, y={t}] {\magmaA};
\addlegendentry{\large\texttt{Magma}}
\addplot [red,very thick] table [x={d}, y={t}] {\rrspaceA};
\addlegendentry{\large\texttt{rrspace}}
\end{loglogaxis}
\end{tikzpicture}
\begin{tikzpicture}[scale=0.9]
  \pgfplotsset{every axis legend/.append style={
at={(0.05,0.95)},
anchor=north west}}
\begin{loglogaxis}[minor tick num=1,
xlabel=Degree of the divisor,
width=8cm,
height=6cm,
xtick pos=left,
ytick pos=left,
log basis x=2,
log basis y=2]
\addplot [blue,very thick] table [x={d}, y={t}] {\magmaB};
\addlegendentry{\large\texttt{Magma}}
\addplot [red,very thick] table [x={d}, y={t}] {\rrspaceB};
\addlegendentry{\large\texttt{rrspace}}
\end{loglogaxis}
\end{tikzpicture}
\end{adjustbox}
\caption{Comparison of the time required by \texttt{rrspace} and \texttt{Magma}
to compute a basis of $L(D)$ on a fixed smooth curve of degree $10$ over
$\mathbb Z/65521\mathbb Z$. On the left, $D$ is the sum of random irreducible
effective divisors of degree $10$. On the right, $D$ is a multiple of an
irreductible divisor of degree $10$. Both axes are in logarithmic
scale.\label{fig:expe_rr1}}
\end{figure}
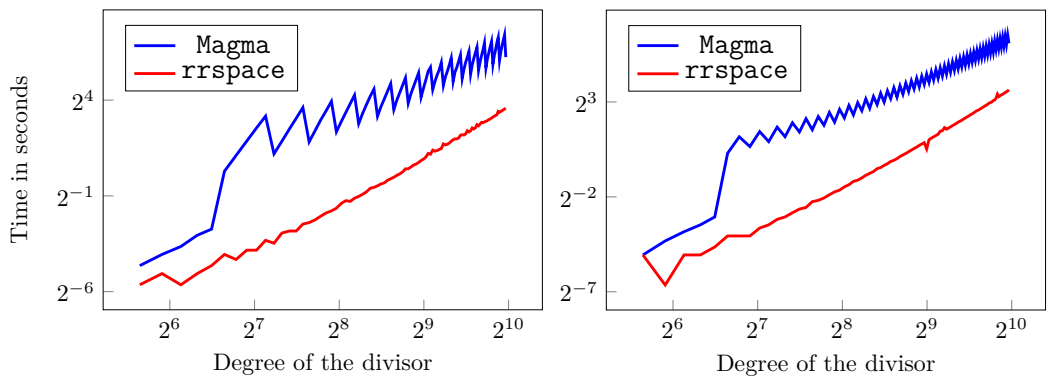

\begin{figure}
 \begin{adjustbox}{center}
\begin{tikzpicture}[scale=0.9]
  \pgfplotsset{every axis legend/.append style={
at={(0.05,0.95)},
anchor=north west}}
\begin{loglogaxis}[minor tick num=1,
xlabel=Degree of the divisor,
ylabel=Time in seconds,
xtick pos=left,
ytick pos=left,
width=8cm,
height=6cm,
log basis x=2,
log basis y=2]
\addplot [blue,very thick] table [x={d}, y={t}] {\magmaC};
\addlegendentry{\large\texttt{Magma}}
\addplot [red,very thick] table [x={d}, y={t}] {\rrspaceC};
\addlegendentry{\large\texttt{rrspace}}
\end{loglogaxis}
\end{tikzpicture}
\begin{tikzpicture}[scale=0.9]
  \pgfplotsset{every axis legend/.append style={
at={(0.05,0.95)},
anchor=north west}}
\begin{loglogaxis}[minor tick num=1,
xlabel=Degree of the divisor,
xtick pos=left,
ytick pos=left,
width=8cm,
height=6cm,
log basis x=2,
log basis y=2]
\addplot [blue,very thick] table [x={d}, y={t}] {\magmaD};
\addlegendentry{\large\texttt{Magma}}
\addplot [red,very thick] table [x={d}, y={t}] {\rrspaceD};
\addlegendentry{\large\texttt{rrspace}}
\end{loglogaxis}
\end{tikzpicture}
\end{adjustbox}
\caption{Comparison of the time required by \texttt{rrspace} and \texttt{Magma}
to compute a basis of $L(D)$ on a fixed curve of degree $10$, where $D$ is the sum of random irreducible
effective divisors of degree $10$. On the left, the base field is $\mathbb
Z/65521\mathbb Z$ and the curve is nodal. On the right, the base field is
$\mathbb Z/(2^{32}-5)\mathbb Z$ and the curve is smooth. Both axes are in
logarithmic scale.\label{fig:expe_rr2}}
\end{figure}
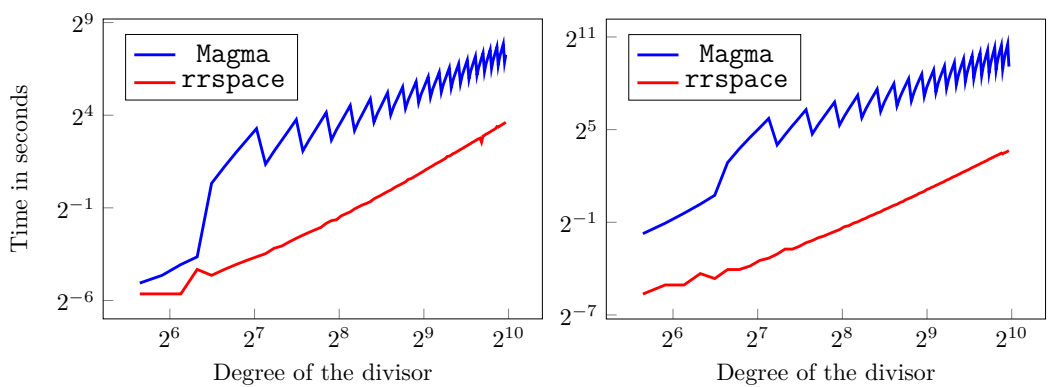

Our first experimental data is generated as follows. We set $k=\mathbb Z/65521\mathbb Z$. For $i$
from 10 to 100, we
consider a curve $\C$ defined by a random bivariate polynomial of degree $10$ over
$k$, and we generate $i$ random irreducible $k$-defined effective divisors
$D_1,\ldots, D_i$ of
degree $10$ on $\C$ by using the
\texttt{RandomPlace()} function in \texttt{Magma}. Then we set $D=D_1+\dots+D_i$ and we measure the time used for
computing a basis of $L(D)$ by using
either \texttt{Magma} via its function
  \texttt{RiemannRochSpace()} or
the software \texttt{rrspace}. The experimental results are displayed in the
left part of
Figure~\ref{fig:expe_rr1}. For these parameters, we observe that
\texttt{rrspace} has a speed-up larger than $7$ compared to
\texttt{Magma}. Since we do not have access to the implementation of the
function \texttt{RiemannRochSpace()} in \texttt{Magma}, we cannot explain the
small variations which appear in the \texttt{Magma} timings.

Our second experimental data investigate the behavior of our algorithm when the input
divisor contains multiplicities. To this end, we generate the input divisor as
a multiple of a random place of degree $10$ on the curve.
The experimental results are displayed in the
right part of
Figure~\ref{fig:expe_rr1}. For these parameters, we observe that
\texttt{rrspace} has a speed-up larger than $6$ compared to
\texttt{Magma}.

Our third experimental data study the behavior of our algorithm in the
presence of nodes. To this end, we fix the following nodal curve defined by the
equation 
$$Q(X,Y,Z) = -Y^2 Z^8 + X^2 Z^8 + Y^4 Z^6 -X^3 Z^7+X^{10}-5\,Y^{10}+3\,X^3 Y^7$$
which has a node at the origin and we generate input divisors as for the
first experimental data. The experimental results are displayed in the
left part of
Figure~\ref{fig:expe_rr2}. For these parameters, we observe that
\texttt{rrspace} has a speed-up larger than $10$ compared to
\texttt{Magma}.

Finally, since the timings are very sensible to the efficiency of the linear
algebra routines, we study what happens for larger finite fields. The fourth
experimental data are generated as for our first experimental
data, but we replace the field $\mathbb Z/65521\mathbb Z$ by the field $\mathbb
Z/(2^{32}-5)\mathbb Z$. Here, the size of the field is out of the range of the
highly optimized arithmetic in \texttt{Magma} for small finite fields, and
consequently we observe speedups larger than $45$ (the speedup goes up to more than $200$ for some examples).
The experimental results are displayed in the
right part of
Figure~\ref{fig:expe_rr2}. 

\bibliographystyle{abbrv}
\bibliography{./biblio}

\bigskip

\footnotesize
\noindent {\bf Authors' addresses:}

\noindent Aude Le Gluher, CARAMBA project, Universit\'e de Lorraine;
Inria Nancy -- Grand Est; CNRS, UMR 7503;
LORIA, Nancy, France, {\tt aude.le-gluher@loria.fr}

\smallskip

\noindent Pierre-Jean Spaenlehauer, CARAMBA project, INRIA Nancy -- Grand Est; Universit\'e de Lorraine; CNRS, UMR 7503; 
LORIA, Nancy, France, {\tt pierre-jean.spaenlehauer@inria.fr}

\end{document}